\documentclass{article}

\usepackage[english]{babel}

\usepackage[letterpaper,top=2cm,bottom=2cm,left=3cm,right=3cm,marginparwidth=1.75cm]{geometry}

\usepackage{amsmath}
\usepackage{enumerate}
\usepackage{graphicx}
\usepackage[colorlinks=true, allcolors=blue]{hyperref}
\usepackage{amsthm,amssymb}
\usepackage{mathtools}
\usepackage{comment}
\usepackage{mathrsfs}

\theoremstyle{plain}
\newtheorem{theorem}{Theorem}[section]
\newtheorem{corollary}{Corollary}[theorem]
\newtheorem{proposition}[theorem]{Proposition}
\newtheorem{lemma}[theorem]{Lemma}

\theoremstyle{definition}
\newtheorem{definition}{Definition}[section]

\newtheorem{problem}[definition]{Problem}

\newtheorem{assumption}{Assumption}[section]
\newtheorem*{assumption*}{Assumption}

\theoremstyle{remark}
\newtheorem{remark}{Remark}[section]

\everymath{\displaystyle}

\DeclareMathOperator*{\diag}{diag}

\title{Convergence Rates of Turnpike Theorems for Portfolio Choice in Stochastic Factor Models}
\author{Hiroki Yamamichi\footnote{
Graduate School of Engineering Science, The University of Osaka, 1-3, Machikaneyama, Toyonaka, Osaka, 560-8531, Japan, Email: \texttt{yamamichi@sigmath.es.osaka-u.ac.jp}
}}
\date{}

\begin{document}
\maketitle

\begin{abstract}
Turnpike theorems state that if an investor's utility is asymptotically equivalent to a power utility, then the optimal investment strategy converges to the CRRA strategy as the investment horizon tends to infinity. This paper aims to derive the convergence rates of the turnpike theorem for optimal feedback functions in stochastic factor models. In these models, optimal feedback functions can be decomposed into two terms: myopic portfolios and excess hedging demands. We obtain convergence rates for myopic portfolios in nonlinear stochastic factor models and for excess hedging demands in quadratic term structure models, where the interest rate is a quadratic function of a multivariate Ornstein--Uhlenbeck process. We show that the convergence rates are determined by (i) the decay speed of the price of a zero-coupon bond and (ii) how quickly the investor's utility becomes power-like at high levels of wealth. As an application, we consider optimal collective investment problems and show that sharing rules for terminal wealth affect convergence rates.\\
\textbf{Keywords}: portfolio choice, turnpike property, convergence rate, stochastic opportunity sets, collective utility function.
\end{abstract}

\section{Introduction}
Since the seminal works of Merton \cite{M69, M71}, optimal investment problems for continuous-time models have been developed in various directions. In particular, stochastic factor models have been used to capture \textit{stochastic investment opportunity sets}, such as the predictability of stock returns, stochastic volatility, and stochastic interest rates. For an overview of optimal investment problems with stochastic factor models, we refer the reader to the review paper \cite{Z}.
In these models, optimal investment strategies can be decomposed into two terms, namely, \textit{myopic portfolios} and \textit{excess hedging demands}. This means that at first investors choose the myopic portfolios as if the investment opportunity sets are constant, then they adjust their portfolios by adding the excess hedging demands to adapt to future changes in the investment environment. Although computing these terms typically requires analyzing fully nonlinear Hamilton--Jacobi--Bellman (HJB) equations, the computations become tractable for homothetic utilities, such as exponential, power, and log utilities. As a result, explicit formulas for optimal investment strategies are typically available for these special utilities. In contrast, it is challenging to analyze optimal investment strategies for generic utilities, and, to the best of our knowledge, only a few works have calculated the optimal strategies, including Detemple and Rindisbacher \cite{DGR}, Fukaya \cite{F}, Lakner \cite{L}, Ocone and Karatzas \cite{OK}, and Putschögl and Sass \cite{PS}.

Turnpike theorems fill this gap between power utilities and general utilities. Informally, turnpike theorems state that if a utility function is similar to a power utility at large wealth levels, then both the optimal wealth process and the optimal investment strategy converge to those for a power utility as the investment horizon tends to infinity. This paper aims to derive the convergence rates of turnpike theorems for both myopic portfolios and excess hedging demands in stochastic factor models. 

Turnpike theorems originate in the classic work of von~Neumann \cite{Ne} in economic growth theory. In the context of optimal portfolios, Mossin \cite{M} first proves portfolio turnpikes in discrete-time settings under the assumption that a utility function $U$ has affine risk tolerance, that is, $- \frac{U^\prime(x)}{U^{\prime\prime}(x)} = a x + b$.  Mossin's results are extended by Leland \cite{L}, Ross \cite{R}, and Hakansson \cite{H} to include general utility functions. Huberman and Ross \cite{HR} derive a necessary and sufficient condition for the turnpike property. Cox and Huang \cite{CH} prove the first turnpike theorem in continuous-time settings using martingale methods under the assumption that there exist constants $A_1, A_2, b, z^\ast > 0$ such that
\begin{equation}\label{cox.asp}
        \left|(U^\prime)^{-1}(z) - A_1 z^{- \frac{1}{b}}  \right| \leq A_2 z^{-a}, \quad z \in (0, z^\ast] 
\end{equation}
holds for some $ a \in \left[0, 1 / b \right)$. Jin \cite{J} extends their results to include consumption. 
Huang and Zariphopoulou \cite{HZ} show that the condition 
\[
    \lim_{x \nearrow \infty} \frac{U^\prime(x)}{x^{\gamma - 1}} = K
\]
for some $K > 0$, which is weaker than the condition (\ref{cox.asp}), is sufficient for the turnpike property, using viscosity solutions for the associated HJB equations. Dybvig et~al.\ \cite{DRB} prove portfolio turnpikes for complete markets in the Brownian filtration without assuming stationary investment opportunity sets. Guasoni et~al.\ \cite{GKRX} consider general incomplete market models that include one-dimensional stochastic factor models, and they prove three types of turnpikes (abstract, classic, and explicit).  

Although it is important to know how fast turnpike theorems hold in practice, the above works do not derive convergence rates. However, Bian and Zheng \cite{BZ} first estimate the convergence rate of the turnpike property under the Black--Scholes model. Beyond the classical expected utility framework, Geng and Zariphopoulou \cite{GZ} recently studied turnpike-type limiting properties and the convergence rate for the forward relative risk tolerance function under time-monotone forward performance criteria in Ito diffusion markets. Since Geng and Zariphopoulou \cite{GZ} focus on the subclass of time-monotone forward utilities, there is still no result on convergence rates for excess hedging demands. In the present paper, we aim to derive convergence rates for both myopic portfolios and excess hedging demands in stochastic factor models.

Our contributions are fourfold. First, we derive convergence rates of turnpike theorems for myopic portfolios with general stochastic factor models in complete markets (Theorem~\ref{myopic.turnpike}), which extends the results of Bian and Zheng \cite{BZ} from the Black--Scholes model to more general settings. In particular, we find that the convergence rate is determined by the price of a zero-coupon bond and the rate at which the investor's utility becomes power-like at high levels of wealth. As made precise in Remark~\ref{exp.decay.ZC}, the rate is typically exponential. Moreover, we prove uniform convergence in the wealth variable for portfolio proportions (Theorem~\ref{uni.conv.myopic}), which has not yet been documented. Furthermore, we also derive the convergence rate for the optimal portfolio processes (Theorem~\ref{wealth.turnpike}). 

Second, we also derive convergence rates of turnpike theorems for excess hedging demands with quadratic term structure models (Theorem~\ref{excess.turnpike}), where the instantaneous rate is a quadratic function of the stochastic factor process. 
To the best of our knowledge, no previous studies have derived the convergence rates for excess hedging demands. We find that the convergence rates are the same as those of myopic portfolios and uniform convergence in wealth holds (Theorem~\ref{uni.excess.turnpike}).

Third, by applying our main results, we analyze the turnpike properties of the optimal strategies for optimal collective investment problems. In these problems, there are $n$ investors who delegate their portfolio management to a fund manager. The fund manager invests on their behalf to optimize the expected social utility, constructed from the individual utilities and a sharing rule, according to which the fund manager allocates the terminal wealth to individuals. We consider two sharing rules: a Pareto-optimal sharing rule and a linear sharing rule. We find that these sharing rules affect convergence rates (Theorems \ref{Pareto.tunpike} and \ref{Linshre.turnpike}). In particular, we show that the convergence rate for a linear sharing rule is faster than for a Pareto-optimal sharing rule when the least risk-averse investor among $n$ investors is no less risk-averse than the log investor.

Finally, we methodologically provide a probabilistic approach based on martingale duality methods and Malliavin calculus, in contrast to the PDE techniques used in prior work by Bian and Zheng \cite{BZ}. When applying Malliavin calculus to optimal investment problems, previous research papers such as \cite{L, PS} often assume conditions that depend on the investment horizon. In the present paper, we assume conditions (Assumptions \ref{coeff.asp}--\ref{smooth.asp}) that are independent of the investment horizon, allowing us to apply Malliavin calculus techniques when the investment horizon tends to infinity.

The rest of this paper is organized as follows. The main results are discussed in Sect.~\ref{main.result}, which consists of four subsections. In Sect.~\ref{flow.rep.sec}, we provide stochastic flow representations of optimal feedback strategies for general utilities. Sect.~\ref{Myopic.Turnpike} estimates the convergence rate of the turnpike theorem for myopic portfolios with general factor models. Sect.~\ref{Excess.Turnpike} estimates the rate for excess hedging demands with quadratic term structure models. In Sect.~\ref{Application.sec}, we offer applications of our main results to optimal collective investment problems. Sect.~\ref{proof.section} contains proofs of our main results. The appendix contains short reviews of Malliavin calculus, option pricing theory in stochastic factor models, the relationship between stochastic control methods and martingale duality methods, and the matrix Riccati equation. 

\section{Main results}\label{main.result}

\subsection{Stochastic flow representation of optimal feedback functions}\label{flow.rep.sec}
Let $(\Omega, \mathcal{F}, \mathbb{P}, (\mathcal{F}_t)_{t \in[0, T]})$ be a filtered probability space endowed with $(\mathcal{F}_t)_{t \in [0, T]}$, the augmentation of the natural filtration generated by the $n$-dimensional Brownian motion $W \coloneqq (W^1, \dots, W^n)^\top$. We consider a financial market with one riskless bond and $n$ risky assets. The price processes of the riskless bond $S^0$ and $n$ risky assets $S = (S^1, \dots, S^n)^\top$ are modeled as
\begin{equation}\label{gen.stoch.fac.model}
    \begin{aligned}
        dS^0_t &= S^0_t r(Y_t) dt,& S^0_0 &= 1, \\
        dS_t &= \diag(S_t) \{ \mu (Y_t) dt + \sigma (Y_t) dW_t\}, & S_0 &= s_0 \in \mathbb{R}^n_{++}, \\
        dY_t &= b(Y_t) dt + a (Y_t) dW_t,& Y_0 &= y \in \mathbb{R}^m,
    \end{aligned}
\end{equation}
where $r:\mathbb{R}^m \to \mathbb{R}, \mu:\mathbb{R}^m \to \mathbb{R}^n, \sigma: \mathbb{R}^m \to \mathbb{R}^{n \times n},  b:\mathbb{R}^m \to \mathbb{R}^m,  a:\mathbb{R}^m \to \mathbb{R}^{m \times n}$ satisfy Assumption~\ref{coeff.asp} below, $\mathrm{diag}(x)$ denotes the diagonal $n \times n$ matrix whose $(i, i)$-th element is component $x_i$ of $x = (x_1, \dots, x_n)^\top \in \mathbb{R}^n$, and the $m$-dimensional process $Y = (Y^1, \dots, Y^m)^\top$ is referred to as the stochastic factor processes, which affect the coefficients of the asset price processes. We denote the market price of risk as $\theta(y) \coloneqq \sigma(y)^{-1} \{ \mu(y) - r(y) \mathbf{1} \}$, where $\mathbf{1} = (1, \dots, 1)^\top \in \mathbb{R}^n$. Furthermore, we let $\beta = (\beta_t)_{t \in [0, T]}$ be a discounted price process, $Z = (Z_t)_{t \in [0, T]}$ be a likelihood ratio process, and $H = (H_t)_{t \in [0, T]}$ be a state price density process as follows:
\begin{align*}
    \beta_t &\coloneqq \frac{1}{S^0_t} = \exp \left( - \int_0^t r(Y_s) ds \right), \\
    Z_t &\coloneqq \exp \left\{ - \int_0^t \theta^\top (Y_s) dW_s - \frac{1}{2} \int_0^t |\theta (Y_s)|^2 ds \right\},\\
    H_t &\coloneqq \beta_t Z_t.
\end{align*}
We assume some conditions on these coefficient functions.
\begin{assumption}\label{coeff.asp}
\begin{enumerate}[(i)]
    \item All functions $r, \mu, \sigma, b, a, \theta$ on $\mathbb{R}^m$ are continuous functions.
    \item $\sigma(y)$ is invertible for all $y \in \mathbb{R}^m$.
    \item $\theta$ has linear growth.
    \item $b, a$ are Lipschitz continuous and $a$ is bounded.
\end{enumerate}
\end{assumption}Under Assumption~\ref{coeff.asp}, a local martingale $Z$ is a martingale. For the proof, see \cite[Section~6.2]{LS}. Therefore, we can define a unique equivalent martingale measure $\mathbb{Q}$ and $n$-dimensional $\mathbb{Q}$-Brownian motion $W^\mathbb{Q}$ as follows: 
\begin{align*}
    d\mathbb{Q} |_{\mathcal{F}_t} &\coloneqq Z_t d\mathbb{P}|_{\mathcal{F}_t} , \\
    W^\mathbb{\mathbb{Q}}_t &\coloneqq W_t + \int_0^t \theta(Y_s) ds.
\end{align*}

$\pi = (\pi_t)_{t \in [0, T]}$ is a portfolio process if $\pi$ is an $n$-dimensional $(\mathcal{F}_t)_t$-progressively measurable process and satisfies 
\[
    \int_0^T |\pi_t|^2 dt < \infty \quad P-a.s.
\]
For each initial wealth $x \geq 0$ and portfolio process $\pi$, we define the corresponding wealth process $X^{x, \pi} = (X^{x, \pi}_t)_{t \in [0, T]} $ as the solution to the following SDE:
\[
    dX^{x, \pi}_t = \left[r(Y_t) X^{x, \pi}_t + \pi^\top_t(\mu(Y_t) - r(Y_t)\mathbf{1}) \right] dt + \pi^\top_t \sigma(Y_t) dW_t, \quad X_0 = x.
\]
Given $x \geq 0$, we say that a portfolio process $\pi$ is admissible at $x$ if the corresponding wealth process $X^{x, \pi}$ satisfies
\[
    X^{x, \pi}_t \geq 0, \quad t \in [0, T] 
\]
almost surely.

An investor's risk preference is represented by a utility function $U$.
\begin{definition}\label{Util.def}
    We call $U:(0, \infty) \to \mathbb{R}$ a utility function if $U$ is strictly increasing, strictly concave, and twice continuously differentiable on $(0, \infty)$ and satisfies the Inada conditions
    \[
        \lim_{x \searrow 0} U^\prime (x) = \infty, \quad \lim_{x \nearrow \infty} U^\prime(x) = 0.
    \]
\end{definition}
Let $I \coloneqq (U^\prime)^{-1} :(0, \infty) \to (0, \infty)$ be the inverse marginal utility $U^\prime$. By the definition of $U$, $I$ is strictly decreasing and continuously differentiable on $(0, \infty)$ and satisfies
\[
    \lim_{z \searrow 0} I(z) = \infty, \quad \lim_{z \nearrow \infty} I(z) = 0.
\]

The investor in this paper desires to maximize their expected utility and find an optimal portfolio process $\hat{\pi}$.
This problem is formulated as follows.
\begin{problem}
    Find an optimal $\pi \in \mathcal{A}(x)$ for the problem 
    \[
        \sup_{\pi \in \mathcal{A}(x)} \mathbb{E} \left[U(X^{x, \pi}_T) \right]
    \]
    of maximizing expected utility from terminal wealth, where
    \[
        \mathcal{A}(x) \coloneqq \left\{ \pi; \text{$\pi$ is admissible at $x$},  \mathbb{E} \left[U(X^{x, \pi}_T)^{-} \right] < \infty \right\}.
    \]
\end{problem}

To use the martingale method, we assume the following growth conditions.
\begin{assumption}\label{martingale.method.asp}
    \begin{itemize}
        \item [(i)] There exist $r_0 \in \mathbb{R},\; r_1 \in \mathbb{R}^m$ such that for every $y \in \mathbb{R}^m$,
        \[
            r(y) \geq r_0 + r_1^\top y.
        \]
        \item [(ii)] There exist $\kappa > 0, \; \rho \in (0, 1]$ such that for every $z > 0$, 
        \[
            I(z) \leq \kappa (1 + z^{-\rho}).
        \]
    \end{itemize}
\end{assumption}

Under Assumptions \ref{coeff.asp} and \ref{martingale.method.asp}, we can use the martingale method as in \cite[Theorem~3.7.6]{KS}. For the proof, see Sect.~\ref{flow.rep.proof}.
\begin{theorem}\label{Optimal Terminal Wealth}
    Under Assumptions \ref{coeff.asp} and \ref{martingale.method.asp}, for each $x > 0$, there exists $\hat{\lambda} > 0$ such that $x = \mathbb{E}[H_T I(\hat{\lambda}H_T)]$.
    The optimal terminal wealth $\xi$ and the optimal wealth process $\hat{X} = (\hat{X}_t)_{t \in [0, T]}$ are 
    \[
        \xi = I(\hat{\lambda} H_T), \quad \hat{X}_t = \frac{1}{H_t} \mathbb{E}_t \left[H_T I(\hat{\lambda} H_T) \right].
    \]
    Moreover, the optimal portfolio process $\hat{\pi} = (\hat{\pi}_t)_{t \in [0, T]}$ is given by 
    \[
        \hat{\pi}_t = (\sigma^\top (Y_t))^{-1} \left( \frac{\psi_t}{H_t} + X_t \theta(Y_t) \right),
    \]
    where $\psi$ is the integrand in the stochastic integral representation $M_t = x + \int_0^t (\psi_u)^\top dW_u$ of the martingale $\left( \mathbb{E}_t [H_T I(\hat{\lambda}H_T)]  \right)_{t \in [0, T]}$. 
\end{theorem}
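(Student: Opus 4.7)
The plan is to adapt the classical martingale duality argument of Karatzas--Shreve \cite[Theorem~3.7.6]{KS} to the stochastic factor setting (\ref{gen.stoch.fac.model}), with Assumption~\ref{martingale.method.asp} supplying precisely the integrability needed to run that argument despite the interest rate and the market price of risk being unbounded.

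First, I would establish existence of $\hat{\lambda}$. Define $\mathcal{X}(\lambda) \coloneqq \mathbb{E}[H_T I(\lambda H_T)]$ for $\lambda > 0$. Using the bound $I(z) \leq \kappa(1+z^{-\rho})$ from Assumption~\ref{martingale.method.asp}(ii), finiteness of $\mathcal{X}(\lambda)$ reduces to $\mathbb{E}[H_T] < \infty$ and $\mathbb{E}[H_T^{1-\rho}] < \infty$. The first is the zero-coupon bond price; it is finite by combining $\beta_T \leq \exp(-r_0 T - r_1^\top \int_0^T Y_s\, ds)$ from Assumption~\ref{martingale.method.asp}(i) with the $\mathbb{Q}$-dynamics of $Y$ (Lipschitz drift, bounded diffusion by Assumption~\ref{coeff.asp}(iv)), which yield all exponential moments for $\int_0^T Y_s\, ds$ under $\mathbb{Q}$. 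The second then follows via Cauchy--Schwarz and a standard exponential-moment estimate for $Z_T^{1-\rho}$ coming from the linear-growth hypothesis on $\theta$. Continuity and strict monotonicity of $\mathcal{X}$ together with the Inada-driven limits $\mathcal{X}(0^+) = \infty$ and $\mathcal{X}(\infty) = 0$ then deliver a unique $\hat{\lambda} > 0$ with $\mathcal{X}(\hat{\lambda}) = x$.

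Next, optimality of $\xi \coloneqq I(\hat{\lambda} H_T)$ follows from a concavity comparison: for any $\pi \in \mathcal{A}(x)$,
\[
U(\xi) - U(X^{x,\pi}_T) \geq U^\prime(\xi)\bigl(\xi - X^{x,\pi}_T\bigr) = \hat{\lambda}\, H_T\bigl(\xi - X^{x,\pi}_T\bigr),
\]
and the supermartingale property of the nonnegative local martingale $H_t X^{x,\pi}_t$ yields $\mathbb{E}[H_T X^{x,\pi}_T] \leq x = \mathbb{E}[H_T \xi]$, so taking expectations proves $\mathbb{E}[U(\xi)] \geq \mathbb{E}[U(X^{x,\pi}_T)]$.

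Finally, to extract $\hat{\pi}$, I would set $M_t \coloneqq \mathbb{E}_t[H_T I(\hat{\lambda} H_T)]$ and write $M_t = x + \int_0^t \psi_u^\top dW_u$ by Brownian martingale representation. Taking $\hat{X}_t \coloneqq M_t/H_t \geq 0$ as the candidate optimal wealth, applying It\^o's formula using $dH_t = -H_t r(Y_t)\, dt - H_t \theta(Y_t)^\top dW_t$ and matching the drift and diffusion coefficients of the resulting SDE against the wealth SDE forces $\hat{\pi}_t = (\sigma(Y_t)^\top)^{-1}\bigl(\psi_t/H_t + \hat{X}_t\, \theta(Y_t)\bigr)$; nonnegativity of $\hat{X}$ certifies admissibility. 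The principal technical obstacle I anticipate is the $H_T^{1-\rho}$ moment bound in the first step, since it requires jointly controlling an exponential functional of the unbounded factor $Y$ and a fractional power of the stochastic exponential $Z_T$ using only the modest growth allowed by Assumption~\ref{coeff.asp}; once this is secured, the remainder of the argument follows the standard martingale method essentially verbatim.
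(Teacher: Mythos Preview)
Your overall strategy matches the paper's: reduce to \cite[Theorem~3.7.6]{KS} by verifying $\mathbb{E}[H_T]<\infty$ and $\mathbb{E}[H_T I(zH_T)]<\infty$, after which the existence of $\hat\lambda$, optimality of $\xi$, and identification of $\hat\pi$ follow from the cited result without needing to be reproved. Your treatment of $\mathbb{E}[H_T]$ is essentially the content of the paper's Lemma~\ref{fin.mom.beta}, which establishes $\mathbb{E}^{\mathbb{Q}}[\exp(-\lambda\int_0^T r(Y_t)\,dt)]<\infty$ via Assumption~\ref{martingale.method.asp}(i), Gronwall, and a Dambis--Dubins--Schwarz time-change argument using the boundedness of $a$.

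The one substantive divergence is the bound on $\mathbb{E}[H_T^{1-\rho}]$. You split $H_T^{1-\rho}=\beta_T^{1-\rho}Z_T^{1-\rho}$ and propose Cauchy--Schwarz plus a moment estimate for $Z_T^{1-\rho}$, flagging this as the principal technical obstacle. The paper avoids this completely: since $\rho\in(0,1]$, the map $x\mapsto x^{1-\rho}$ is concave, so Jensen's inequality gives $\mathbb{E}[H_T^{1-\rho}]\le\mathbb{E}[H_T]^{1-\rho}$, which is already finite from the first step. Your route, by contrast, would (for $\rho<\tfrac12$) ultimately require exponential moments of $\int_0^T|\theta(Y_t)|^2\,dt$ under $\mathbb{P}$; with $\theta$ of linear growth this is a quadratic functional of $Y$ and is not obviously controlled by Assumption~\ref{coeff.asp} alone. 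So while your plan is sound, the ``obstacle'' you anticipate is an artifact of the chosen decomposition and disappears under the paper's one-line Jensen argument.
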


To derive explicit formulas for $\psi$, we assume regularity and growth conditions for the market coefficients, $r, \theta, b, a$, and the derivative of $I$, which enable us to apply the results of the Malliavin calculus.
\begin{assumption}\label{smooth.asp}
    \begin{enumerate}
        \item[(i)] $r, \theta$ are continuously differentiable and of polynomial growth, and their Jacobian matrices $Dr, D\theta$ are also of polynomial growth.
        \item[(ii)] $b, a$ are continuously differentiable, and the Jacobian matrices $Db, Da_{\cdot l}:\mathbb{R}^m \to \mathbb{R}^{m \times m}$ are bounded, where $Db = \left( \frac{\partial b_i}{\partial x_j} \right)_{\substack{1 \leq i \leq m \\ 1 \leq j \leq m}}, \; Da_{\cdot l} = \left( \frac{a_{i,l}}{\partial x_j} \right)_{\substack{ 1 \leq i \leq m \\ 1 \leq j \leq m}},\; (l = 1, \dots, n)$. 
        \item [(iii)] There exists $K > 0$ such that for any $y, z_1, \dots, z_n \in \mathbb{R}^m$,
        \[
            y^\top \left( b(y) - a(y) \theta(y) \right) + \sum_{j = 1}^n z_j^\top \left( D b(y) - \sum_{l = 1}^n D a_{\cdot l}(y) \theta^l(y) \right)z_j \leq K \left( 1 + |y|^2 + \sum_{j = 1}^n |z_j|^2 \right).
        \]
        \item [(iv)] $|zI^\prime(z)| \leq \kappa (1 + z^{-\rho}) $ for some $\kappa > 0, \; \rho \in (0, 1]$.
    \end{enumerate}
\end{assumption}

For Theorem~\ref{explicit.portfolio}, we prepare some notation. Let a pair $(Y,Z)$ of $\mathbb{R}^m$-valued stochastic process $Y$ and $\mathbb{R}^{m \times m}$-valued stochastic process $Z$ be the solution to the following system of SDEs:
\begin{align*}
    dY_s &= b(Y_s) ds + a(Y_s) dW_s,& Y_t &= y, \\
    dZ_s &= Db(Y_s) Z_s ds + \sum_{j = 1}^n D a_{\cdot j}(Y_s) Z_s dW^j_s,& Z_t &= I. 
\end{align*}
Then a pair $(Y, Z)$ is a Markov process and $(Y^{(t, y)}, Z^{(t, y)})$ denotes the solution to the above system of SDEs when $(Y, Z)$ starts from $(y, I) \in \mathbb{R}^{m} \times \mathbb{R}^{m \times m}$ at time 0. Note that $Z^{(t, y)}$ always starts from the identity matrix $I \in \mathbb{R}^{m \times m}$ and $Y^{(0, y)}_s = Y^{(t, Y_t^{(0, y)})}_s$ for $s \in [t, T]$. Because $Z^{(t, y)}$ can be thought of as the derivative of $Y^{(t, y)}$ with respect to the initial value $y$, we use the notation $\nabla_y Y^{(t, y)} \coloneqq Z^{(t, y)}$ instead of $Z^{(t, y)}$.
Furthermore, let $H^{(t, y)} = \left( H_s^{(t, y)} \right)_{s \in [t, T]} $ be a state price density process that starts at time t and is given by
\[
    H_s^{(t, y)} \coloneqq  \exp \left( - \int_t^s r(Y^{(t, y)}_u) du - \int_t^s \theta^\top (Y^{(t, y)}_u) dW_u - \frac{1}{2} \int_t^s |\theta (Y^{(t, y)}_u)|^2 du \right).
\]
Let $\nabla_y H^{(t, y)} = \left( \nabla_y H_s^{(t, y)} \right)_{s \in [t, T]}, \; L^{(t, y)} = \left( L_s^{(t, y)}\right)_{s \in [t, T]} $ be $\mathbb{R}^m$-valued stochastic processes given by 
\begin{align*}
    L_s^{(t, y)} &\coloneqq \int_t^s (Dr(Y^{(t, y)}_u) \nabla_y Y^{(t, y)}_u )^\top du + \int_t^s \left( D\theta (Y^{(t, y)}_u) \nabla_y Y^{(t, y)}_u \right)^\top  dW_u \\
    & \qquad + \int_t^s \left( D\theta (Y^{(t, y)}_u) \nabla_y Y_u^{(t, y)} \right)^\top \theta(Y_u^{(t, y)}) du, \\
     \nabla_y H_s^{(t, y)} &\coloneqq -H_s^{(t, y)} L_s^{(t, y)}.
\end{align*}
We can also consider $\nabla_y H^{(t, y)}$ as the derivative of $H^{(t, y)}$ with respect to $y$. When $Y$ starts from $y$ at time 0, we drop the superscripts $(0, y)$.

\begin{theorem}\label{explicit.portfolio}
Under Assumptions \ref{coeff.asp}--\ref{smooth.asp}, the optimal portfolio process $\hat{\pi}$ in Theorem~\ref{Optimal Terminal Wealth} can be represented by 
\begin{align*}
    \hat{\pi}_t &= -(\sigma^\top(Y_t))^{-1} \theta (Y_t) \frac{1}{H_t} \mathbb{E}_t \left[H_T \cdot \hat{\lambda} H_T I^\prime(\hat{\lambda}H_T) \right] \\
        & \qquad + (\sigma^\top(Y_t))^{-1}  a^\top(Y_t) \mathbb{E}_t \left[ \nabla_y H^{(t, Y_t)}_T  \left( I(\hat{\lambda H_T}) + \hat{\lambda} H_T I^\prime(\hat{\lambda} H_T) \right) \right].
\end{align*}
\end{theorem}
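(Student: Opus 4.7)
The plan is to compute $\psi_t$ from Theorem \ref{Optimal Terminal Wealth} via the Clark--Ocone formula and then translate the resulting Malliavin derivatives into the flow quantities introduced before the theorem. Under Assumptions \ref{coeff.asp}--\ref{smooth.asp}, I would first verify that $H_T I(\hat\lambda H_T) \in \mathbb{D}^{1,2}$. The polynomial growth of $r,\theta,Dr,D\theta$ in Assumption \ref{smooth.asp}(i), the Lipschitz continuity of $b,a$ and boundedness of $a$ in Assumption \ref{coeff.asp}(iv), and the Lyapunov-type bound in Assumption \ref{smooth.asp}(iii) applied via It\^o's formula to $|Y|^2 + \sum_j |\nabla_y Y e_j|^2$ yield $L^p$-bounds on $Y_s^{(t,y)}, \nabla_y Y_s^{(t,y)}, H_s^{(t,y)}, \nabla_y H_s^{(t,y)}$ for all $p \ge 1$; combined with Assumption \ref{smooth.asp}(iv), which controls $zI^\prime(z)$ and absorbs the potential singularity of $I^\prime$ at $0$, this lets me apply the chain rule to the (merely $C^1$) function $I$. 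Clark--Ocone then gives
\[
\psi_t = \mathbb{E}_t\bigl[D_t\bigl(H_T I(\hat\lambda H_T)\bigr)\bigr] = \mathbb{E}_t\Bigl[\bigl(I(\hat\lambda H_T) + \hat\lambda H_T I^\prime(\hat\lambda H_T)\bigr)\, D_t H_T\Bigr].
\]

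Next, I would compute $D_t H_T$ directly from the explicit formula for $\log H_T$. Using the standard first-variation identity $D_t Y_s = \nabla_y Y^{(t,Y_t)}_s \, a(Y_t)$ for $s \ge t$ together with the Malliavin chain rule, and remembering the boundary term $\theta(Y_t)$ produced by $D_t\!\int_0^T \theta^\top(Y_s)\,dW_s$, a componentwise computation yields
\[
D_t \log H_T = -\theta(Y_t) - a^\top(Y_t)\, L^{(t, Y_t)}_T,
\]
where $L^{(t,y)}$ is exactly the process defined just before the theorem. Multiplying by $H_T$ and using $L^{(t,Y_t)}_T = -\nabla_y H^{(t,Y_t)}_T / H^{(t,Y_t)}_T$ together with the factorization $H^{(t,Y_t)}_T = H_T / H_t$ produces
\[
D_t H_T = -H_T\,\theta(Y_t) + H_t\, a^\top(Y_t)\, \nabla_y H^{(t, Y_t)}_T.
\]

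Finally, I would substitute into $\hat\pi_t = (\sigma^\top(Y_t))^{-1}\bigl(H_t^{-1}\psi_t + \hat X_t\,\theta(Y_t)\bigr)$ from Theorem \ref{Optimal Terminal Wealth}. The myopic term $\hat X_t\,\theta(Y_t) = H_t^{-1}\theta(Y_t)\,\mathbb{E}_t[H_T I(\hat\lambda H_T)]$ combines with the $-H_T\,\theta(Y_t)$ piece inside $\psi_t$; the $I(\hat\lambda H_T)$ contributions cancel and what remains is precisely the first term in the statement,
\[
-\,(\sigma^\top(Y_t))^{-1}\theta(Y_t)\,\frac{1}{H_t}\,\mathbb{E}_t\bigl[H_T\cdot\hat\lambda H_T I^\prime(\hat\lambda H_T)\bigr],
\]
while the $H_t\, a^\top(Y_t)\,\nabla_y H^{(t, Y_t)}_T$ piece supplies the excess hedging term.

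The main obstacle is the first step: Malliavin differentiability of $H_T I(\hat\lambda H_T)$ is delicate because $I$ is not globally Lipschitz, may blow up at $0$, and is composed with the unbounded random variable $H_T$. Assumption \ref{smooth.asp}(iii) is the crucial ingredient because it delivers $L^p$-bounds on $\nabla_y Y$ and $\nabla_y H$ that are uniform in the horizon, which is needed here and also for the horizon-uniform estimates driving the convergence-rate theorems later in the paper; Assumption \ref{smooth.asp}(iv) then handles the singular behavior of $I^\prime$ near $0$ so that the chain rule and the Clark--Ocone formula both apply.
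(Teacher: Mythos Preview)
Your overall strategy---Clark--Ocone to identify $\psi_t$, the formula $D_tH_T = -H_T\theta(Y_t) + H_t\,a^\top(Y_t)\nabla_y H_T^{(t,Y_t)}$ via the first-variation identity $D_tY_s = \nabla_y Y^{(t,Y_t)}_s a(Y_t)$, and the cancellation of the $I(\hat\lambda H_T)$ contributions against $\hat X_t\theta(Y_t)$---is exactly the paper's approach, and the algebra you describe matches its computation.

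There is, however, a real issue in your first step. You claim $H_T I(\hat\lambda H_T)\in\mathbb{D}^{1,2}$, i.e.\ $\mathbb{D}_{2,1}$ in the paper's notation. The paper establishes only $\mathbb{D}_{1,1}$ membership (Lemma \ref{HinD11}) and invokes the extended Clark formula for $\mathbb{D}_{1,1}$ (Proposition \ref{Clark.formula}). This distinction is deliberate and is singled out in the remark following the theorem: under Assumptions \ref{coeff.asp}--\ref{smooth.asp} the market price of risk $\theta$ has merely linear growth, so second moments of $H_T$ and of its Malliavin derivative are not available in general---exactly the difficulty the paper attributes to \cite{DGR}. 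Your sketch of the integrability argument (``$L^p$-bounds for all $p\geq 1$'') therefore overreaches; the paper instead bounds $\mathbb{E}^{\mathbb{Q}}[\sup_s|Y_s|^p]$ and $\sup_{s,t}\mathbb{E}^{\mathbb{Q}}[|D_tY_s|^p]$ via the monotone condition (Lemma \ref{Y.DY.moment.Q}), then uses $\mathbb{E}[|H_T\tilde L_s|]=\mathbb{E}^{\mathbb{Q}}[\beta_T|\tilde L_s|]$ together with Lemma \ref{fin.mom.beta} to obtain the $\mathbb{D}_{1,1}$ norm. You should route the argument through $\mathbb{D}_{1,1}$.

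A second, smaller point: Assumption \ref{smooth.asp}(iii) is a monotone condition for the $\mathbb{Q}$-dynamics of $(Y,\nabla_yY)$ and yields fixed-$T$ moment bounds under $\mathbb{Q}$ (Lemma \ref{Y.DY.moment.Q}); it does not by itself give horizon-uniform estimates. The uniform-in-$T$ control used for the convergence-rate theorems comes later and only under the quadratic term structure model, via Riccati ODE asymptotics (Propositions \ref{asmp.C.beta}--\ref{LT.bdd}). Your closing paragraph conflates these two roles.
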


\begin{remark}
\begin{itemize}
    \item [(i)] We can find similar results in \cite{DGR, F, La, PS}. Here, we state the differences between our arguments and those of the previous papers. Firstly, \cite{DGR} considers the same market model as ours and assumes that $H_T I(zH_T) \in \mathbb{D}_{2, 1}$, which is difficult to check in general. In the present paper, we check that $H_T I(zH_T) \in \mathbb{D}_{1, 1}$, which is enough to use Clark's formula (Proposition~\ref{Clark.formula}), with the growth conditions for an inverse marginal utility $I$, its derivative $I^\prime$, and market coefficients (Assumptions~\ref{coeff.asp}--\ref{smooth.asp}). Secondly, \cite{La, PS} check the conditions that $H_T I(zH_T) \in \mathbb{D}_{1, 1}$ with linear Gaussian dynamics for the drift process $\mu_t$ under partial information. Their assumptions for market coefficients depend on the investment horizon $T$, but our assumptions do not depend on $T$, which is more useful for proving the turnpike theorem. Lastly, \cite{F} uses stochastic flow techniques instead of Malliavin calculus. Because the assumptions in \cite{F} also depend on $T$ and are difficult to check for our model, we do not use the results of \cite{F}.   
    \item [(ii)]  We can represent the optimal portfolio process $\hat{\pi}$ by \textit{the Arrow--Pratt measure of absolute risk tolerance} $ART_U(x) \coloneqq - \frac{U^{\prime}(x)}{U^{\prime\prime}(x)}$ as in \cite{DGR}:
    \begin{align*}
        \hat{\pi}_t &= (\sigma^\top(Y_t))^{-1} \theta (Y_t) \frac{1}{H_t} \mathbb{E}_t \left[H_T \cdot ART_U(\hat{X}_T) \right] \\
        & \qquad + (\sigma^\top(Y_t))^{-1}  a^\top(Y_t) \mathbb{E}_t \left[ \nabla_y H^{(t, Y_t)}_T  \left( \hat{X}_T - ART_U(\hat{X}_T) \right) \right].
    \end{align*}
    Moreover,  the optimal portfolio $\hat{\pi}$ can be divided into two components, namely, the myopic portfolio $\hat{\pi}^{M}$ and the excess hedging demand $\hat{\pi}^{H}$:
\begin{align*}
    \hat{\pi}_t &= \hat{\pi}^M_t + \hat{\pi}^H_t, \\
    \hat{\pi}^{M}_t &=  (\sigma^\top(Y_t))^{-1} \theta (Y_t) \frac{1}{H_t} \mathbb{E}_t \left[H_T \cdot ART_U(\hat{X}_T) \right], \\
    \hat{\pi}^{H}_t &= (\sigma^\top(Y_t))^{-1}  a^\top(Y_t) \mathbb{E}_t \left[ \nabla_y H^{(t, Y_t)}_T  \left( \hat{X}_T - ART_U(\hat{X}_T) \right) \right].
\end{align*}
\end{itemize}
\end{remark}

By the Markov property of the market model, the optimal portfolio process $\hat{\pi}$ obtained in Theorem~\ref{explicit.portfolio} is given by a feedback form. 
\begin{proposition}\label{feedback.form}
Under Assumptions \ref{coeff.asp}--\ref{smooth.asp}, the optimal portfolio process $\hat{\pi}$ in Theorem~\ref{Optimal Terminal Wealth} can be represented by a feedback form 
\[
\hat{\pi}_t = \hat{\pi}(T - t, \hat{X}_t, Y_t), \quad t \in [0, T],
\]
where $\hat{\pi}:(0, \infty) \times (0, \infty) \times \mathbb{R}^m \to \mathbb{R}^n$ is defined by
\begin{align*}    
\hat{\pi}(\tau, x, y) & \coloneqq  \hat{\pi}^M(\tau,x, y) + \hat{\pi}^H(\tau, x, y), \\
\hat{\pi}^{M}(\tau, x, y) &\coloneqq   -(\sigma^\top(y))^{-1} \theta(y) \mathbb{E}^y \left[ H_\tau \cdot {\hat{\lambda} H_\tau I^\prime(\hat{\lambda} H_\tau)}  \right], \\
\hat{\pi}^H(\tau, x, y) &\coloneqq (\sigma^\top(y))^{-1}  a^\top(y) \mathbb{E}^y \left[\nabla_y H_\tau  \left( I(\hat{\lambda} H_\tau) + \hat{\lambda} H_\tau I^\prime(\hat{\lambda} H_\tau) \right) \right],
\end{align*}
and $\hat{\lambda} = \hat{\lambda}(\tau, x, y)$ is defined by an equality $x = \mathbb{E}^y[H_\tau I(\lambda H_\tau)]$.

\end{proposition}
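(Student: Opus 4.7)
The plan is to reduce the conditional expectations appearing in Theorem~\ref{explicit.portfolio} to unconditional expectations over a suitably reinitialized system by exploiting the Markov property and the time-homogeneity of the coefficients, and then to identify the correct Lagrange multiplier.

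First, I would introduce the rescaled Lagrange multiplier $\tilde{\lambda} \coloneqq \hat{\lambda} H_t$ and the remaining horizon $\tau \coloneqq T - t$. Because the coefficients $r,\theta,b,a$ are time-independent, the flow relation $Y^{(t, Y_t)}_s = Y_s$ for $s \geq t$ holds, and plugging this into the explicit formula for $H^{(t,y)}$ gives $H^{(t, Y_t)}_T = H_T / H_t$. Consequently $\hat{\lambda} H_T = \tilde{\lambda}\, H^{(t, Y_t)}_T$, so all integrands in the expressions for $\hat{\pi}^M_t$ and $\hat{\pi}^H_t$ become measurable functions of $H^{(t, Y_t)}_T$ (and, for the hedging term, of $\nabla_y H^{(t, Y_t)}_T$).

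Next, I would invoke the Markov property. The pair $(Y, \nabla_y Y)$ solves an autonomous (time-homogeneous) SDE, and the integrals defining $L^{(t,y)}$ and $\nabla_y H^{(t,y)}$ are adapted functionals of $Y^{(t,y)}$, $\nabla_y Y^{(t,y)}$, and the Brownian increments $W_s - W_t$. Hence the conditional joint law of $(H^{(t, Y_t)}_T, \nabla_y H^{(t, Y_t)}_T)$ given $\mathcal{F}_t$ depends on $\omega$ only through $Y_t$ and coincides with the unconditional law of $(H_\tau, \nabla_y H_\tau)$ under $\mathbb{E}^{Y_t}$. Applying this to the formulas in Theorem~\ref{explicit.portfolio} yields
\begin{align*}
\frac{1}{H_t}\mathbb{E}_t\!\left[H_T \cdot \hat{\lambda} H_T I^\prime(\hat{\lambda} H_T)\right] &= \mathbb{E}^{Y_t}\!\left[H_\tau \cdot \tilde{\lambda} H_\tau I^\prime(\tilde{\lambda} H_\tau)\right], \\
\mathbb{E}_t\!\left[\nabla_y H^{(t, Y_t)}_T \bigl(I(\hat{\lambda} H_T) + \hat{\lambda} H_T I^\prime(\hat{\lambda} H_T)\bigr)\right] &= \mathbb{E}^{Y_t}\!\left[\nabla_y H_\tau \bigl(I(\tilde{\lambda} H_\tau) + \tilde{\lambda} H_\tau I^\prime(\tilde{\lambda} H_\tau)\bigr)\right].
\end{align*}

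To finish, I would identify $\tilde{\lambda}$ as the function $\hat{\lambda}(\tau, \hat{X}_t, Y_t)$ from the statement. Applying the same Markov rewriting to the budget identity from Theorem~\ref{Optimal Terminal Wealth} gives $\hat{X}_t = \mathbb{E}^{Y_t}[H_\tau I(\tilde{\lambda} H_\tau)]$, which is exactly the defining equation for $\hat{\lambda}(\tau, \hat{X}_t, Y_t)$; strict monotonicity of $\lambda \mapsto \mathbb{E}^{Y_t}[H_\tau I(\lambda H_\tau)]$, inherited from strict monotonicity of $I$, forces uniqueness and hence $\tilde{\lambda} = \hat{\lambda}(\tau, \hat{X}_t, Y_t)$. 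Substituting everything back recovers $\hat{\pi}_t = \hat{\pi}^M(\tau, \hat{X}_t, Y_t) + \hat{\pi}^H(\tau, \hat{X}_t, Y_t)$ as claimed.

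The step I expect to require the most care is the Markov-property rewriting for the hedging term, since $\nabla_y H^{(t,y)}$ is built from the derivative flow $\nabla_y Y^{(t,y)}$ and the stochastic integrals in $L^{(t,y)}$; I would justify it by noting that $(Y, \nabla_y Y)$ is itself a time-homogeneous strong Markov process under Assumptions~\ref{coeff.asp} and \ref{smooth.asp}, and that the functionals in $L^{(t,y)}$ are bona fide path functionals of this augmented process together with the Brownian motion, so the usual Markov rewrite applies verbatim.
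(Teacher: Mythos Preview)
Your proposal is correct and follows essentially the same route as the paper: both arguments use the time-homogeneity and Markov property to rewrite the conditional expectations from Theorem~\ref{explicit.portfolio} as unconditional expectations under $\mathbb{E}^{Y_t}$ over a horizon $\tau = T-t$, and then identify the rescaled multiplier $\tilde\lambda = \hat\lambda H_t$ with $\hat\lambda(\tau,\hat X_t,Y_t)$ via the budget identity. The paper phrases the last step by introducing the map $\mathcal{X}(t,z,y) \coloneqq \mathbb{E}^y[H_t I(zH_t)]$ and its inverse $F(t,\cdot,y) \coloneqq \mathcal{X}(t,\cdot,y)^{-1}$, so that $F(T-t,\hat X_t,Y_t) = \hat\lambda H_t$, whereas you invoke strict monotonicity of $\lambda \mapsto \mathbb{E}^{Y_t}[H_\tau I(\lambda H_\tau)]$ directly; these are the same argument in different clothing. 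Your explicit remark that the hedging term requires the Markov property for the augmented process $(Y,\nabla_y Y)$ is a point the paper handles only by saying ``by the same argument,'' so your treatment is if anything slightly more careful there.
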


\subsection{Turnpike theorem for myopic portfolios under general stochastic factor models}\label{Myopic.Turnpike}
In this subsection, we consider two investors with utility functions $U_i:(0, \infty) \to (-\infty, \infty), \;(i=1, 2)$, and we fix an initial wealth $x > 0$ for both investors. We denote the corresponding optimal terminal wealth $\xi$, optimal wealth process $\hat{X}$, and optimal feedback function $\hat{\pi}$ for the $i$-th investor by $\xi^{i, T}, \hat{X}^{i, T}$, and $\hat{\pi}^{i}$, respectively.  
In this subsection, we consider the turnpike theorem for myopic portfolios under general stochastic factor models (\ref{gen.stoch.fac.model}) introduced in the previous subsection. That is, we show that for each $x$ and $y$,
\[
    |\hat{\pi}^{1, M}(T, x, y) - \hat{\pi}^{2, M}(T, x, y) | \rightarrow 0, \quad (T \nearrow \infty),
\]
 and we derive its convergence rate in terms of $\mathbb{E}[H_T]$. 

\begin{assumption}\label{pref.asp}
    \begin{enumerate}
        \item [(i)] Let $p \in (-\infty, 0]$. For $p < 0$, we set
        \[
                    U_2(x) \coloneqq \frac{x^p}{p}, \quad x \in (0, \infty),
        \]
        for $p = 0$,
        \[
            U_2(x) \coloneqq \log x, \quad x \in (0, \infty).
        \]
        \item [(ii)] Let $q \coloneqq \frac{p}{p-1} \in [0, 1)$. There exist constants $K \in [0, \infty), \alpha \in (q - 1, 0]$ such that 
        \begin{align}
            |I_1(z) - I_2(z) | &\leq K(1 + z^\alpha), \quad z \in (0, \infty), \label{I.asp} \\
            |zI^\prime_1(z) - zI^\prime_2(z) | &\leq K(1 + z^\alpha), \quad z \in (0, \infty). \label{Iprime.asp}
        \end{align}
        \item [(iii)]
        \[
            \mathbb{E}[H_T] \searrow 0, \quad (T \nearrow \infty).
        \]
    \end{enumerate}
\end{assumption}
\begin{remark}
    Because the von~Neumann--Morgenstern (vN-M) utility function $U_1$ for the first investor is uniquely determined up to positive affine transformations \cite[Theorem~2.21]{FS}, any conditions for vN-M utility function $U_1$ must be invariant for positive affine transformations. In this case, we have to assume a generalized version of Assumption~\ref{pref.asp}(ii) as follows.
\begin{assumption*}
    There exist constants $C > 0, K \geq 0, \alpha \in (q-1, 0]$ such that 
    \begin{equation}\label{gen.pref.asp1} 
        \begin{aligned}
            |I_1(z) - C I_2(z) | &\leq K(1 + z^\alpha), \quad z \in (0, \infty), \\
        |zI^\prime_1(z) - C zI^\prime_2(z) | &\leq K(1 + z^\alpha), \quad z \in (0, \infty). 
        \end{aligned}
    \end{equation}
\end{assumption*}
    If $U_1$ satisfies (\ref{gen.pref.asp1}), then $\Tilde{U}_1 \coloneqq aU_1 + b$ also satisfies (\ref{gen.pref.asp1}) for any $a > 0,\; b \in \mathbb{R}$, which means that (\ref{gen.pref.asp1}) is invariant for positive affine transformations. In particular, if $U_1$ satisfies  (\ref{gen.pref.asp1}), then $\Tilde{U}_1 \coloneqq aU_1, \; a\coloneqq C^{\frac{1}{q - 1}}$ satisfies  (\ref{gen.pref.asp1}) for $C = 1$ and the optimal feedback function given by Proposition~\ref{feedback.form} is invariant, which means that without loss of generality, we can assume $C = 1$ in  (\ref{gen.pref.asp1}), that is, (ii) in Assumption~\ref{pref.asp}.
\end{remark}

\begin{remark}
If a function $U$ is twice differentiable and satisfies (\ref{I.asp}) in Assumption~\ref{pref.asp}(ii) for $ p \in (-\infty, 1)$, then
    $U^\prime$ is regularly varying with an exponent $ p - 1$ and 
    \[
        \lim_{x \nearrow \infty} RRA_{U}(x) = \lim_{x \nearrow \infty} \frac{-U^{\prime \prime}(x) x}{U^\prime(x)} = 1 - p.
    \]
For a proof, see \cite[Proposition~2]{CH}. Furthermore, by the identity $I_2(z) = z^{q - 1}$, the ratio of the inverse marginal utilities, $\frac{I_1}{I_2}$, and the ratio of the derivatives, $\frac{I_1^\prime}{I^\prime_2}$, satisfy
\begin{align*}
    \left| \frac{I_1(z)}{I_2(z)} - 1 \right| &\leq K\left( z^{-(q - 1)} + z^{\alpha - (q - 1)} \right), \\
    \left| \frac{I^\prime_1(z)}{I_2^\prime(z)} - 1 \right| &\leq \frac{K}{1-q} \left( z^{-(q - 1)} + z^{\alpha - (q - 1)} \right). \\
\end{align*}
These inequalities imply that these ratios converge to 1, and the speed of convergence is determined by $\alpha - (q - 1)$.
\end{remark}

\begin{remark}
Although the inequality (\ref{I.asp}) can be derived from (\ref{Iprime.asp}), we assume both inequalities for simplicity.
\end{remark}

\begin{remark}
When showing turnpike theorems, Assumption~\ref{pref.asp}(iii) is the usual one in previous works. Indeed, Dybvig et~al.\ \cite{DRB} assume the same condition. Furthermore, Cox and Huang \cite{CH}, Jin \cite{J}, Huang and Zariphopoulou\cite{HZ}, and Bian and Zheng \cite{BZ, BZ19} consider the Black--Scholes model and assume that the interest rate $r$ is strictly positive, which is equivalent to $\mathbb{E}[H_T] \searrow 0$ in the model. Because 
    \[
        \mathbb{E}[H_T] = \mathbb{E}^{\mathbb{Q}}\left[ \exp{\left( - \int_0^T r(Y_t) dt \right)} \right]
    \]
    is the price of a zero-coupon bond with maturity $T$ at $t = 0$, Assumption~\ref{pref.asp}(iii) implies that the interest rate $r(Y_t)$ is positive in the long run. 
\end{remark}

The following theorem is one of our main results. 
\begin{theorem}\label{myopic.turnpike}
Under Assumptions \ref{coeff.asp}--\ref{pref.asp}, there exists an $M = M(x, y) \in ( - \infty, x]$ that is independent of $T$ such that
\begin{align*}
|\hat{\pi}^{1, M} (T, x, y) - \hat{\pi}^{2, M} (T, x, y) | &\leq K(2 - q) |\sigma^\top(y)^{-1} \theta(y)| \left( \mathbb{E}[H_T] + (x - M)^{\frac{\alpha}{q - 1}} \mathbb{E}[H_T]^{1 - \frac{\alpha}{q - 1}} \right) \\
&= O \left( \mathbb{E} [H_T]^{1 - \frac{\alpha}{q-1}} \right), \quad (T \nearrow \infty).
\end{align*}
\end{theorem}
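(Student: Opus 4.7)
The plan is to expand $\hat{\pi}^{1,M} - \hat{\pi}^{2,M}$ via the feedback representation of Proposition \ref{feedback.form}, exploit the power-utility identity $zI_2'(z) = (q-1)z^{q-1}$ for the benchmark, and quantify the deviation of $U_1$ from $U_2$ through Assumption \ref{pref.asp}(ii). For investor~$2$, the budget identity $\hat{\lambda}_2^{q-1}\mathbb{E}^y[H_T^q] = x$ collapses the myopic portfolio into the familiar CRRA form $(1-q)(\sigma^\top(y))^{-1}\theta(y)\,x$, so only investor~$1$ requires real work.

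The core step is the two-part decomposition
\[
\hat{\lambda}_1 H_T I_1'(\hat{\lambda}_1 H_T) - \hat{\lambda}_2 H_T I_2'(\hat{\lambda}_2 H_T) = \bigl[\,\hat{\lambda}_1 H_T I_1'(\hat{\lambda}_1 H_T) - \hat{\lambda}_1 H_T I_2'(\hat{\lambda}_1 H_T)\,\bigr] + (q-1)\bigl[(\hat{\lambda}_1 H_T)^{q-1} - (\hat{\lambda}_2 H_T)^{q-1}\bigr].
\]
Pairing with $H_T$ and taking expectation, the first bracket contributes at most $K(\mathbb{E}^y[H_T] + \hat{\lambda}_1^\alpha \mathbb{E}^y[H_T^{1+\alpha}])$ by \eqref{Iprime.asp}. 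For the second, setting $A \coloneqq \hat{\lambda}_1^{q-1}\mathbb{E}^y[H_T^q]$, the budget $\mathbb{E}^y[H_T I_1(\hat{\lambda}_1 H_T)] = x$ rewrites $A - x$ as $\mathbb{E}^y[H_T\{I_2 - I_1\}(\hat{\lambda}_1 H_T)]$, and \eqref{I.asp} bounds its contribution by the same quantity; this accounts for the overall factor $K(2-q)$ in the theorem. To turn the residual $\hat{\lambda}_1^\alpha \mathbb{E}^y[H_T^{1+\alpha}]$ into a power of $\mathbb{E}[H_T]$, I would write $1+\alpha = \gamma q + (1-\gamma)$ for $\gamma \coloneqq \alpha/(q-1) \in [0,1)$ and apply Hölder's inequality to obtain $\hat{\lambda}_1^\alpha \mathbb{E}^y[H_T^{1+\alpha}] \leq A^\gamma \mathbb{E}^y[H_T]^{1-\gamma}$, which is precisely the origin of the exponent $1 - \alpha/(q-1)$.

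The main obstacle is to upgrade $A^\gamma$ to $(x-M)^\gamma$ for some $T$-independent $M \leq x$. I plan a bootstrap from the implicit inequality $|x - A| \leq K(B_0 + A^\gamma B_0^{1-\gamma})$, where $B_0 \coloneqq \sup_{T \geq 0} \mathbb{E}^y[H_T]$ is finite because Assumption \ref{pref.asp}(iii) forces $\mathbb{E}^y[H_T]$ to decrease monotonically from $H_0 = 1$. Since $\gamma < 1$ the term $KA^\gamma B_0^{1-\gamma}$ is sublinear in $A$: either $A^{1-\gamma} \geq 2KB_0^{1-\gamma}$, in which case $A \leq x + K(B_0 + A^\gamma B_0^{1-\gamma})$ yields $A \leq 2(x+KB_0)$, or else $A < (2K)^{1/(1-\gamma)}B_0$ directly. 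Either branch gives $A \leq A_0(x,y)$ uniformly in $T$, and one then sets $M \coloneqq -K(B_0 + A_0^\gamma B_0^{1-\gamma}) \leq 0 \leq x$. The delicate point will be arranging this implicit bootstrap cleanly, since both $A$ and the residual depend on $\hat{\lambda}_1$; with $M$ in hand, collecting all pieces yields the stated bound, and the asymptotic rate $O(\mathbb{E}[H_T]^{1-\alpha/(q-1)})$ follows from $\mathbb{E}[H_T] \to 0$ together with $1 - \gamma \in (0, 1]$.
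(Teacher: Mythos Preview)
Your proposal is correct and follows essentially the same route as the paper: the identical two-term decomposition $(\text{I})+(\text{II})$, the same H\"older step writing $1+\alpha = \gamma q + (1-\gamma)$ with $\gamma=\alpha/(q-1)$ (this is the paper's Lemma~\ref{holder.lem}), and the same bootstrap for the $T$-independent constant $M$ (this is Proposition~\ref{d.bound}). The only cosmetic difference is that the paper works with $\mathbb{E}[H_T d(\hat{\lambda}^{1,T}H_T)] = x - A$ and establishes $M \coloneqq \inf_{T>0}\mathbb{E}[H_T d(\hat{\lambda}^{1,T}H_T)] > -\infty$ by contradiction, whereas you bound $A$ directly via the dichotomy $A^{1-\gamma}\gtrless 2KB_0^{1-\gamma}$; these are equivalent presentations of the same sublinearity argument.
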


\begin{remark}
Here we want to emphasize that the convergence rate of the turnpike theorem in stochastic factor models is determined by two components: (i) the speed of market growth, $\mathbb{E}[H_T]$, and (ii) the similarity between utilities captured by $\alpha$ and $q$. In addition, if the interest rate is a positive constant, $r(Y_t) = r > 0$, then the convergence rate is $e^{- r \left( 1 - \frac{\alpha}{q - 1} \right)T }$, which is the same rate as in Bian and Zheng \cite{BZ}. Therefore, the rate $\mathbb{E} [H_T]^{1 - \frac{\alpha}{q-1}}$ derived in the present paper is a natural extension of the rate in \cite{BZ}.
\end{remark}
\begin{remark}\label{exp.decay.ZC}
As studied by Qin and Linetsky \cite{QL1}, the decay speed of $\mathbb{E}[H_T]$ is exponential in general. 
Indeed, (ii) of Theorem~3.2 in \cite{QL1} shows that under a general semimartingale model satisfying some assumptions,
\[
    \lim_{T \nearrow \infty} \frac{- \log \mathbb{E}[H_T]}{T} = \lambda
\]
holds for some $\lambda \in \mathbb{R}$. When all uncertainty is generated by a time-homogeneous Markov process $X$, $e^{\lambda t}$ is an eigenvalue of the pricing operator $\mathcal{P}_t f(x) \coloneqq \mathbb{E}[H_t f(X_t) | X_0 = x ]$. For details, see \cite{QL1, QL2, QL3}.  
\end{remark}
Although Theorem~\ref{myopic.turnpike} seems to imply that the convergence is not uniform in $x$, we can prove uniform convergence in $x$ for portfolio proportions $\frac{\hat{\pi}^{i, M}(T, x, y)}{x}$.
\begin{theorem}\label{uni.conv.myopic}
Under Assumptions \ref{coeff.asp}--\ref{pref.asp}, for any $\epsilon > 0$,
\begin{align*}
\sup_{x > \epsilon} \left|  \frac{\hat{\pi}^{1, M} (T, x, y)}{x} - \frac{\hat{\pi}^{2, M} (T, x, y)}{x} \right| = O \left( \mathbb{E} [H_T]^{1 - \frac{\alpha}{q-1}} \right), \quad (T \nearrow \infty).
\end{align*}
\end{theorem}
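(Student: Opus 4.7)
The plan is to revisit the proof of Theorem~\ref{myopic.turnpike} while tracking the normalization by $x$ throughout. By Proposition~\ref{feedback.form},
\[
    \frac{\hat{\pi}^{i, M}(T, x, y)}{x} = -(\sigma^\top(y))^{-1}\theta(y) \cdot \frac{1}{x}\mathbb{E}^y\!\left[ H_T \cdot \hat{\lambda}_i H_T I_i^{\prime}(\hat{\lambda}_i H_T)\right].
\]
For investor~2, the identity $zI_2^\prime(z) = (q - 1) I_2(z)$ together with the budget equation $x = \hat{\lambda}_2^{q - 1}\mathbb{E}^y[H_T^q]$ immediately yields $\hat{\pi}^{2, M}(T, x, y)/x = (1 - q)(\sigma^\top(y))^{-1}\theta(y)$, which is independent of both $x$ and $T$. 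It therefore suffices to show
\[
    \sup_{x > \epsilon}\left|\frac{\mathbb{E}^y[H_T \hat{\lambda}_1 H_T I_1^\prime(\hat{\lambda}_1 H_T)]}{x} - (q - 1)\right| = O\!\left(\mathbb{E}[H_T]^{1 - \alpha/(q - 1)}\right).
\]

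Following the decomposition used in the proof of Theorem~\ref{myopic.turnpike}, I would split the difference $\mathbb{E}^y[H_T\hat{\lambda}_1 H_T I_1^\prime(\hat{\lambda}_1 H_T)] - (q-1)x$ into the contribution from the closeness of $zI_i^\prime(z)$, controlled by~(\ref{Iprime.asp}), and the contribution from the closeness of $\hat{\lambda}_1$ and $\hat{\lambda}_2$. The latter is treated via $\hat{\lambda}_2^{q-1}\mathbb{E}^y[H_T^q] = x$ and the estimate $|\hat{\lambda}_1^{q-1}\mathbb{E}^y[H_T^q] - x| = |\mathbb{E}^y[H_T (I_2 - I_1)(\hat{\lambda}_1 H_T)]| \leq K\mathbb{E}^y[H_T] + K\hat{\lambda}_1^\alpha\mathbb{E}^y[H_T^{1+\alpha}]$ coming from~(\ref{I.asp}). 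Combining yields
\[
    \left|\frac{\hat{\pi}^{1,M}(T,x,y)}{x} - \frac{\hat{\pi}^{2,M}(T,x,y)}{x}\right| \leq K(2-q)\,\bigl|(\sigma^\top(y))^{-1}\theta(y)\bigr|\left(\frac{\mathbb{E}^y[H_T]}{x} + \frac{\hat{\lambda}_1^\alpha\mathbb{E}^y[H_T^{1+\alpha}]}{x}\right).
\]
On $\{x > \epsilon\}$, the first term is already $O(\mathbb{E}[H_T]^{1 - \alpha/(q-1)})$ since $1 - \alpha/(q-1) \in (0, 1]$ and $\mathbb{E}[H_T] \searrow 0$.

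The remaining task is the uniform control of $\hat{\lambda}_1^\alpha\mathbb{E}^y[H_T^{1+\alpha}]/x$. Setting $\gamma \coloneqq \alpha/(q-1) \in [0, 1)$ and factoring $\hat{\lambda}_1^\alpha H_T^{1+\alpha} = (\hat{\lambda}_1^{q-1} H_T^q)^\gamma H_T^{1-\gamma}$, Hölder's inequality with conjugate exponents $1/\gamma$ and $1/(1-\gamma)$ yields $\hat{\lambda}_1^\alpha\mathbb{E}^y[H_T^{1+\alpha}] \leq L^\gamma\mathbb{E}^y[H_T]^{1-\gamma}$, where $L \coloneqq \hat{\lambda}_1^{q-1}\mathbb{E}^y[H_T^q]$. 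Using the lower bound $I_1(z) \geq z^{q-1} - K(1+z^\alpha)$ in the budget equation produces the implicit inequality $L \leq x + K\mathbb{E}^y[H_T] + KL^\gamma\mathbb{E}^y[H_T]^{1-\gamma}$, which I would resolve via a weighted Young inequality $L^\gamma E^{1-\gamma} \leq \gamma\theta L + (1-\gamma)\theta^{-\gamma/(1-\gamma)}E$ with $\theta$ chosen small enough (depending only on $K$ and $\gamma$) to absorb the $L$-term on the right. This delivers $L \leq C_1 x + C_2\mathbb{E}^y[H_T]$ with constants independent of $T$ and $x$; the subadditivity $(a+b)^\gamma \leq a^\gamma + b^\gamma$ then bounds $\hat{\lambda}_1^\alpha\mathbb{E}^y[H_T^{1+\alpha}]$ by $C_1^\gamma x^\gamma\mathbb{E}[H_T]^{1-\gamma} + C_2^\gamma\mathbb{E}[H_T]$, and dividing by $x > \epsilon$ (using $x^{\gamma - 1} \leq \epsilon^{\gamma - 1}$ since $\gamma < 1$) gives the uniform rate. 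The main obstacle is precisely this self-referential structure: the budget inequality couples $L$ to $\hat{\lambda}_1^\alpha\mathbb{E}^y[H_T^{1+\alpha}]$, which Hölder rebinds to $L^\gamma$, so a carefully weighted Young step is what decouples them while keeping all constants horizon-independent.
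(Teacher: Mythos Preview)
Your argument is correct. The decomposition and the bound
\[
\left|\frac{\hat{\pi}^{1,M}}{x} - \frac{\hat{\pi}^{2,M}}{x}\right|
\leq K(2-q)\bigl|(\sigma^\top(y))^{-1}\theta(y)\bigr|\left(\frac{\mathbb{E}[H_T]}{x} + \frac{\hat\lambda_1^{\alpha}\mathbb{E}[H_T^{1+\alpha}]}{x}\right)
\]
coincide with what the paper obtains by dividing the inequality of Theorem~\ref{myopic.turnpike} by~$x$. The difference lies in how the second term is made uniform in~$x$. The paper introduces $M(x,y)=\inf_{T>0}\mathbb{E}[H_T d(\hat\lambda^{1,T}H_T)]$ and, via two auxiliary results (Lemma~\ref{f.low.est}, a case split on $z\in[-x,x)$ versus $z<-x$, and Lemma~\ref{uni.M.lb}), proves $\sup_{x>\epsilon}\frac{x-M(x,y)}{x}<\infty$; this then feeds into the bound $(x-M)^{\gamma}\mathbb{E}[H_T]^{1-\gamma}/x=((x-M)/x)^{\gamma}(\mathbb{E}[H_T]/x)^{1-\gamma}$. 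You instead work directly with $L=\hat\lambda_1^{q-1}\mathbb{E}[H_T^q]$ (which equals $x-\mathbb{E}[H_T d(\hat\lambda^{1,T}H_T)]$, so $x-M=\sup_T L$), apply H\"older to get $\hat\lambda_1^{\alpha}\mathbb{E}[H_T^{1+\alpha}]\leq L^{\gamma}\mathbb{E}[H_T]^{1-\gamma}$, and close the self-referential budget inequality $L\leq x+K\mathbb{E}[H_T]+KL^{\gamma}\mathbb{E}[H_T]^{1-\gamma}$ with a weighted Young inequality to obtain $L\leq C_1x+C_2\mathbb{E}[H_T]$. Your route is more streamlined: it replaces the case analysis of Lemma~\ref{f.low.est} by a single Young step and yields the explicit linear bound $L\leq C_1x+C_2\mathbb{E}[H_T]$, which is slightly stronger information than the paper's $\sup_{x>\epsilon}(x-M)/x<\infty$. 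The paper's approach, on the other hand, packages the estimate through the quantity $M(x,y)$ that is reused verbatim in Proposition~\ref{excess.hedging.bdd} for the excess hedging demands, so its formulation is tailored to that later application.
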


\begin{remark}
    We cannot prove uniform convergence in $y$ because $y \mapsto \theta(y)$ has linear growth and is generally unbounded.
\end{remark}

We can also show that the time-0 value of the difference between the optimal wealth processes at time $t$ converges to 0 uniformly in $t$ and that the convergence rate is the same as in the above theorems.

\begin{theorem}\label{wealth.turnpike}
Under Assumptions \ref{coeff.asp}--\ref{pref.asp},
\begin{equation}\label{wealth.conv}
    \sup_{t \in [0, T]}\mathbb{E}[ H_t | \hat{X}^{1, T}_t - X^{2, T}_t |] = O \left( \mathbb{E} [H_T]^{1 - \frac{\alpha}{q-1}} \right), \quad (T \nearrow \infty).
\end{equation}
\end{theorem}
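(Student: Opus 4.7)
The plan is to turn the sup in $t$ into a single terminal expectation by using the closed-form optimal wealth of Theorem~\ref{Optimal Terminal Wealth}, and then to bound that terminal expectation by the same quantities that control the myopic turnpike in Theorem~\ref{myopic.turnpike}.

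First, I would write, for $i=1,2$,
\[
H_t \hat{X}^{i,T}_t = \mathbb{E}_t\bigl[H_T\, I_i(\hat\lambda_i H_T)\bigr],
\]
where $\hat\lambda_i = \hat\lambda_i(T,x,y)$ is the Lagrange multiplier. Subtracting and using the conditional Jensen inequality for $|\cdot|$ together with the tower property gives
\[
\sup_{t\in[0,T]}\mathbb{E}\bigl[H_t|\hat X^{1,T}_t-\hat X^{2,T}_t|\bigr]
\;\leq\;\mathbb{E}\bigl[H_T\,|I_1(\hat\lambda_1 H_T)-I_2(\hat\lambda_2 H_T)|\bigr].
\]
The virtue of this step is that the right-hand side has no $t$ at all, so uniformity in $t$ is obtained for free and the problem reduces to estimating a single terminal expectation.

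Next, I would split the integrand as
\[
|I_1(\hat\lambda_1 H_T)-I_2(\hat\lambda_2 H_T)|
\leq |I_1(\hat\lambda_1 H_T)-I_2(\hat\lambda_1 H_T)|
+|I_2(\hat\lambda_1 H_T)-I_2(\hat\lambda_2 H_T)|.
\]
The first piece is bounded directly by Assumption~\ref{pref.asp}(ii), giving
\[
\mathbb{E}\bigl[H_T|I_1(\hat\lambda_1 H_T)-I_2(\hat\lambda_1 H_T)|\bigr]
\leq K\bigl(\mathbb{E}[H_T]+\hat\lambda_1^{\alpha}\mathbb{E}[H_T^{1+\alpha}]\bigr).
\]
For the second piece, since $I_2(z)=z^{q-1}$, it equals $|\hat\lambda_1^{\,q-1}-\hat\lambda_2^{\,q-1}|\,H_T^{q-1}$, hence after multiplying by $H_T$ its expectation is $|\hat\lambda_1^{\,q-1}-\hat\lambda_2^{\,q-1}|\,\mathbb{E}[H_T^{q}]$. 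To control this factor I would exploit the two budget constraints $x=\mathbb{E}[H_T I_i(\hat\lambda_i H_T)]$: subtracting them and using $\mathbb{E}[H_T I_2(\hat\lambda_2 H_T)]=\hat\lambda_2^{\,q-1}\mathbb{E}[H_T^q]$ yields
\[
\bigl(\hat\lambda_1^{\,q-1}-\hat\lambda_2^{\,q-1}\bigr)\mathbb{E}[H_T^{q}]
=\mathbb{E}\bigl[H_T\bigl(I_2(\hat\lambda_1 H_T)-I_1(\hat\lambda_1 H_T)\bigr)\bigr],
\]
whose absolute value is already controlled by Assumption~\ref{pref.asp}(ii). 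Putting everything together gives
\[
\sup_{t\in[0,T]}\mathbb{E}\bigl[H_t|\hat X^{1,T}_t-\hat X^{2,T}_t|\bigr]
\leq 2K\bigl(\mathbb{E}[H_T]+\hat\lambda_1^{\alpha}\mathbb{E}[H_T^{1+\alpha}]\bigr).
\]

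The remaining task, and the only real obstacle, is to show that the right-hand side is $O\bigl(\mathbb{E}[H_T]^{1-\alpha/(q-1)}\bigr)$. This requires knowing how $\hat\lambda_1$ scales with $T$. From the budget equation and the bound $I_1(z)\leq I_2(z)+K(1+z^\alpha)$, one gets a two-sided comparison between $\hat\lambda_1^{\,q-1}$ and $x/\mathbb{E}[H_T^q]$, so $\hat\lambda_1^{\alpha}$ is comparable to $(x/\mathbb{E}[H_T^q])^{\alpha/(q-1)}$ uniformly in $T$; this is essentially the content underlying the constant $M=M(x,y)$ produced in Theorem~\ref{myopic.turnpike}, and I would quote the bounds established in its proof. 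Combined with a Hölder-type inequality $\mathbb{E}[H_T^{1+\alpha}]\leq \mathbb{E}[H_T^q]^{(1+\alpha-1)/(q-1)}\mathbb{E}[H_T]^{1-(1+\alpha-1)/(q-1)}=\mathbb{E}[H_T^q]^{\alpha/(q-1)}\mathbb{E}[H_T]^{1-\alpha/(q-1)}$ (valid because $\alpha\in(q-1,0]$ places $1+\alpha$ between $q$ and $1$), the $\hat\lambda_1^\alpha$ factor cancels the $\mathbb{E}[H_T^q]^{\alpha/(q-1)}$ factor and leaves exactly $\mathbb{E}[H_T]^{1-\alpha/(q-1)}$, as desired. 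Assumption~\ref{pref.asp}(iii) ensures this dominates $\mathbb{E}[H_T]$ for large $T$, completing the proof.
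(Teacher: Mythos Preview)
Your proposal is correct and matches the paper's proof almost exactly: the paper phrases your first step as ``$(H_t|\hat X^{1,T}_t-\hat X^{2,T}_t|)_t$ is a submartingale,'' which is equivalent to your conditional-Jensen argument since each $H_t\hat X^{i,T}_t$ is a martingale, and then performs the same triangle-inequality split and the same budget-constraint identity for the second piece. Your final paragraph is precisely the content of Lemma~\ref{holder.lem} and Proposition~\ref{d.bound}, which the paper also invokes at this point.
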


\begin{remark}
     The convergence $\lim_{T \nearrow \infty} \mathbb{E}[ H_t | \hat{X}^{1, T}_t - X^{2, T}_t |] = 0$ is already proved in complete markets by Dybvig et~al.\ \cite{DRB}. Here, our focus is on the convergence rate. 
\end{remark}

\subsection{Turnpike theorem for excess hedging demands under quadratic term structure models}\label{Excess.Turnpike}
    In this subsection, we consider quadratic term structure models studied by \cite{ADG, LW}. The market model consists of a riskless bond $S^0$, $n$ risky assets $S = (S^1, \dots, S^n)^\top$, and an $m$-dimensional factor process $Y = (Y^1, \dots, Y^m)^\top$ that affects the risk-free interest rate $r(Y_t)$ of $S^0$ and the mean return rate $\mu(Y_t)$ of $S$:
    \begin{equation}\label{QTSM1}
    \begin{aligned}
        dS^0_t &= S^0_t r(Y_t) dt,& S^0_0 &= 1, \\
        dS_t &= \mathrm{diag}(S_t) \left\{ \mu(Y_t) dt + \Sigma dW_t \right\},& S_0 &\in \mathbb{R}^n_{++},\\
        dY_t &= (b + BY_t) dt + \Lambda dW_t,& Y_0 &= y \in \mathbb{R}^m,
    \end{aligned}
    \end{equation}
    where $r:\mathbb{R}^m \to \mathbb{R},\; \mu:\mathbb{R}^m \to \mathbb{R}^n,\; \Sigma \in \mathbb{R}^{n \times n},\; b \in \mathbb{R}^m,\; B \in \mathbb{R}^{m \times m}, \Lambda \in \mathbb{R}^{m \times n}$. Furthermore, we assume that the risk-free interest rate $r(Y_t)$ of $S^0$ is a quadratic Gaussian process and the market price of risk $\theta(Y_t) \coloneqq \Sigma^{-1} \left( \mu(Y_t) - \mathbf{1} r(Y_t) \right)$ is a linear Gaussian process as follows:
    \begin{equation}\label{QTSM2}
    \begin{aligned}
        r(y) &\coloneqq r_0 + r_1^\top y + \frac{1}{2} y^\top R_2 y, \\
        \theta(y) &\coloneqq a + Ay, \\
        \mu(y) &\coloneqq \Sigma \theta(y) + \mathbf{1} r(y), 
    \end{aligned}
    \end{equation}
    where $r_0 \in \mathbb{R},\; r_1 \in \mathbb{R}^m, \; R_2 \in \mathbb{R}^{m \times m},\; a \in \mathbb{R}^n,\; A \in \mathbb{R}^{n \times m},$ and $\mathbf{1} = (1, \dots, 1)^\top \in \mathbb{R}^n$. We denote the totality of $m \times m$, real, symmetric matrices by $\mathbb{S}^m$ and $\mathbb{S}^m_+ \coloneqq \left\{M \in \mathbb{S}^m; \; M \geq 0 \right\}$, $\mathbb{S}^m_{++} \coloneqq \left\{M \in \mathbb{S}^m; \; M > 0 \right\}$. We assume the following conditions.
    \begin{assumption}\label{LQMarket.asp} \label{stable.asp}
        \begin{itemize}
            \item [(i)] $\Sigma $ is invertible.
            \item [(ii)] $R_2 \in \mathbb{S}^m_+$.
            \item [(iii)] $R_2 = 0$ or $\left( \gamma ( 1- \gamma) A^\top A + \gamma R_2 \right) \in \mathbb{S}^m_{++}$ for $\gamma \in \left\{q, 1  + \alpha \right\}$.
            \item [(iv)] $B$ is stable; that is, all its eigenvalues have negative real parts.
        \end{itemize}
    \end{assumption}
    Under Assumptions \ref{pref.asp} and \ref{LQMarket.asp}, the quadratic term structure model given by (\ref{QTSM1}) and (\ref{QTSM2}) satisfies Assumptions \ref{coeff.asp}--\ref{smooth.asp}. Therefore, all main results in Sects.~\ref{flow.rep.sec} and \ref{Myopic.Turnpike} are valid.
    \begin{remark}
    Assumption~\ref{LQMarket.asp}(iv) implies that $Y$ is a multivariate Ornstein--Uhlenbeck process. In particular, the model includes well-known short-rate models, such as the Vasicek model and special versions of the CIR model; for details, see \cite{ADG}.
    \end{remark}
    \begin{remark}
        We restrict our analysis to the quadratic term structure model given by (\ref{QTSM1}) and (\ref{QTSM2}) for the following reason. As Proposition~\ref{excess.hedging.bdd} says, to estimate the rates for excess hedging demands, we have to compute the asymptotic behavior of the stochastic factor process $Y$ under myopic probability measures $\mathbb{Q}^\gamma_T, \;  \gamma \in [0, 1],$ defined by 
        \[
            d\mathbb{Q}^\gamma_T \coloneqq \frac{H_T^\gamma}{\mathbb{E}[H_T^\gamma]}d\mathbb{P}.
        \]
        When $\gamma = 1$, the myopic probability measure $\mathbb{Q}^1_T$ is a $T$-forward measure. When $\gamma \in ( 0, 1)$, the measures are given by the optimal wealth processes $\hat{X}^T_T$ for CRRA investors:
        \[
            d\mathbb{Q}^\gamma_T = \frac{U(\hat{X}^{T}_T)}{\mathbb{E}\left[U(\hat{X}^{T}_T)\right]} d\mathbb{P},      
        \]
        where, $U(x) \coloneqq \frac{x^p}{p}, \; p \coloneqq \frac{\gamma}{\gamma - 1}$. As Guasoni et~al.\ \cite{GKRX} say, by using results on CRRA utility maximization problems, we can represent the probability density processes of $\mathbb{Q}^\gamma_T$ as stochastic exponential martingales in terms of the optimal portfolio processes. Therefore, in the quadratic term structure model given by (\ref{QTSM1}) and (\ref{QTSM2}), we can analyze the asymptotic behavior of $Y$ under $\mathbb{Q}^\gamma_T$ by using the asymptotic behavior of the solutions to Riccati differential equations. Because the optimal portfolios in nonlinear stochastic factor models (\ref{gen.stoch.fac.model}) are given by solutions of semilinear PDEs (see Nagai \cite{Na}), our analysis will require more-advanced techniques, and further research for nonlinear stochastic factor models will be addressed in future work. 
    \end{remark}

Under these assumptions, we can also derive the convergence rates for excess hedging demands.
\begin{theorem}\label{excess.turnpike}
Under Assumptions \ref{pref.asp} and \ref{LQMarket.asp},
\[
    | \hat{\pi}^{1, H} (T, x, y)  - \hat{\pi}^{2, H} (T, x, y)| = O \left( \mathbb{E} [H_T]^{1 - \frac{\alpha}{q - 1}} \right), \quad (T \nearrow \infty).
\]
\end{theorem}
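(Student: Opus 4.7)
The plan is to start from Proposition~\ref{feedback.form}, which gives
\[
    \hat{\pi}^{i,H}(T,x,y) = (\sigma^\top(y))^{-1} a^\top(y)\, \mathbb{E}^y\!\left[ \nabla_y H_T \bigl( I_i(\hat{\lambda}_i H_T) + \hat{\lambda}_i H_T I_i'(\hat{\lambda}_i H_T)\bigr)\right],
\]
and to split $\hat{\pi}^{1,H} - \hat{\pi}^{2,H}$ by adding and subtracting the mixed term $I_2(\hat{\lambda}_1 H_T) + \hat{\lambda}_1 H_T I_2'(\hat{\lambda}_1 H_T) = q\, (\hat{\lambda}_1 H_T)^{q-1}$. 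This produces (a) a \emph{preference-closeness} part whose integrand is pointwise dominated, by Assumption~\ref{pref.asp}(ii), by $2K(1 + (\hat{\lambda}_1 H_T)^\alpha)$; and (b) a \emph{Lagrange-multiplier} part proportional to $q(\hat{\lambda}_1^{q-1} - \hat{\lambda}_2^{q-1})\, \mathbb{E}^y[\nabla_y H_T\, H_T^{q-1}]$.

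Using $\nabla_y H_s = -H_s L_s$, each remaining expectation takes the form $\mathbb{E}^y[H_T^{1+\gamma}\, |L_T|]$ for $\gamma \in \{0, \alpha, q-1\}$. I would apply the change of measure to the myopic probability measure $\mathbb{Q}^{1+\gamma}_T$ introduced in the remark preceding the theorem to factor this as $\mathbb{E}[H_T^{1+\gamma}] \cdot \mathbb{E}^{\mathbb{Q}^{1+\gamma}_T}[|L_T|]$. In the quadratic term structure model (\ref{QTSM1})--(\ref{QTSM2}) the density $H_T^{1+\gamma}/\mathbb{E}[H_T^{1+\gamma}]$ is a quadratic exponential in $Y$, so $\mathbb{E}[H_T^{1+\gamma}]$ is computed by a matrix Riccati ODE whose unique stable equilibrium exists under Assumption~\ref{LQMarket.asp}(iii)--(iv); this gives the exact exponential decay rate of $\mathbb{E}[H_T^{1+\gamma}]$. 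Under $\mathbb{Q}^{1+\gamma}_T$ the factor $Y$ remains a (time-inhomogeneous) stable Ornstein--Uhlenbeck process with coefficients converging to a stationary limit as $T\nearrow\infty$, and from this one obtains at-most polynomial-in-$T$ (indeed $O(T)$) upper bounds for $\mathbb{E}^{\mathbb{Q}^{1+\gamma}_T}[|L_T|]$ by developing $L_T$ as an integral of affine functions of $Y^{(t,y)}$ and $\nabla_y Y^{(t,y)}$ along the path. Combined with the exponential decay, each term contributes at most $O(\mathbb{E}[H_T^{1+\gamma}]) \cdot T^{\mathrm{poly}}$.

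Finally, I would verify that all three exponential rates match the target $\mathbb{E}[H_T]^{1-\alpha/(q-1)}$. The hardest of these is the $\gamma = \alpha$ term: comparing Riccati leading eigenvalues shows that $\mathbb{E}[H_T^{1+\alpha}]$ decays like $\mathbb{E}[H_T]^{1+\alpha} \cdot (\text{lower order})$, and the prefactor $\hat{\lambda}_1^\alpha$ coming from $(\hat{\lambda}_1 H_T)^\alpha$ supplies the extra $\mathbb{E}[H_T^q]^{-\alpha/(q-1)}$ needed to reassemble the clean $1 - \alpha/(q-1)$ exponent. For the Lagrange-multiplier piece, differentiating the budget constraint $x = \mathbb{E}^y[H_T I_i(\hat{\lambda}_i H_T)]$ in $\hat{\lambda}_i$ (as in the proof of Theorem~\ref{myopic.turnpike}) yields $|\hat{\lambda}_1^{q-1} - \hat{\lambda}_2^{q-1}| = O(\mathbb{E}[H_T^{1+\alpha}]/\mathbb{E}[H_T^q])$, which combined with the Riccati bound on $\mathbb{E}^y[\nabla_y H_T\, H_T^{q-1}]$ produces the same rate.

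The main obstacle will be step two: proving uniform-in-$T$ polynomial control of $\mathbb{E}^{\mathbb{Q}^{1+\gamma}_T}[|L_T|]$ when the drift of $Y$ under $\mathbb{Q}^{1+\gamma}_T$ is itself time-dependent through the Riccati solution. Assumption~\ref{LQMarket.asp}(iii)--(iv) ensures that this Riccati ODE converges to a stable algebraic equilibrium, so the shifted drift matrix eventually becomes uniformly Hurwitz; this, together with boundedness of $\Lambda$, gives the required uniform Gaussian moment estimates on $Y$ and $\nabla_y Y$ and closes the argument.
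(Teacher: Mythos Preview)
Your decomposition via adding and subtracting the $I_2(\hat{\lambda}_1 H_T)$ term, and your use of the myopic measures $\mathbb{Q}^{1+\gamma}_T$ to factor $\mathbb{E}[H_T^{1+\gamma}|L_T|]$, match the paper's approach exactly (see Proposition~\ref{excess.hedging.bdd}). The Lagrange-multiplier piece also simplifies just as you describe, via the budget identities giving $(\hat{\lambda}^{1,T})^{q-1} - (\hat{\lambda}^{2,T})^{q-1} = -\mathbb{E}[H_T d(\hat{\lambda}^{1,T}H_T)]/\mathbb{E}[H_T^q]$.

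However, there is a genuine gap in your control of $\mathbb{E}^{\mathbb{Q}^{1+\gamma}_T}[|L_T|]$: an $O(T)$ bound is \emph{not} enough to produce $O(\mathbb{E}[H_T]^{1-\alpha/(q-1)})$, since the polynomial prefactor would spoil the exact rate in the theorem. The paper obtains $\mathbb{E}^{\mathbb{Q}^{1+\gamma}_T}[|L_T|] = O(1)$ (Proposition~\ref{LT.bdd}), and the key observation you are missing is that in the quadratic model the derivative flow is explicit and deterministic: since $dY_t = (b+BY_t)\,dt + \Lambda\, dW_t$ with constant diffusion, one has $\nabla_y Y_t = e^{Bt}$. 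Because $B$ is stable (Assumption~\ref{LQMarket.asp}(iv)), $|e^{Bt}| \le M e^{-\omega t}$, and every integrand in $L_T$ carries this exponentially decaying factor. Combined with $\sup_{t\in[0,T]}\mathbb{E}^{\mathbb{Q}^{1+\gamma}_T}[|Y_t|^2] = O(1)$ (your ``stable OU'' moment bound, made precise in Proposition~\ref{Y.myopic.upb}), this makes the time integrals and the stochastic integral in $L_T$ uniformly bounded in $T$, not merely $O(T)$.

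A second, minor point: your plan to compare Riccati leading eigenvalues in order to relate $\mathbb{E}[H_T^{1+\alpha}]$ to $\mathbb{E}[H_T]^{1-\alpha/(q-1)}$ is unnecessarily heavy. The paper obtains this directly from H\"older's inequality, $\mathbb{E}[H_T^{1+\alpha}] \le \mathbb{E}[H_T^q]^{\alpha/(q-1)}\,\mathbb{E}[H_T]^{1-\alpha/(q-1)}$ (Lemma~\ref{holder.lem}), together with $(\hat{\lambda}^{1,T})^\alpha \le (x-M)^{\alpha/(q-1)}\,\mathbb{E}[H_T^q]^{-\alpha/(q-1)}$ from the budget identity (Proposition~\ref{d.bound}). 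This avoids any spectral comparison of Riccati solutions and gives the clean exponent immediately.
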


By combining the above theorem with Theorem~\ref{myopic.turnpike}, we obtain the convergence rates for the optimal feedback functions $\hat{\pi}^1 = \hat{\pi}^{1, M} + \hat{\pi}^{1, H}$.
\begin{corollary}
    Under Assumptions \ref{pref.asp} and \ref{LQMarket.asp},
\[
    | \hat{\pi}^{1} (T, x, y)  - \hat{\pi}^{2} (T, x, y)| = O \left( \mathbb{E} [H_T]^{1 - \frac{\alpha}{q - 1}} \right), \quad (T \nearrow \infty).
\]
\end{corollary}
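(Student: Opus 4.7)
My approach is to use the additive decomposition $\hat{\pi}^i = \hat{\pi}^{i,M} + \hat{\pi}^{i,H}$ from Proposition \ref{feedback.form} and reduce the corollary to the two convergence-rate estimates already proven in the previous subsections. Specifically, by the triangle inequality,
\begin{equation*}
|\hat{\pi}^1(T,x,y) - \hat{\pi}^2(T,x,y)| \leq |\hat{\pi}^{1,M}(T,x,y) - \hat{\pi}^{2,M}(T,x,y)| + |\hat{\pi}^{1,H}(T,x,y) - \hat{\pi}^{2,H}(T,x,y)|,
\end{equation*}
so it suffices to bound each term on the right-hand side by a constant multiple of $\mathbb{E}[H_T]^{1-\alpha/(q-1)}$.

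Before invoking Theorems \ref{myopic.turnpike} and \ref{excess.turnpike}, the one thing to verify is that their hypotheses hold under the standing assumptions of the corollary. Theorem \ref{excess.turnpike} applies directly under Assumptions \ref{pref.asp} and \ref{LQMarket.asp}. Theorem \ref{myopic.turnpike} requires Assumptions \ref{coeff.asp}--\ref{pref.asp}; however, the paragraph immediately following Assumption \ref{LQMarket.asp} records that the coefficients specified by (\ref{QTSM1})--(\ref{QTSM2}) satisfy Assumptions \ref{coeff.asp}--\ref{smooth.asp} whenever Assumption \ref{LQMarket.asp} is in force, so Assumptions \ref{coeff.asp}--\ref{pref.asp} are all available. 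Applying the two theorems then yields $O(\mathbb{E}[H_T]^{1-\alpha/(q-1)})$ for each summand, and summing the implicit constants completes the argument. There is no substantive obstacle — the corollary is essentially a bookkeeping consequence of the two main theorems of Sects.~\ref{Myopic.Turnpike} and \ref{Excess.Turnpike}, whose proofs contain all the real work (Malliavin-calculus flow representations for the myopic piece and Riccati-equation asymptotics under the myopic measures $\mathbb{Q}^\gamma_T$ for the hedging piece).
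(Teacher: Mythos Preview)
Your proposal is correct and matches the paper's own approach: the paper states the corollary immediately after Theorem~\ref{excess.turnpike} with the remark that it follows ``by combining the above theorem with Theorem~\ref{myopic.turnpike}'' via the decomposition $\hat{\pi}^i = \hat{\pi}^{i,M} + \hat{\pi}^{i,H}$, which is precisely the triangle-inequality argument you give. Your explicit check that Assumptions~\ref{coeff.asp}--\ref{smooth.asp} are implied by Assumption~\ref{LQMarket.asp} (so that Theorem~\ref{myopic.turnpike} applies) is a useful clarification that the paper leaves implicit.
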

Moreover, by considering portfolio proportions rather than dollar amounts, we can prove the uniform turnpike theorem for excess hedging demands and obtain its convergence rate. 
\begin{theorem}\label{uni.excess.turnpike}
Under Assumptions \ref{pref.asp} and \ref{LQMarket.asp}, for any $\epsilon > 0$, 
    \[
    \sup_{x > \epsilon} \left| \frac{\hat{\pi}^{1, H} (T, x, y)}{x}  - \frac{\hat{\pi}^{2, H} (T, x, y)}{x} \right| = O \left( \mathbb{E} [H_T]^{1 - \frac{\alpha}{q - 1}} \right), \quad (T \nearrow \infty).
\]
\end{theorem}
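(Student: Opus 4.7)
The plan is to mirror the argument of Theorem~\ref{excess.turnpike} while tracking the dependence on $x$ carefully, in the same spirit that Theorem~\ref{uni.conv.myopic} refines Theorem~\ref{myopic.turnpike}. The key structural observation is that the CRRA portfolio proportion $\hat{\pi}^{2,H}(T,x,y)/x$ is \emph{independent of $x$}: since $I_2(\hat{\lambda}_2 H_T)+\hat{\lambda}_2 H_T I_2^\prime(\hat{\lambda}_2 H_T)=q(\hat{\lambda}_2 H_T)^{q-1}$ and the Lagrange constraint gives $\hat{\lambda}_2^{q-1}=x/\mathbb{E}^y[H_T^q]$, the factor $x$ cancels explicitly. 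Hence it suffices to show that $\hat{\pi}^{1,H}(T,x,y)/x$ converges to this $x$-free limit, uniformly in $x>\epsilon$.

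Writing $G_i(z):=I_i(z)+zI_i^\prime(z)$, I would decompose
\begin{align*}
\frac{\hat{\pi}^{1,H}-\hat{\pi}^{2,H}}{x}(T,x,y)
&= \underbrace{\frac{(\sigma^\top(y))^{-1}a^\top(y)}{x}\,\mathbb{E}^y\!\left[\nabla_y H_T\bigl(G_1(\hat{\lambda}_1 H_T)-G_2(\hat{\lambda}_1 H_T)\bigr)\right]}_{=:\,(\mathrm{I})} \\
&\quad + \underbrace{q\,\frac{\hat{\lambda}_1^{q-1}-\hat{\lambda}_2^{q-1}}{x}\,(\sigma^\top(y))^{-1}a^\top(y)\,\mathbb{E}^y\!\left[\nabla_y H_T\cdot H_T^{q-1}\right]}_{=:\,(\mathrm{II})}.
\end{align*}
For $(\mathrm{I})$, Assumption~\ref{pref.asp}(ii) yields $|G_1(z)-G_2(z)|\leq 2K(1+z^\alpha)$, so the integrand is controlled by $|\nabla_y H_T|\bigl(1+\hat{\lambda}_1^\alpha H_T^\alpha\bigr)$ and the prefactor $1/x$ is uniformly bounded by $1/\epsilon$. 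For $(\mathrm{II})$, I would use the constraint $x=\mathbb{E}^y[H_T I_1(\hat{\lambda}_1 H_T)]$ together with $|I_1(z)-I_2(z)|\leq K(1+z^\alpha)$ to obtain
\[
\bigl|\hat{\lambda}_1^{q-1}\mathbb{E}^y[H_T^q]-x\bigr| \leq K\bigl(\mathbb{E}^y[H_T]+\hat{\lambda}_1^{\alpha}\mathbb{E}^y[H_T^{1+\alpha}]\bigr),
\]
which, combined with $\hat{\lambda}_2^{q-1}=x/\mathbb{E}^y[H_T^q]$, gives a bound on $|\hat{\lambda}_1^{q-1}-\hat{\lambda}_2^{q-1}|/x$ of the same order as $(\mathrm{I})$ once one divides by $x\ge \epsilon$.

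The remaining step is to estimate the moments $\mathbb{E}^y[|\nabla_y H_T|\,H_T^\gamma]$ for $\gamma\in\{0,\alpha,q-1\}$ and $\mathbb{E}^y[H_T^\gamma]$ for $\gamma\in\{1,q,1+\alpha\}$, and simultaneously to show that $\hat{\lambda}_1$ stays in a regime where these bounds combine correctly. In the quadratic term structure setting, this is accomplished exactly as in the proof of Theorem~\ref{excess.turnpike}: one changes to the myopic measures $\mathbb{Q}^\gamma_T$ introduced around the statement of Theorem~\ref{excess.turnpike}, represents $Y$ under $\mathbb{Q}^\gamma_T$ via the solution of the associated matrix Riccati equation (whose stability is guaranteed by Assumption~\ref{LQMarket.asp}(iii)--(iv)), and reads off the exponential decay of each moment. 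Balancing $\alpha$ against $q-1$ when combining the two terms produces the rate $\mathbb{E}[H_T]^{1-\alpha/(q-1)}$. The main obstacle I anticipate is not new estimation but bookkeeping: one must show that the $\hat{\lambda}_1$-dependent error bounds remain uniform in $x\in(\epsilon,\infty)$ even when $\hat{\lambda}_1$ is large (i.e.\ $x$ near $\epsilon$), which is handled by noting that both Lagrangians enter $(\mathrm{II})$ only through the scale-invariant ratio $\hat{\lambda}_i^{q-1}/x$, whose size is controlled by $1/\mathbb{E}^y[H_T^q]$ independently of $x$, so the Riccati-based moment estimates of Theorem~\ref{excess.turnpike} transfer directly after division by $x$.
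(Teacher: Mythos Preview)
Your decomposition into $(\mathrm{I})$ and $(\mathrm{II})$ and the use of the myopic measures $\mathbb{Q}^\gamma_T$ is exactly what the paper does (this is the content of Proposition~\ref{excess.hedging.bdd} combined with Proposition~\ref{LT.bdd}); the paper's proof is literally the one-line instruction to repeat the argument of Theorem~\ref{uni.conv.myopic}. So the route is correct.

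There is, however, a real gap in your final paragraph. You claim that $\hat{\lambda}_1^{q-1}/x$ is ``controlled by $1/\mathbb{E}^y[H_T^q]$ independently of $x$''. This is not automatic. From the constraint one gets
\[
\frac{\hat{\lambda}_1^{q-1}}{x}=\frac{1}{\mathbb{E}^y[H_T^q]}\cdot\frac{x-G(T,x,y)}{x},\qquad G(T,x,y)\coloneqq\mathbb{E}^y\!\bigl[H_T\,d(\hat{\lambda}_1 H_T)\bigr],
\]
and while $G\leq x$ always (see (\ref{ineq.Hd})), nothing a priori prevents $G$ from being very negative, so $(x-G)/x$ is not obviously bounded. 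Showing $\sup_{x>\epsilon}(x-G)/x<\infty$ is precisely Lemma~\ref{uni.M.lb}, and its proof is not bookkeeping: it is a bootstrap that feeds the bound $|G|\le K\{C+(x-G)^{\alpha/(q-1)}\}\mathbb{E}[H_T]^{1-\alpha/(q-1)}$ back into itself via the elementary Lemma~\ref{f.low.est}. Without this step your uniformity in $x$ is unproved. Note also that the same issue appears already in $(\mathrm{I})$, not only in $(\mathrm{II})$: the factor $\hat{\lambda}_1^{\alpha}=(\hat{\lambda}_1^{q-1})^{\alpha/(q-1)}$ carries the same hidden dependence on $(x-G)/x$, so bounding $1/x$ by $1/\epsilon$ alone does not suffice there either. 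Once you invoke Lemma~\ref{uni.M.lb}, both terms are handled exactly as in the proof of Theorem~\ref{uni.conv.myopic} and the argument closes.
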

\begin{corollary}
Under Assumptions \ref{pref.asp} and \ref{LQMarket.asp}, for any $\epsilon > 0$, 
    \[
    \sup_{x > \epsilon} \left| \frac{\hat{\pi}^{1} (T, x, y)}{x}  - \frac{\hat{\pi}^{2} (T, x, y)}{x} \right| = O \left( \mathbb{E} [H_T]^{1 - \frac{\alpha}{q - 1}} \right), \quad (T \nearrow \infty).
\]
\end{corollary}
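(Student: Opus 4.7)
The plan is to reduce this corollary to the two uniform turnpike estimates already proved for the two components of the feedback function. Under Assumptions \ref{pref.asp} and \ref{LQMarket.asp} the quadratic term structure model (\ref{QTSM1})--(\ref{QTSM2}) satisfies the standing Assumptions \ref{coeff.asp}--\ref{smooth.asp}, so the decomposition
\[
    \hat{\pi}^{i}(T, x, y) = \hat{\pi}^{i, M}(T, x, y) + \hat{\pi}^{i, H}(T, x, y), \quad i = 1, 2,
\]
from Proposition~\ref{feedback.form} is available. I would start by dividing this identity by $x$ and applying the triangle inequality to obtain
\[
    \left| \frac{\hat{\pi}^{1}(T, x, y)}{x} - \frac{\hat{\pi}^{2}(T, x, y)}{x} \right|
    \leq \left| \frac{\hat{\pi}^{1, M}(T, x, y)}{x} - \frac{\hat{\pi}^{2, M}(T, x, y)}{x} \right|
    + \left| \frac{\hat{\pi}^{1, H}(T, x, y)}{x} - \frac{\hat{\pi}^{2, H}(T, x, y)}{x} \right|.
\]

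Next, fixing $\epsilon > 0$ and taking the supremum over $x > \epsilon$, the right-hand side decouples into two terms. The myopic term is controlled by Theorem~\ref{uni.conv.myopic}, which gives
\[
    \sup_{x > \epsilon} \left| \frac{\hat{\pi}^{1, M}(T, x, y)}{x} - \frac{\hat{\pi}^{2, M}(T, x, y)}{x} \right| = O\bigl(\mathbb{E}[H_T]^{1 - \frac{\alpha}{q-1}}\bigr),
\]
and the hedging term is controlled by Theorem~\ref{uni.excess.turnpike}, which supplies the same rate. Summing the two $O\bigl(\mathbb{E}[H_T]^{1 - \frac{\alpha}{q-1}}\bigr)$ bounds yields the claimed uniform rate.

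Because both underlying uniform turnpike theorems have already been established in the excerpt, there is essentially no obstacle: the argument is a one-line triangle inequality combined with the subadditivity of the supremum. The only point requiring a brief remark is the verification that Assumptions \ref{coeff.asp}--\ref{smooth.asp} are indeed inherited by the quadratic term structure model under Assumptions \ref{pref.asp} and \ref{LQMarket.asp}, so that Theorems \ref{uni.conv.myopic} and \ref{uni.excess.turnpike} apply; this is already observed in the paragraph immediately following Assumption~\ref{LQMarket.asp} and may simply be cited.
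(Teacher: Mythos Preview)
Your proposal is correct and is exactly the intended argument: the paper states this result as an immediate corollary of Theorems~\ref{uni.conv.myopic} and \ref{uni.excess.turnpike}, obtained by the decomposition $\hat{\pi}^i = \hat{\pi}^{i,M} + \hat{\pi}^{i,H}$ and the triangle inequality, with no additional proof given.
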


\subsection{Applications: optimal collective investment problems}\label{Application.sec}
In this subsection, we offer applications of our main results to optimal collective investment problems. For detailed descriptions of the problems, see \cite{BCMN, JN}, for example. We consider the quadratic term structure model, given by (\ref{QTSM1}) and (\ref{QTSM2}), that satisfies Assumptions \ref{pref.asp}(iii) and \ref{LQMarket.asp} as described in Sect.~\ref{Excess.Turnpike}. We assume there are $n$ investors with CRRA utility functions $U_i(x) \coloneqq \frac{x^{p_i}}{p_i},\; (i = 1, \dots, n), \; -\infty < p_1 < \dots p_n \leq 0$, where the relative risk aversion level of the $i$-th investor is given by $\gamma_i \coloneqq 1-p_i$. At the beginning of the investment period ($t = 0$), investors delegate their investment management to a fund manager. At the end of the investment period ($t = T$), a fund manager distributes the aggregate terminal wealth among $n$ investors according to a sharing rule. In this paper, we consider two well-known sharing rules: a Pareto optimal sharing rule and a linear sharing rule.
\subsubsection{Pareto optimal sharing rule}
We assume that a fund manager chooses a Pareto optimal distribution of the terminal wealth, which is represented by 
\[
    \Tilde{U}(x) \coloneqq \max_{} \left\{ \sum_{i = 1}^n \beta_i U_i (x_i) \left| \; x_i \in \mathbb{R},\;   \sum_{i = 1}^n x_i = x \right. \right\},
\]
where $\beta_i$ is the weight granted to the $i$-th investor and satisfies $\beta_i > 0, \; (i = 1, \dots, n),\; \sum_{i = 1}^n \beta_i = 1$. \cite{BCMN} uses this utility function to analyze collective investment problems. $\Tilde{U}$ also appears as a utility function for a representative agent in the context of market equilibrium \cite[Chapter 4]{KS}. 
Because the operations of sup-convolution and addition are dual to each other, the inverse marginal utility of a fund manager is given by the sum of those of individuals.
\begin{lemma}
    $\Tilde{U}$ is a utility function in Definition \ref{Util.def} and the inverse marginal utility $\Tilde{I} \coloneqq (\Tilde{U})^{-1}$ is given by 
    \[
        \Tilde{I}(z) = \sum_{i = 1}^n I_i \left( \frac{z}{\beta_i} \right), \quad I_i (z) = z^{q_i - 1}, \quad q_i \coloneqq \frac{p_i}{p_i - 1}.
    \]
\end{lemma}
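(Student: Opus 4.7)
The plan is to approach this via convex duality combined with Lagrange multipliers. For fixed $x > 0$, the supremum defining $\Tilde{U}(x)$ is effectively over $(x_1, \dots, x_n) \in (0, \infty)^n$ with $\sum_i x_i = x$, since each $U_i$ satisfies $U_i(x_i) \searrow -\infty$ as $x_i \searrow 0$. Strict concavity of the $U_i$ together with the Inada conditions guarantees that this concave program has a unique interior maximizer $(x_1^\ast, \dots, x_n^\ast)$, characterized by the first-order condition $\beta_i U_i'(x_i^\ast) = \lambda$ for a common multiplier $\lambda > 0$. Inverting gives $x_i^\ast = I_i(\lambda/\beta_i)$, and the constraint yields the scalar equation
\[
    x = \sum_{i=1}^n I_i\!\left(\frac{\lambda}{\beta_i}\right),
\]
which uniquely determines $\lambda = \lambda(x) > 0$ because each $I_i$ is strictly decreasing from $\infty$ to $0$.

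The envelope theorem then identifies $\Tilde{U}'(x) = \lambda(x)$, so $\Tilde{U}'$ is precisely the inverse of the map $\lambda \mapsto \sum_{i=1}^n I_i(\lambda/\beta_i)$. This is the claimed formula $\Tilde{I}(z) = \sum_{i=1}^n I_i(z/\beta_i) = \sum_{i=1}^n (z/\beta_i)^{q_i - 1}$ with $q_i = p_i/(p_i - 1) \in [0, 1)$. From this representation the Inada conditions transfer directly from the $I_i$'s to $\Tilde{I}$, hence to $\Tilde{U}'$. Strict monotonicity of $\Tilde{U}$ is inherited from each $U_i$, and strict concavity of the sup-convolution of strictly concave functions is standard. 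For $C^2$ regularity I would apply the implicit function theorem to $x - \sum_i I_i(\lambda/\beta_i) = 0$: the derivative in $\lambda$ is $\sum_i \beta_i^{-1} I_i'(\lambda/\beta_i) < 0$, so $\lambda(\cdot)$ is $C^1$, and then $\Tilde{U}(x) = \sum_i \beta_i U_i(I_i(\lambda(x)/\beta_i))$ is $C^2$ (indeed smooth, since the $I_i$ are).

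The only mild technical point, rather than a real obstacle, is verifying that the Lagrange maximizer is interior; this is forced by the Inada conditions $U_i'(x_i) \to \infty$ as $x_i \searrow 0$, which rule out any $x_i^\ast = 0$. As an alternative to the Lagrangian route, one could dualize directly: with $W_i(y) \coloneqq \beta_i U_i(y)$, the identity $(\bigoplus_i W_i)^\ast = \sum_i W_i^\ast$ for Legendre conjugates of concave functions, combined with the standard relation $\Tilde{I} = -(\Tilde{U}^\ast)'$ and the elementary computation $(W_i^\ast)'(z) = -I_i(z/\beta_i)$, produces the same formula in a single line and avoids any discussion of the maximizer altogether.
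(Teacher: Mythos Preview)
Your argument is correct and is essentially the standard proof of this fact; the paper does not give its own argument here but simply refers to \cite[Appendix~A]{BCMN}, whose proof proceeds along the same Lagrange-multiplier/convex-duality lines you sketch. Both your primal (first-order conditions plus envelope theorem) and dual (additivity of Legendre conjugates under sup-convolution) routes are valid and yield the formula immediately.
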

\begin{proof}
    See \cite[Appendix~A]{BCMN}.
\end{proof}
By using our main results in Sect.~\ref{Excess.Turnpike}, we can show that the optimal feedback function of a fund manager converges to that of the most risk-seeking investor. The convergence rate is determined by the price of a zero-coupon bond and the relative difference between the relative risk aversion levels of the least risk-averse investor ($i = n$) and the second least risk-averse investor ($i = n - 1$). 
\begin{theorem}\label{Pareto.tunpike}
    Let $\hat{\pi}^P$ be the optimal feedback function of a fund manager with the Pareto optimal sharing rule and $\hat{\pi}^i$ be that of the $i$-th investor. Then
    \[
        \left| \hat{\pi}^P(T, x, y) - \hat{\pi}^n (T, x, y) \right| = O\left( \mathbb{E}[H_T]^{\frac{\gamma_{n-1} - \gamma_n}{\gamma_{n-1}}} \right), \quad (T \nearrow \infty).
    \]
\end{theorem}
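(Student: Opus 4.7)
The plan is to apply the corollary following Theorem~\ref{uni.excess.turnpike} (the combined convergence rate for the total feedback $\hat\pi = \hat\pi^M + \hat\pi^H$) to the pair $(U_1, U_2) = (\tilde U, U_n)$, where $U_n$ is the utility of the least risk-averse investor. The market-side hypotheses (Assumptions~\ref{coeff.asp}--\ref{smooth.asp} and~\ref{LQMarket.asp}) together with Assumption~\ref{pref.asp}(iii) are in force throughout Sect.~\ref{Application.sec}, and Assumption~\ref{pref.asp}(i) is satisfied with $p = p_n$; so the only nontrivial task is to verify Assumption~\ref{pref.asp}(ii) with the sharp exponent $\alpha$.

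First I would exploit the invariance of the optimal feedback function under positive affine transformations of $U_1$ (the remark after Assumption~\ref{pref.asp}) to rescale $\tilde U$ so that the leading $z \to 0$ singularity of its inverse marginal matches $I_n(z) = z^{q_n-1}$ exactly. Concretely, by the preceding lemma,
\[
\tilde I(\beta_n z) - I_n(z) \;=\; \sum_{i=1}^{n-1} (\beta_i/\beta_n)^{1-q_i}\, z^{q_i-1},
\]
with an analogous identity for $(\beta_n z)\tilde I'(\beta_n z) - z I_n'(z)$ carrying an extra factor $(q_i-1)$ in each summand. Since $p \mapsto q(p) = p/(p-1)$ is strictly decreasing and every $p_i \leq 0$, the exponents satisfy $0 > q_1-1 > \cdots > q_{n-1}-1 > q_n-1$, and for any $z > 0$ and $i \leq n-1$ the elementary split between $z \geq 1$ and $z < 1$ gives $z^{q_i-1} \leq 1 + z^{q_{n-1}-1}$. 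Hence both differences are uniformly bounded by $K\bigl(1+z^{q_{n-1}-1}\bigr)$, which verifies Assumption~\ref{pref.asp}(ii) with the sharp choice
\[
p = p_n, \qquad q = q_n, \qquad \alpha = q_{n-1} - 1 \in (q_n - 1, 0].
\]

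Applying the corollary then yields $|\hat\pi^P(T,x,y) - \hat\pi^n(T,x,y)| = O\bigl(\mathbb{E}[H_T]^{1-\alpha/(q-1)}\bigr)$, and using the identity $q_i - 1 = 1/(p_i - 1) = -1/\gamma_i$ one computes
\[
1 - \frac{\alpha}{q-1} \;=\; 1 - \frac{q_{n-1}-1}{q_n-1} \;=\; 1 - \frac{\gamma_n}{\gamma_{n-1}} \;=\; \frac{\gamma_{n-1}-\gamma_n}{\gamma_{n-1}},
\]
which is precisely the exponent in the statement.

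The only substantive subtlety is identifying the \emph{sharp} $\alpha$: after absorbing the leading $z^{q_n-1}$ contribution to $\tilde I$ into $I_n$ via the affine rescaling, the slowest-decaying remainder is the $z^{q_{n-1}-1}$ term coming from the \emph{second} least risk-averse investor, and this is what produces the gap $(\gamma_{n-1}-\gamma_n)/\gamma_{n-1}$ in the rate. All of the hard analytic work is packaged inside Theorems~\ref{myopic.turnpike} and~\ref{excess.turnpike}; the argument here is essentially a bookkeeping reduction to them.
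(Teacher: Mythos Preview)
Your proposal is correct and follows essentially the same route as the paper: verify the generalized version of Assumption~\ref{pref.asp}(ii) for the pair $(\tilde U, U_n)$ with $\alpha = q_{n-1}-1$ by subtracting off the leading $z^{q_n-1}$ term of $\tilde I$, bound the remaining sum by $K(1+z^{q_{n-1}-1})$, and then read off the exponent $1-\alpha/(q-1)=(\gamma_{n-1}-\gamma_n)/\gamma_{n-1}$. One small slip: the corollary you want is the one immediately after Theorem~\ref{excess.turnpike} (pointwise rate for the full feedback), not the one after Theorem~\ref{uni.excess.turnpike} (uniform rate for proportions).
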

\begin{proof}
    It is enough to check that the inequalities (\ref{gen.pref.asp1}) hold for $I_1(z) = \Tilde{I}(z), \; I_2(z) = z^{q_n - 1},\; \alpha = q_{n-1} - 1, \; C = \left( \frac{1}{\beta_n} \right)^{q_n - 1}$. Because $0 \leq q_n < q_{n-1} < \dots < q_1 < 1$, there exists some $K > 0$ such that for all $z > 0$, 
    \begin{align*}
        \left| \Tilde{I}(z) - \left( \frac{z}{\beta_n} \right)^{q_{n} - 1} \right| &= \sum_{i = 1}^{n-1} \left( \frac{z}{\beta_i} \right)^{q_i - 1} \\
        &\leq K(1 + z^{q_{n-1} - 1})
    \end{align*}
    and 
    \begin{align*}
        \left| z\Tilde{I}^\prime(z) - (q_n - 1) \left( \frac{z}{\beta_n} \right)^{q_{n} - 1} \right| &= \sum_{i = 1}^{n-1} (1 - q_i)\left( \frac{z}{\beta_i} \right)^{q_i - 1} \\
        &\leq K(1 + z^{q_{n-1} - 1})
    \end{align*}
    hold. Moreover, the identity
    \begin{align*}
       \frac{\alpha}{q - 1} =  \frac{q_{n-1} - 1}{q_n - 1} = \frac{1 - p_n}{1 - p_{n-1}} = \frac{\gamma_n}{\gamma_{n-1}}
    \end{align*}
    gives the convergence rate.
\end{proof}

\subsubsection{Linear sharing rule}
Here, we assume that a fund manager allocates the terminal wealth according to a linear sharing rule, where the $i$-th investor receives a fixed proportion $\alpha_i$ of the terminal wealth. This linear sharing rule is represented by 
\[
    U(x) \coloneqq  \sum_{i = 1}^n \beta_i U_i (\alpha_i x),
\]
where $\alpha_i, \;\beta_i > 0, \; (i = 1, \dots, n),\; \sum_{i = 1}^n \alpha_i = \sum_{i = 1}^n \beta_i = 1$. 
Although $U$ seems simpler than $\Tilde{U}$, we cannot derive the inverse marginal utility of $U$ in general. Therefore, to estimate the differences between inverse marginal utilities and their derivatives, we perform complex calculations; for the proof, see Sect.~\ref{proof.appl.sec}.
\begin{proposition}\label{lin.share.diffI}Let 
\begin{align*}
    U_1(x) &\coloneqq \sum_{i = 1}^n \beta_i \frac{(\alpha_i x)^{p_i}}{p_i} = \sum_{i=1}^n w_i \frac{x^{p_i}}{p_i}, \\
    U_2(x) &\coloneqq w_n \frac{x^{p_n}}{p_n},
\end{align*}
where $w_i \coloneqq \beta_i \alpha_i^{p_i} > 0$.
For any nonnegative $\beta \in (1 + p_{n-1} - p_n, 1)$, there exists $K > 0$ such that for all $z > 0$,
\begin{align}
    |I_1(z) - I_2(z)| &\leq K(1 + z^{\beta (q_n - 1)}), \label{lin.share.diffI1} \\
    |zI_1^\prime(z) - z I_2^\prime(z)| &\leq K (1 + z^{\beta (q_n - 1)}) \label{lin.share.diffI2}
\end{align}
hold.
\end{proposition}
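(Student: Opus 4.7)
The overall strategy is to exploit the identity $U_1'(I_1(z)) = z = U_2'(I_2(z))$ together with a mean value argument, splitting the analysis by the size of $z$. For $z$ bounded below, a direct calculation shows $I_j(z) \to 0$ and $zI_j'(z) = U_j'(I_j(z))/U_j''(I_j(z)) \to 0$ as $z \to \infty$ (in that regime the $i = 1$ term dominates both $U_1'$ and $U_1''$), so both left-hand sides in (\ref{lin.share.diffI1}) and (\ref{lin.share.diffI2}) are bounded on $[z_0, \infty)$ for any $z_0 > 0$, and the constant part of $K(1 + z^{\beta(q_n - 1)})$ absorbs them. The remainder focuses on the small-$z$ regime.

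First I would record that, because $p_i - 1 < p_n - 1$ for every $i < n$, one has $U_1'(x) = w_n x^{p_n - 1}(1 + O(x^{p_{n-1} - p_n}))$ as $x \to \infty$, which inverts to $c_1 z^{q_n - 1} \leq I_j(z) \leq c_2 z^{q_n - 1}$ on $(0, z_0]$ together with $I_1(z)/I_2(z) \to 1$. Setting $R(x) := U_1'(x) - U_2'(x) = \sum_{i < n} w_i x^{p_i - 1}$, the identity $U_1'(I_1(z)) = U_2'(I_2(z))$ rewrites as $U_2'(I_2(z)) - U_2'(I_1(z)) = R(I_1(z))$, and the mean value theorem produces $\xi_z$ between $I_1(z)$ and $I_2(z)$ with $I_1(z) - I_2(z) = -R(I_1(z))/U_2''(\xi_z)$. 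The dominant $i = n-1$ term gives $R(I_1(z)) \leq C\, I_1(z)^{p_{n-1} - 1}$, and $|U_2''(\xi_z)|^{-1} \leq C\, \xi_z^{2 - p_n} \leq C\, I_1(z)^{2 - p_n}$ by comparability of $\xi_z$ with $I_1(z)$. Multiplying these estimates yields
\begin{equation*}
|I_1(z) - I_2(z)| \leq C\, I_1(z)^{1 + p_{n-1} - p_n} \leq C'\, z^{(q_n - 1)(1 + p_{n-1} - p_n)}.
\end{equation*}
Since $q_n - 1 < 0$ and $\beta > 1 + p_{n-1} - p_n$, the right-hand side is dominated by $z^{\beta(q_n - 1)}$ on $(0, z_0]$, which combined with the bounded regime gives (\ref{lin.share.diffI1}).

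For (\ref{lin.share.diffI2}) I would write $zI_j'(z) = U_j'(I_j(z))/U_j''(I_j(z))$ and insert the intermediate quantity $\phi(I_1(z))$ with $\phi(x) := U_2'(x)/U_2''(x) = x/(p_n - 1)$:
\begin{equation*}
zI_1'(z) - zI_2'(z) = \left[ \frac{U_1'(I_1(z))}{U_1''(I_1(z))} - \phi(I_1(z)) \right] + \frac{I_1(z) - I_2(z)}{p_n - 1}.
\end{equation*}
After dividing numerator and denominator by $w_n x^{p_n - 1}$, the bracket equals $I_1(z)$ times a rational function of $\{I_1(z)^{p_i - p_n}\}_{i < n}$ that vanishes at $x = \infty$ to leading order $x^{p_{n-1} - p_n}$, hence is bounded by $C\, I_1(z)^{1 + p_{n-1} - p_n}$, matching the rate of $|I_1(z) - I_2(z)|$. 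The second term has already been controlled, and the passage to $z^{\beta(q_n - 1)}$ repeats verbatim.

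The main obstacle is the bookkeeping in the small-$z$ regime: the leading-order estimates for $R(I_1(z))$ and for the rational function underlying the bracket must be made uniform in $z \in (0, z_0]$, and one must verify that the strict inequality $\beta > 1 + p_{n-1} - p_n$ is exactly the slack needed to swallow the $(1 + o(1))$ corrections hidden in the conversion from $I_1(z)^{\kappa}$ to $z^{(q_n - 1)\kappa}$, at the cost of a slightly sub-optimal exponent compared with the leading asymptotic.
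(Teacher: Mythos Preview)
Your argument is correct and in fact slightly sharper than what is stated, but it proceeds along a different line from the paper's own proof. The paper first isolates an auxiliary lemma of ``sandwich'' type: if $U_1'(x+f(x))\le U_2'(x)\le U_1'(x+\tilde f(x))$ then $|I_1(z)-I_2(z)|\le\max\{f(I_2(z)),\tilde f(I_2(z))\}$, and likewise a companion estimate for $|zI_1'-zI_2'|$ in terms of $|ART_1-ART_2|$. It then \emph{constructs} $f(x)=x^\beta$, $\tilde f\equiv 0$ and spends most of the effort verifying $U_1'(x+x^\beta)\le U_2'(x)$ for large $x$ via a monotonicity analysis of the quotient $h(x)=\bigl(U_2'(x)-U_1'(x+x^\beta)\bigr)/\bigl(w_n(x+x^\beta)^{p_n-1}\bigr)$. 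By contrast, you bypass the sandwich entirely: the identity $U_2'(I_2(z))-U_2'(I_1(z))=R(I_1(z))$ together with the mean value theorem and the explicit form of $U_2''$ gives $|I_1-I_2|\le C\,I_1(z)^{1+p_{n-1}-p_n}$ directly, which you then convert to powers of $z$ through the two-sided comparability $I_1(z)\asymp z^{q_n-1}$. Your $ART$ decomposition for the second inequality is essentially the same content as the paper's second lemma, just inlined.

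What each buys: the paper's sandwich lemma is reusable and makes the role of $\beta$ structural---the exponent $\beta$ enters as the choice of $f$, so the sub-optimality relative to the leading rate $1+p_{n-1}-p_n$ is built in from the start. Your route is more economical for this particular $U_1$ and actually produces the leading exponent; the restriction $\beta>1+p_{n-1}-p_n$ in the statement is then just a relaxation, not (as you worry in your closing paragraph) slack genuinely needed to absorb $1+o(1)$ corrections. Since you have the two-sided bound $c_1 z^{q_n-1}\le I_1(z)\le c_2 z^{q_n-1}$ with fixed constants on $(0,z_0]$, the passage from $I_1(z)^{\kappa}$ to $z^{(q_n-1)\kappa}$ costs only a multiplicative constant, not an $\varepsilon$ in the exponent.
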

\begin{theorem}\label{Linshre.turnpike}
    Let $\hat{\pi}^L$ be the optimal feedback function of a fund manager with the linear sharing rule and $\hat{\pi}^i$ be that of the $i$-th investor. Then for any nonnegative $\beta \in (1 +p_{n-1} - p_n, 1)$,
    \[
        \left| \hat{\pi}^L(T, x, y) - \hat{\pi}^n (T, x, y) \right| = O\left( \mathbb{E}[H_T]^{1 - \beta} \right), \quad (T \nearrow \infty)
    \]
    holds.
\end{theorem}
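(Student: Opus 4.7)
The plan is to treat Theorem~\ref{Linshre.turnpike} as a direct corollary of Proposition~\ref{lin.share.diffI} combined with the corollary of Theorem~\ref{excess.turnpike} (which packages the myopic and hedging estimates together). The substantive analytical work has already been carried out in Proposition~\ref{lin.share.diffI}; what remains is parameter bookkeeping and a brief invariance argument to identify the right-hand side with $\hat{\pi}^n$.

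First I would set $U_1(x) = \sum_{i=1}^n w_i x^{p_i}/p_i$, $U_2(x) = w_n x^{p_n}/p_n$, and $\alpha \coloneqq \beta(q_n - 1)$, where $q_n = p_n/(p_n - 1)$. For any nonnegative $\beta \in (1 + p_{n-1} - p_n, 1)$, Proposition~\ref{lin.share.diffI} supplies the bounds (\ref{I.asp})--(\ref{Iprime.asp}) with this $\alpha$. Since $q_n - 1 < 0$ and $\beta \in [0, 1)$, the chosen $\alpha$ lies in $(q_n - 1, 0]$, so Assumption~\ref{pref.asp}(ii) is satisfied with $q = q_n$ and this $\alpha$; Assumption~\ref{pref.asp}(iii) is part of the standing hypotheses for the quadratic term structure model of Sect.~\ref{Excess.Turnpike}.

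Next I would absorb the factor $w_n$ in $U_2$. By Proposition~\ref{feedback.form}, the optimal feedback function is invariant under positive affine transformations of the vN-M utility, so the optimal strategy for $w_n x^{p_n}/p_n$ coincides with $\hat{\pi}^n$ (the strategy for the normalized $x^{p_n}/p_n$). Equivalently one may invoke the generalized version of Assumption~\ref{pref.asp}(ii) from the remark of Sect.~\ref{Myopic.Turnpike} with $C = w_n^{1 - q_n}$ and rescale $U_1$ by $C^{1/(q_n - 1)}$ without changing $\hat{\pi}^L$.

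Applying the corollary of Theorem~\ref{excess.turnpike} with the above parameters then yields
$$\bigl|\hat{\pi}^L(T, x, y) - \hat{\pi}^n(T, x, y)\bigr| = O\bigl(\mathbb{E}[H_T]^{1 - \alpha/(q_n - 1)}\bigr) = O\bigl(\mathbb{E}[H_T]^{1 - \beta}\bigr), \quad (T \nearrow \infty),$$
since $\alpha/(q_n - 1) = \beta$. I do not anticipate any real obstacle here: the only mildly delicate point is that Assumption~\ref{LQMarket.asp}(iii) is $\gamma$-dependent with $\gamma \in \{q_n, 1 + \beta(q_n - 1)\}$, so its validity must be understood as part of the standing hypothesis for the chosen $\beta$. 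All of the genuine analytical difficulty---bounding differences of inverse marginal utilities for a sum of power functions---sits in Proposition~\ref{lin.share.diffI}, which is the preceding statement.
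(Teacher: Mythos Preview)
Your proposal is correct and follows exactly the paper's approach: apply Proposition~\ref{lin.share.diffI} to verify Assumption~\ref{pref.asp}(ii) with $\alpha=\beta(q_n-1)$, then invoke the main results of Sects.~\ref{Myopic.Turnpike}--\ref{Excess.Turnpike} and simplify the exponent $1-\alpha/(q_n-1)=1-\beta$. If anything, you are more explicit than the paper, which compresses the argument into a single line and leaves the handling of the constant $w_n$ and the $\gamma$-dependence of Assumption~\ref{LQMarket.asp}(iii) implicit.
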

\begin{proof}
    By Proposition~\ref {lin.share.diffI} and the main results in Sects.~\ref{Myopic.Turnpike} and \ref{Excess.Turnpike}, the convergence rate is given by 
    \[
    \mathbb{E}[H_T]^{1 - \frac{\beta(q_n - 1)}{q_n - 1}} = \mathbb{E}[H_T]^{1 - \beta}.
    \]
\end{proof}
\begin{remark}
If $1 + p_{n-1} - p_n \geq 0$, which is equivalent to $\gamma_{n-1} \leq \gamma_n + 1$ in terms of relative risk aversion levels $\gamma_i$,  then the convergence rate is given by $O(\mathbb{E}[H_T]^{\gamma_{n-1} - \gamma_n - \epsilon})$ for any $\epsilon > 0$. If $1 + p_{n-1} - p_n < 0$,  which is equivalent to $\gamma_{n-1} > \gamma_n + 1$, then the convergence rate is given by $O(\mathbb{E}[H_T])$. These facts mean that for a fund manager with a linear sharing rule, the convergence rate is determined by $\gamma_{n-1} - \gamma_n$, the absolute difference between the relative risk aversion levels of the least risk-averse investor and the second least risk-averse investor. On the other hand, for a fund manager with a Pareto optimal sharing rule, the convergence rate is determined by $\frac{\gamma_{n-1} - \gamma_n}{\gamma_{n-1}}$, the relative difference between the relative risk aversion levels for the same investors. In particular, because we consider the case $\gamma_{n-1} > 1$, the convergence rate under a linear sharing rule is faster than that under a Pareto optimal sharing rule.  
\end{remark}

\section{Proofs for main results}\label{proof.section}
\subsection{Proofs for Sect.~\ref{flow.rep.sec}}\label{flow.rep.proof}
The following lemma is used in the proof of Theorem~\ref{Optimal Terminal Wealth} and Lemma~\ref{HinD11}, which shows $H_T \in \mathbb{D}_{1, 1}$.

\begin{lemma}\label{fin.mom.beta}
Under Assumptions \ref{coeff.asp} and \ref{martingale.method.asp}, for any $T > 0,\; \lambda \geq 0,\; y \in \mathbb{R}^m$,
    \[
        \mathbb{E}^{\mathbb{Q}}_y \left[ \exp \left( - \lambda \int_0^T r(Y_t) dt \right) \right] < \infty.
    \]
\end{lemma}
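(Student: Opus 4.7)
The plan is to reduce the statement to an exponential moment bound on $\sup_{t\leq T}|Y_t|$ under $\mathbb{Q}$, and then exploit the boundedness of the diffusion coefficient $a$ from Assumption~\ref{coeff.asp}(iv).

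First, I would apply Girsanov's theorem to rewrite the SDE for $Y$ under $\mathbb{Q}$ as
\[
dY_t = \bigl(b(Y_t) - a(Y_t)\theta(Y_t)\bigr)\,dt + a(Y_t)\,dW^{\mathbb{Q}}_t,
\]
and observe that because $b$ is Lipschitz (hence of linear growth), $a$ is bounded, and $\theta$ has linear growth, the $\mathbb{Q}$-drift satisfies $|b(y) - a(y)\theta(y)| \leq C_0 + C_1|y|$ for constants $C_0, C_1$ independent of $T$. Next, Assumption~\ref{martingale.method.asp}(i) gives $r(Y_t) \geq r_0 + r_1^\top Y_t \geq r_0 - |r_1|\,|Y_t|$, so that
\[
\exp\!\left(-\lambda\int_0^T r(Y_t)\,dt\right) \leq e^{-\lambda r_0 T}\exp\!\left(\lambda T|r_1|\sup_{t\in[0,T]}|Y_t|\right).
\]
Thus the problem reduces to showing $\mathbb{E}^{\mathbb{Q}}_y\bigl[\exp(c\sup_{t\leq T}|Y_t|)\bigr] < \infty$ for every $c > 0$.

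To obtain this, I would write $Y_t = y + \int_0^t(b-a\theta)(Y_s)\,ds + M_t$ with $M_t := \int_0^t a(Y_s)\,dW^{\mathbb{Q}}_s$, take absolute values, insert the linear growth bound on the drift, and apply Gronwall's lemma pathwise to deduce
\[
\sup_{t\leq T}|Y_t| \leq e^{C_1 T}\bigl(|y| + C_0 T + \sup_{t\leq T}|M_t|\bigr).
\]
Hence it suffices to establish $\mathbb{E}^{\mathbb{Q}}_y\bigl[\exp(c'\sup_{t\leq T}|M_t|)\bigr] < \infty$ for every $c' > 0$.

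For the final step, since $a$ is bounded by some constant $K$, each scalar component $M^i$ is a continuous $\mathbb{Q}$-martingale whose quadratic variation is pathwise bounded, $\langle M^i\rangle_T \leq K^2 T$. The Bernstein-type exponential martingale inequality then yields $\mathbb{Q}\bigl(\sup_{t\leq T}|M^i_t|\geq x\bigr) \leq 2\exp\!\bigl(-x^2/(2K^2T)\bigr)$, so $\sup_{t\leq T}|M^i_t|$ is sub-Gaussian and carries exponential moments of every order; the vector case follows from $|M_t|\leq \sqrt{m}\max_i|M^i_t|$. I expect this last step to be the main obstacle, but it is entirely classical for continuous martingales with bounded quadratic variation and requires no structure of the model beyond Assumption~\ref{coeff.asp}(iv).
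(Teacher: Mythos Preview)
Your proposal is correct and follows essentially the same route as the paper: lower-bound $r$ by an affine function of $Y$, reduce via Gronwall to exponential moments of $\sup_{t\le T}|M_t|$, and exploit the bounded quadratic variation coming from Assumption~\ref{coeff.asp}(iv). The only cosmetic differences are that the paper inserts a (redundant) Jensen step at the outset to pass from $\exp\bigl(-\lambda\int_0^T r(Y_t)\,dt\bigr)$ to $\sup_t\mathbb{E}^{\mathbb{Q}}[\exp(-\lambda T r(Y_t))]$, and in the final step it invokes the Dambis--Dubins--Schwarz time change rather than Bernstein's inequality to obtain the sub-Gaussian tail.
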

\begin{proof}
    We fix $T > 0,\; y \in \mathbb{R}^m$ and for any $\lambda \geq 0$, 
    \begin{align*}
        \mathbb{E}^{\mathbb{Q}}_y \left[ \exp \left( - \lambda \int_0^T r(Y_t) dt \right) \right] &\leq \mathbb{E}^{\mathbb{Q}}_y \left[ \frac{1}{T} \int_0^T \exp \left( -\lambda T r(Y_t) \right) dt  \right] \\
        &\leq \sup_{t \in [0, T]} \mathbb{E}^{\mathbb{Q}}_y \left[ \exp \left( -\lambda T r(Y_t) \right) \right] \\
        &\leq \sup_{t \in [0, T]} \mathbb{E}^{\mathbb{Q}}_y \left[ \exp \left( -\lambda T \left( r_0 + r_1^\top Y_t \right) \right) \right] \\
        &= \exp \left( - \lambda T r_0 \right) \sup_{t \in [0, T]} \mathbb{E}^{\mathbb{Q}}_y \left[ \exp \left( -\lambda T r_1^\top Y_t  \right) \right]
    \end{align*}
holds, where the first inequality follows from Jensen's inequality, and the third inequality follows from Assumption~\ref{martingale.method.asp}(i). Therefore, it suffices to prove that for any $\lambda \geq 0$,
\[
    \sup_{t \in [0, T]} \mathbb{E}^{\mathbb{Q}}_y \left[ \exp \left( \lambda |Y_t|  \right) \right]
\]
holds. Because $Y$ satisfies the following SDE under $\mathbb{Q}$, 
\[
Y_t = y + \int_0^t \Tilde{b}(Y_s) ds + \int_0^t a(Y_s) dW^\mathbb{Q}_s,
\]
and $\Tilde{b}(y) = b(y) - a(y) \theta(y)$ is of linear growth, 
$|Y_t|$ satisfies 
\[
|Y_t| \leq |y| + KT + \sup_{t \in [0, T]} |M_t| + \int_0^t K|Y_s|ds,
\]
where $K$ is a some constant and $M_t \coloneqq \int_0^t a(Y_s) dW^{\mathbb{Q}}_s$. By Gronwall's inequality, 
\[
    |Y_t| \leq \left( |y| + KT + \sup_{t \in [0, T]}|M_t| \right) \exp(KT)
\]
holds, which leads to 
\begin{align*}
    \exp \left( \lambda |Y_t| \right) &\leq \exp \left\{ \lambda \left( |y| + KT + \sup_{t \in [0, T]}|M_t| \right) \exp(KT) \right\} \\
    &= C_1 \exp \left( C_2  \sup_{t \in [0, T]} |M_t| \right) \quad \left( C_1 \coloneqq \exp\{ \lambda (|y| + KT) \exp(KT)\}, \quad C_2 \coloneqq \lambda \exp(KT) \right)\\
    &\leq C_1 \exp\left( C_2 \sum_{i = 1}^m \sup_{t \in [0, T]} |M^i_t| \right) \\
    &\leq \sum_{i = 1}^m C_1 \exp\left( mC_2 \sup_{t \in [0, T]} |M^i_t| \right).
\end{align*}
As a result, it suffices to show that for any $\lambda \geq 0, \; i = 1, \dots, m$,
\[
    \mathbb{E}^{\mathbb{Q}}_y \left[ \exp \left( \lambda \sup_{t \in [0, T]} |M^i_t|  \right) \right] < \infty
\]
holds. The Dambis--Dubins--Schwarz theorem implies that there exists a Brownian motion $\beta^i$ such that $M^i_t = \beta^i_{\langle M\rangle_t}$, and because $a$ is bounded [Assumption~\ref{coeff.asp}(iv)], there exists a constant $L$ such that $\langle M \rangle_T \leq LT$ holds. Therefore, 
\begin{align*}
        \mathbb{E}^{\mathbb{Q}}_y \left[ \exp \left( \lambda \sup_{t \in [0, T]} |M^i_t|  \right) \right] &= \mathbb{E}^{\mathbb{Q}}_y \left[ \exp \left( \lambda \sup_{t \in [0, T]} |\beta^i_{\langle M \rangle_t}|  \right) \right] \\
        &\leq \mathbb{E}^{\mathbb{Q}}_y \left[ \exp \left( \lambda \sup_{t \in [0, LT]} |\beta^i_t|  \right) \right] \\
        &< \infty,
\end{align*}
which completes the proof.
\end{proof}

\begin{proof}[Proof of Theorem~\ref{Optimal Terminal Wealth}]
    By \cite[Theorem~3.7.6]{KS}, it suffices to check that
    \[
        \mathbb{E}[H_T] < \infty, \quad \mathbb{E}[H_T I(zH_T)]
    \]
    for any $z > 0$. By Lemma~\ref{fin.mom.beta}, 
    \begin{align*}
        \mathbb{E}[H_T] = \mathbb{E}^{\mathbb{Q}}\left[ \exp\left( - \int_0^T r(Y_t) dt \right) \right] < \infty.
    \end{align*}
    By Assumption~\ref{martingale.method.asp}(ii), 
    \begin{align*}
        H_T I(zH_T) \leq  \kappa \left( H_T + z^{-\rho} H_T^{1 - \rho}  \right) 
    \end{align*}
    holds, and by Jensen's inequality, 
    \begin{align*}
         \mathbb{E}[H_T I(zH_T)] &\leq  \kappa \left( \mathbb{E}[H_T] + z^{-\rho} \mathbb{E}\left[H_T^{1 - \rho}\right]  \right) \\
         &\leq \kappa \left( \mathbb{E}[H_T] + z^{-\rho} \mathbb{E}\left[H_T\right]^{1 - \rho} \right) \\
         &< \infty
    \end{align*}
    holds, which completes the proof.
\end{proof}
Throughout this subsection, we assume Assumptions \ref{coeff.asp}--\ref{smooth.asp}.
From Proposition~\ref{diff.sol.SDE} and Assumption~\ref{smooth.asp}(ii), the following holds.
\begin{lemma}\label{Factor.diff}
$Y = (Y^1, \dots, Y^m)$ satisfies the following.
\begin{itemize}
    \item [(i)] $Y^k_s \in \bigcap_{p \geq 1} \mathbb{D}_{p, 1}, \quad k = 1, \dots, m, \; s \in [0, T]$.
    \item [(ii)] $D_tY_s$ satisfies 
    \[
        D_t Y_s = a(Y_t) + \int_t^s Db(Y_u) D_t Y_u du + \sum_{l = 1}^n \int_t^s D a_{\cdot l}(Y_u) D_t Y_u dW^l_u
    \]
    for $t \in [0, s]$ and $D_t Y_s = 0$ for $t \in (s, T]$.
    \item [(iii)]  For $j = 1, \dots, n, \; p \in [1, \infty),$
        \[
            \sup_{r \in [0, T] } \mathbb{E} \left[ \sup_{s \in [0, T]} \left| D^j_r Y^k_s \right|^p \right] < \infty.
        \]

    \item [(iv)] $D_t Y_s = \nabla_y Y_s (\nabla_y Y_t)^{-1} a(Y_t)$ for $t \in [0, s]$, where $\nabla_y Y$ is an $\mathbb{R}^{m \times m}$-valued stochastic process satisfying
    \[
        \nabla_y Y_s = I + \int_0^s Db(Y_u) \nabla_y Y_u du + \sum_{j = 1}^n \int_t^s D a_{\cdot j}(Y_u) \nabla_y Y_u dW^j_u
    \]
    for $s \in [0, T]$ and $I \in \mathbb{R}^{m \times m}$ is the identity matrix. 
\end{itemize}

\end{lemma}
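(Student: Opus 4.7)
The plan is to get (i)--(iii) essentially for free from the cited Proposition~\ref{diff.sol.SDE}, and to obtain (iv) by a linear-SDE uniqueness argument.

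For parts (i)--(iii), I would observe that Assumption~\ref{smooth.asp}(ii) furnishes exactly the hypotheses of Proposition~\ref{diff.sol.SDE}: $b$ is $C^1$ with bounded Jacobian $Db$, and each column $a_{\cdot l}$ of $a$ is $C^1$ with bounded Jacobian $Da_{\cdot l}$; Assumption~\ref{coeff.asp}(iv) further gives Lipschitz continuity of $b, a$ and boundedness of $a$. Applying the proposition componentwise to each coordinate $Y^k_s$ then delivers $Y^k_s \in \bigcap_{p\geq 1}\mathbb{D}_{p,1}$, the linear SDE for $D_t Y_s$ on $s\in[t,T]$ stated in (ii), and the uniform moment bound in (iii). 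Since the regularity hypotheses are uniform in time and do not depend on $T$, the supremum over $s\in[0,T]$ in (iii) causes no extra difficulty beyond the standard estimate.

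For part (iv), the plan is to define the candidate process
\[
\Phi_s \coloneqq \nabla_y Y_s\, (\nabla_y Y_t)^{-1}\, a(Y_t), \qquad s \in [t,T],
\]
and identify $D_t Y_s = \Phi_s$ by matching SDEs on $[t,T]$. The key preliminary is that $\nabla_y Y_t$ is almost surely invertible for every $t$. Since $\nabla_y Y$ solves a linear matrix SDE with bounded (by Assumption~\ref{smooth.asp}(ii)) coefficients $Db(Y_u)$ and $Da_{\cdot l}(Y_u)$, a candidate inverse $V$ can be defined via the adjoint linear SDE, and Itô's formula applied to $\nabla_y Y_s V_s$ shows $\nabla_y Y_s V_s \equiv I$; this is the classical stochastic-flow argument. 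With invertibility in hand, applying Itô's formula to $\Phi_s$ and substituting the SDE for $\nabla_y Y$ gives
\[
\Phi_s = a(Y_t) + \int_t^s Db(Y_u)\,\Phi_u\,du + \sum_{l=1}^n \int_t^s Da_{\cdot l}(Y_u)\,\Phi_u\,dW^l_u,
\]
which is precisely the linear SDE satisfied by $D_t Y_s$ in (ii), with the common initial value $a(Y_t)$ at $s=t$. A Gronwall-based pathwise uniqueness argument --- available because the coefficients are linear in the unknown with bounded matrix-valued coefficients --- then yields $\Phi_s = D_t Y_s$ almost surely.

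The main obstacle, such as it is, lies in part (iv): the almost-sure invertibility of $\nabla_y Y_t$ must be handled cleanly, after which the SDE-matching step is routine. Once that is in place, parts (i)--(iii) amount to checking that the regularity of $b, a$ supplied by Assumption~\ref{smooth.asp}(ii) lines up with the hypotheses of the general Malliavin differentiability theorem already quoted. Both components of the argument are classical in Malliavin/stochastic flow theory, so no genuinely new estimates are needed beyond a careful bookkeeping of the assumptions.
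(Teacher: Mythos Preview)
Your proposal is correct, but you are doing more work than the paper does. In the paper, the entire lemma is dispatched in one line: ``From Proposition~\ref{diff.sol.SDE} and Assumption~\ref{smooth.asp}(ii), the following holds.'' The point is that Proposition~\ref{diff.sol.SDE} already contains a part (iv) stating $D_t X_s = \nabla_x X_s (\nabla_x X_t)^{-1} \sigma(X_t)$, so all four parts of the lemma---not just (i)--(iii)---are immediate specializations of the corresponding parts of that proposition once the regularity hypotheses are verified via Assumption~\ref{smooth.asp}(ii). Your detailed argument for (iv) via the adjoint linear SDE, invertibility of $\nabla_y Y_t$, and SDE-matching is exactly the classical proof of part (iv) of Proposition~\ref{diff.sol.SDE} itself (as found, e.g., in Nualart or Sass--Haussmann), so you are effectively re-deriving a piece of the quoted result rather than invoking it. Nothing is wrong, but you can shorten your write-up considerably by noting that (iv), like (i)--(iii), is already packaged in the cited proposition.
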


\begin{lemma}
    For $s, t \in [0, T]$,
    \begin{itemize}
        \item [(i)] $D_s (r(Y_t)) = Dr(Y_t) D_sY_t$,
        \item [(ii)] $D_s (\theta(Y_t)) = D\theta(Y_t) D_s Y_t$,
        \item [(iii)] $D_s \left( \frac{1}{2}| \theta(Y_t)|^2 \right) = \theta^\top(Y_t) D\theta(Y_t) D_s Y_t$,
        \item [(iv)] $D_s \int_0^T r(Y_t) dt = \int_s^T Dr(Y_t) D_sY_t dt$,
        \item [(v)] $D_s \left( \frac{1}{2} \int_0^T | \theta(Y_t)|^2 dt \right) = \int_s^T \theta^\top(Y_t) D\theta(Y_t) D_sY_t dt$,
        \item [(vi)] $D_s\int_0^T \theta^\top(Y_t) dW_t = \theta^\top(Y_s) + \left(  \int_s^T ( D\theta(Y_t) D_sY_t)^\top dW_t \right)^\top$. 
    \end{itemize}
\end{lemma}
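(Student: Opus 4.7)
The plan is to apply three standard Malliavin calculus results in succession: the chain rule for smooth functions of Malliavin differentiable random variables, the commutativity of $D_s$ with Lebesgue integration, and the analogous rule for It\^o stochastic integrals (which produces an extra boundary term). The regularity hypotheses needed for all three are precisely what Assumption~\ref{smooth.asp}(i)--(ii) provide: $r$ and $\theta$ are $C^1$ with polynomial-growth Jacobians, while Lemma~\ref{Factor.diff} guarantees $Y^k_s \in \bigcap_{p \geq 1} \mathbb{D}_{p, 1}$ together with uniform $L^p$-bounds on $\sup_{s \in [0,T]}|D^j_r Y^k_s|$.

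Items (i)--(iii) follow from the Malliavin chain rule. Since each $Y^k_t$ is Malliavin differentiable in every $L^p$ and $r, \theta$ are $C^1$ with polynomial-growth derivatives, the polynomial-growth chain rule (valid in $\mathbb{D}_{p',1}$ for suitable $p' < p$ thanks to the moment bounds on $Y_t$) directly yields (i) and (ii). For (iii), I would either apply the chain rule to the $C^1$ scalar map $y \mapsto \tfrac{1}{2} |\theta(y)|^2$, whose gradient is $\theta^\top(y) D\theta(y)$, or combine (ii) with the Leibniz rule $D_s(|\theta(Y_t)|^2) = 2\theta^\top(Y_t) D_s \theta(Y_t)$. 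For the time-integrated quantities (iv) and (v), I would commute $D_s$ with $\int_0^T \cdot\, dt$: joint measurability and $L^1$-integrability of the integrand follow from the polynomial-growth hypotheses together with Lemma~\ref{Factor.diff}(iii), and the shift of the lower limit from $0$ to $s$ is automatic because $D_s Y_t = 0$ for $t < s$ by Lemma~\ref{Factor.diff}(ii).

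Item (vi) is the most delicate, because of the It\^o integrator. The relevant formula is $D_s \int_0^T H_t^\top dW_t = H_s + \int_s^T (D_s H_t)^\top dW_t$, valid when $H$ is adapted and sufficiently regular, with the boundary term $H_s = \theta^\top(Y_s)$ arising because the integrator is Brownian. The main obstacle is verifying that $H_t = \theta(Y_t)$ satisfies the joint measurability and uniform-in-$t$ integrability conditions for this commutation formula to apply: one needs $t \mapsto \theta(Y_t)$ to lie in the Malliavin-Sobolev space $\mathbb{L}_{1,2}$ so that the Skorohod/It\^o integral formula for the Malliavin derivative applies. This is exactly where the polynomial-growth control on $\theta$ and $D\theta$, combined with the finite $L^p$-moments of $\sup_{s} |D_r Y_s|$ from Lemma~\ref{Factor.diff}(iii), do the essential work: they produce the bound $\mathbb{E}\int_0^T |D_s(\theta(Y_t))|^2 dt < \infty$ required to legitimize both the chain rule inside the stochastic integral and the commutation with $D_s$. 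Once this bookkeeping is in hand, (vi) follows, and together with (i)--(v) the proof of the lemma is complete.
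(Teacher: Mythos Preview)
Your proposal is correct and follows essentially the same route as the paper: items (i)--(iii) are handled by the Malliavin chain rule (Proposition~\ref{chain.rule} together with the polynomial-growth remark and Lemma~\ref{Factor.diff}), items (iv)--(v) by commuting $D_s$ with the Lebesgue integral (Proposition~\ref{diff.under.leb}), and item (vi) by the stochastic-integral commutation formula (Proposition~\ref{diff.under.SI}). The paper's own proof is simply a one-line invocation of these three propositions, so your expanded bookkeeping is a faithful unpacking of exactly that argument.
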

\begin{proof}
    (i), (ii), and (iii) follow from Proposition~\ref{chain.rule} with the remark below it, Lemma~\ref{Factor.diff}, and Assumption~\ref{smooth.asp}(i) that $r, \theta, Dr, D\theta$ are of polynomial growth.  (iv) and (v) follow from Proposition~\ref{diff.under.leb}, and (vi) follows from Proposition~\ref{diff.under.SI}.
\end{proof}
The following lemma implies finiteness for $p$-th moments of $\sup_{t \in [0, T]} |Y_t|$ and $|D_t Y_s|$ under the equivalent martingale measure $\mathbb{Q}$. In this lemma, we use (iii) of Assumption~\ref{smooth.asp}, which is a monotone condition for SDE (\ref{SDE.YY.Q}) that $(Y_s, D^1_tY_s, \dots D^n_tY_s)_{s \in [t, T]}$ satisfies under $\mathbb{Q}$.
\begin{lemma}\label{Y.DY.moment.Q}
For any $p \in [1, \infty)$,
    \begin{enumerate}
        \item [(i)] $\mathbb{E}^{\mathbb{Q}}[\sup_{s \in [0, T]}|Y_s|^p] < \infty,$
        \item [(ii)] $\sup_{s, t \in [0, T]} \mathbb{E}^{\mathbb{Q}} [|D_tY_s|^p] < \infty.$ 
    \end{enumerate}
\end{lemma}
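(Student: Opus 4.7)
The plan is to apply standard moment estimates for SDEs under the monotone growth condition supplied by Assumption~\ref{smooth.asp}(iii), combined with the boundedness of $a$ and $Da_{\cdot l}$.

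First, I would rewrite the dynamics of $Y$ and $D_t Y$ under $\mathbb{Q}$. By Girsanov's theorem,
\[
dY_s = \tilde b(Y_s)\, ds + a(Y_s)\, dW^{\mathbb{Q}}_s, \qquad \tilde b(y) := b(y) - a(y)\theta(y),
\]
and substituting $dW^l = dW^{\mathbb{Q},l} - \theta^l(Y_u)\,du$ into the linear SDE for $D^j_t Y_s$ from Lemma~\ref{Factor.diff}(ii) gives, for $s\ge t$,
\[
D^j_t Y_s = a_{\cdot j}(Y_t) + \int_t^s \tilde B(Y_u)\, D^j_t Y_u\, du + \sum_{l=1}^n \int_t^s Da_{\cdot l}(Y_u)\, D^j_t Y_u\, dW^{\mathbb{Q},l}_u,
\]
where $\tilde B(y) := Db(y) - \sum_{l=1}^n Da_{\cdot l}(y)\theta^l(y)$. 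Thus the augmented process $Z^{(t)}_s := (Y_s, D^1_t Y_s, \dots, D^n_t Y_s)$ solves a $\mathbb{Q}$-SDE whose drift inner product with the state is bounded by $K(1+|Y_s|^2+\sum_j|D^j_t Y_s|^2)$ by Assumption~\ref{smooth.asp}(iii), and whose diffusion has quadratic growth $|a(y)|^2 + \sum_{j,l}|Da_{\cdot l}(y) z_j|^2 \le C(1+|z|^2)$ since $a$ and $Da_{\cdot l}$ are bounded.

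For part (i), I would apply It\^o's formula to $\phi_p(y):=(1+|y|^2)^p$ for $p\ge 1$. The monotonicity of $\tilde b$ (Assumption~\ref{smooth.asp}(iii) with $z_j=0$) controls the first-order term by $C_p\phi_p(Y_s)$, and the second-order term is likewise $\le C_p\phi_p(Y_s)$ because $a$ is bounded. After localization and taking $\mathbb{Q}$-expectations, Gronwall's inequality yields $\sup_{s\in[0,T]}\mathbb{E}^{\mathbb{Q}}[\phi_p(Y_s)]<\infty$. A Burkholder--Davis--Gundy estimate on the martingale part of $|Y_s|^{2p}$ then upgrades this to $\mathbb{E}^{\mathbb{Q}}[\sup_{s\in[0,T]}|Y_s|^{2p}]<\infty$, which gives (i) for every $p\ge 1$.

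For part (ii), I would apply the same scheme to $\Phi_p(z):=(1+|z|^2)^p$ on $\mathbb{R}^{m(n+1)}$ evaluated at $Z^{(t)}_s$. The drift contribution is bounded by $C_p\Phi_p(Z^{(t)}_s)$ by the full form of Assumption~\ref{smooth.asp}(iii), and the trace/Hessian term is also bounded by $C_p\Phi_p(Z^{(t)}_s)$ thanks to the uniform bounds on $a$ and $Da_{\cdot l}$. Gronwall then gives
\[
\sup_{s\in[t,T]} \mathbb{E}^{\mathbb{Q}}\!\left[\Phi_p\!\left(Z^{(t)}_s\right)\right] \le e^{C_p T}\, \mathbb{E}^{\mathbb{Q}}\!\left[\Phi_p\!\left(Y_t,\, a_{\cdot 1}(Y_t),\dots,a_{\cdot n}(Y_t)\right)\right].
\]
Since $\|a\|_\infty<\infty$, the initial datum is dominated by $(1+|Y_t|^2+n\|a\|_\infty^2)^p$, whose expectation is uniformly bounded in $t\in[0,T]$ by part (i). Hence $\sup_{s,t\in[0,T]}\mathbb{E}^{\mathbb{Q}}[|D_t Y_s|^{2p}]<\infty$, using $D_t Y_s=0$ for $t>s$.

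The main technical care is in verifying that the second-order (Hessian-times-diffusion) contribution in It\^o's formula applied to $\Phi_p$ is genuinely absorbable into $C_p\Phi_p$; this reduces to observing that $|\nabla\Phi_p(z)|\le 2p(1+|z|^2)^{p-1}|z|$ and $|\mathrm{Hess}\,\Phi_p(z)|\lesssim(1+|z|^2)^{p-1}$, which together with the uniform quadratic bound on the diffusion coefficient yield the desired estimate. Once this is established, the rest is a standard Gronwall--BDG argument carried out under $\mathbb{Q}$ rather than $\mathbb{P}$.
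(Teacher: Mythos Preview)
Your proposal is correct and follows essentially the same route as the paper: rewrite the $Y$- and $D_tY$-dynamics under $\mathbb{Q}$ via Girsanov, then for (ii) form the augmented process $(Y_s,D^1_tY_s,\dots,D^n_tY_s)$ and exploit the monotone condition of Assumption~\ref{smooth.asp}(iii) together with the boundedness of $a$ and $Da_{\cdot l}$. The only cosmetic difference is that the paper dispatches both moment estimates by citing ready-made results from Mao's textbook (linear growth for (i), monotone condition for (ii)), whereas you sketch those proofs by hand via It\^o on $(1+|\cdot|^2)^p$, Gronwall, and BDG; the underlying argument is the same.
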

\begin{proof}
    \begin{enumerate}
        \item [(i)] $Y = (Y_t)_{t \in [0, T]}$ satisfies
        \begin{equation}\label{SDE.Y.Q}
        \begin{aligned}
            dY_s &= b(Y_s) ds + a (Y_s) dW_s \\
                 &= \left\{ b(Y_s) - a(Y_s) \theta(Y_s) \right\} ds + a(Y_s) dW^\mathbb{Q}_s.
        \end{aligned}
        \end{equation}
        Because $b$ and $\theta$ have linear growth and $a$ is bounded, SDE (\ref{SDE.Y.Q}) satisfies a linear growth condition and (i) follows from Theorem~4.4 in \cite[Chapter~2]{Mao}.
        \item [(ii)] Let $t \in [0, T]$. 
        $\mathbb{R}^m$-valued stochastic processes $(D^j_tY_s)_{s \in [t, T]}, \; j = 1, \dots, n$ satisfy
        \begin{equation*}
            \begin{aligned}
                D^j_tY_s &= a_{\cdot j}(Y_t) + \int_t^s Db(Y_u) D^j_tY_u du + \sum_{l=1}^n \int_t^s Da_{\cdot l} (Y_u) D^j_tY_u dW^l_u \\
                &= a_{\cdot j}(Y_t) + \int_t^s \left( Db(Y_u) - \sum_{l=1}^n  Da_{\cdot l} (Y_u) \theta^l(Y_u)  \right) D^j_tY_u du + \sum_{l=1}^n \int_t^s Da_{\cdot l} (Y_u) D^j_tY_u dW^{\mathbb{Q}, l}_u.
            \end{aligned}
        \end{equation*}
        Set $\mathbb{Y} = (Y_s, D^1_tY_s, \dots D^n_tY_s)^\top_{s \in [t, T]}$. Then, $\mathbb{R}^{m(n+1)}$-valued stochastic process $\mathbb{Y}$ satisfies 
        \begin{equation}\label{SDE.YY.Q}
            d\mathbb{Y}_s = B(\mathbb{Y}_s) ds + \sum_{l=1}^n A^l(\mathbb{Y}_s) dW^{\mathbb{Q}, l}_s,
        \end{equation}
        with $\mathbb{Y}_t = (Y_t, a_{\cdot 1}(Y_t), \dots,  a_{\cdot n}(Y_t))^\top$, where $B, A^l:\mathbb{R}^{m(n+1)} \to \mathbb{R}^{m(n+1)}, \; l = 1, \dots, n$ are given by
        \begin{align*}
            B(y, z_1, \dots, z_n) = \begin{pmatrix}
              b(y) - a(y) \theta(y) \\
              \left( Db(y) - \sum_{l = 1}^n Da_{\cdot l}(y) \theta^l(y) \right) z_1 \\
              \vdots \\
              \left( Db(y) - \sum_{l = 1}^n Da_{\cdot l}(y) \theta^l(y) \right) z_n 
            \end{pmatrix}, \quad A^l(y, z_1, \dots, z_n) = \begin{pmatrix}
                a_{\cdot l}(y) \\
                Da_{\cdot l}(y) z_1 \\
                \vdots \\
                Da_{\cdot l}(y) z_n \\
            \end{pmatrix}
        \end{align*}
        for $y, z_1 \dots, z_n \in \mathbb{R}^m$. By Assumption~\ref{smooth.asp}(iii), SDE (\ref{SDE.YY.Q}) satisfies a monotone condition, which leads to 
        \[
            \mathbb{E}^{\mathbb{Q}}[|D_sY_t|^p] \leq C (1 + \mathbb{E}^{\mathbb{Q}}[|Y_t|^p])
        \]
        from Theorem~4.1 in \cite[Chapter~2]{Mao}, where $C > 0$ does not depend on $s, t$. Using (i) in this lemma, we complete the proof.
    \end{enumerate}
\end{proof}

\begin{lemma}\label{L.tilde.mom}
Let $\Tilde{L} = (\Tilde{L}_s)_{s \in [0, T]}$ be given by
\[
    \Tilde{L}_s \coloneqq \int_s^T Dr(Y_t) D_sY_t dt +  \theta^\top(Y_s) + \left(  \int_s^T ( D\theta(Y_t) D_sY_t)^\top dW_t \right)^\top +  \int_s^T \theta^\top(Y_t) D\theta(Y_t) D_sY_t dt , \quad s \in [0, T].
\]
Then 
\[
    \sup_{s \in [0, T]} \mathbb{E}^{\mathbb{Q}}[|\Tilde{L}_s|^2] < \infty.
\]
\end{lemma}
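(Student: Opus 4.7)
The plan is to split $\Tilde{L}_s$ into its four summands and bound each of them in $L^2(\mathbb{Q})$ uniformly in $s \in [0,T]$. The three main inputs are: (a) the polynomial-growth bounds on $r,\theta,Dr,D\theta$ from Assumption~\ref{smooth.asp}(i); (b) the uniform-in-$(s,t)$ moment estimates $\mathbb{E}^\mathbb{Q}[\sup_{t}|Y_t|^p]<\infty$ and $\sup_{s,t}\mathbb{E}^\mathbb{Q}[|D_sY_t|^p]<\infty$ for every $p\ge 1$, supplied by Lemma~\ref{Y.DY.moment.Q}; and (c) the Girsanov representation $dW_t = dW^\mathbb{Q}_t - \theta(Y_t)\,dt$, needed to treat the stochastic integral under $\mathbb{Q}$.

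The pointwise term $\theta^\top(Y_s)$ is immediate: the linear growth of $\theta$ together with Lemma~\ref{Y.DY.moment.Q}(i) gives $\sup_s \mathbb{E}^\mathbb{Q}[|\theta(Y_s)|^2]<\infty$. For the two Lebesgue-integral terms, I would first apply the path-wise Cauchy--Schwarz inequality in $t$, for instance
\[
\left|\int_s^T Dr(Y_t)D_sY_t\,dt\right|^2 \leq T\int_s^T |Dr(Y_t)|^2|D_sY_t|^2\,dt,
\]
take the $\mathbb{Q}$-expectation, and exchange it with the time integral via Fubini. The polynomial-growth bound $|Dr(y)|\leq C(1+|y|^k)$ together with Hölder's inequality then reduces matters to high moments of $\sup_t |Y_t|$ and of $|D_sY_t|$, all finite uniformly in $(s,t)$ by Lemma~\ref{Y.DY.moment.Q}. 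The term $\int_s^T \theta^\top(Y_t)D\theta(Y_t)D_sY_t\,dt$ is handled in exactly the same way, with one extra linear-growth factor coming from $\theta$.

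The main obstacle is the stochastic integral term, since Itô's isometry is most naturally applied under the measure in which the integrator is a Brownian motion. Using $dW_t = dW^\mathbb{Q}_t - \theta(Y_t)\,dt$, I rewrite
\[
\int_s^T (D\theta(Y_t)D_sY_t)^\top\,dW_t = \int_s^T (D\theta(Y_t)D_sY_t)^\top\,dW^\mathbb{Q}_t - \int_s^T (D\theta(Y_t)D_sY_t)^\top\theta(Y_t)\,dt.
\]
The first piece is an Itô integral against the $\mathbb{Q}$-Brownian motion, whose second $\mathbb{Q}$-moment equals $\mathbb{E}^\mathbb{Q}\!\left[\int_s^T |D\theta(Y_t)D_sY_t|^2\,dt\right]$ by Itô's isometry, and this is controlled uniformly in $s$ by the same polynomial-growth / Hölder argument as above. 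The drift correction is a Lebesgue integral of precisely the form already treated. Summing the four uniform bounds yields $\sup_{s\in[0,T]}\mathbb{E}^\mathbb{Q}[|\Tilde{L}_s|^2]<\infty$, as required.
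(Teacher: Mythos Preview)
Your proposal is correct and follows essentially the same approach that the paper's one-line proof gestures at: use the polynomial growth of $Dr,\theta,D\theta$ together with the $\mathbb{Q}$-moment bounds of Lemma~\ref{Y.DY.moment.Q}. The paper does not spell out how to handle the $dW$-integral under $\mathbb{Q}$; your Girsanov step $dW_t=dW_t^{\mathbb{Q}}-\theta(Y_t)\,dt$ followed by It\^o's isometry under $\mathbb{Q}$ is the natural way to make that precise, and in fact the drift correction it produces is exactly (the transpose of) the fourth summand of $\Tilde{L}_s$, so the two combine into a single $\mathbb{Q}$-stochastic integral.
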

\begin{proof}
    Using the assumption that $Dr, \theta, D\theta$ are of polynomial growth and Lemma~\ref{Y.DY.moment.Q}, we can easily prove this lemma. 
\end{proof}

\begin{lemma}\label{HinD11}
    \begin{enumerate}
        \item [(i)] $H_T \in \mathbb{D}_{1, 1}$ and 
        \begin{align*}
            D_sH_T &= - H_T \left( \int_s^T Dr(Y_t) D_sY_t dt +  \theta^\top(Y_s) + \left(  \int_s^T ( D\theta(Y_t) D_sY_t)^\top dW_t \right)^\top +  \int_s^T \theta^\top(Y_t) D\theta(Y_t) D_sY_t dt \right) \\
            & = -H_T \Tilde{L}_{s}.
        \end{align*}
        \item [(ii)] Let $z > 0$. $H_T I(zH_T) \in \mathbb{D}_{1, 1}$ and 
        \begin{align*}
            D_s \left( H_T I(z H_T) \right) = -H_T \Tilde{L}_{s} (I(z H_T) + z H_T I^\prime(zH_T)).
        \end{align*}
    \end{enumerate}
\end{lemma}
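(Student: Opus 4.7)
My plan is to deduce both parts from the chain rule, using the previous lemma to compute the Malliavin derivative of $\log H_T$ and a smooth-truncation argument to circumvent the lack of global Lipschitz continuity of the outer functions. Integrability of the limiting derivatives will be controlled by switching to the equivalent martingale measure $\mathbb{Q}$, which linearizes the product $H_T = \beta_T Z_T$.

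For part (i), I write $G_T \coloneqq \log H_T = -\int_0^T r(Y_s)\,ds - \int_0^T \theta^\top(Y_s)\,dW_s - \frac{1}{2}\int_0^T|\theta(Y_s)|^2\,ds$. Items (iv), (v), and (vi) of the preceding lemma, together with Lemma~\ref{Factor.diff} and the polynomial-growth hypothesis in Assumption~\ref{smooth.asp}(i), show that $G_T \in \mathbb{D}_{1,p}$ for every $p\ge 1$, with $D_s G_T = -\Tilde{L}_s$. To pass from $G_T$ to $H_T = e^{G_T}$, I choose $\phi_n \in C^1(\mathbb{R})$ agreeing with $\exp$ on $(-\infty,n]$ and smoothly capped beyond, so that each $\phi_n$ is Lipschitz and satisfies $0\le \phi_n \le \exp$, $0\le \phi_n' \le \exp$ globally. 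The standard chain rule gives $\phi_n(G_T) \in \mathbb{D}_{1,1}$ with $D_s\phi_n(G_T) = -\phi_n'(G_T)\Tilde{L}_s$. The candidate limit $-H_T\Tilde{L}$ sits in $L^1(\Omega; L^2([0,T]; \mathbb{R}^n))$ because, after changing measure,
\[
    \int_0^T \mathbb{E}[H_T|\Tilde{L}_s|]\,ds = \int_0^T \mathbb{E}^{\mathbb{Q}}[\beta_T|\Tilde{L}_s|]\,ds \le T\sqrt{\mathbb{E}^{\mathbb{Q}}[\beta_T^2]}\,\sup_{s\in[0,T]}\sqrt{\mathbb{E}^{\mathbb{Q}}[|\Tilde{L}_s|^2]},
\]
and both factors are finite by Lemma~\ref{fin.mom.beta} (with $\lambda=2$) and Lemma~\ref{L.tilde.mom}. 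Dominated convergence combined with the closability of the Malliavin derivative then yields $H_T \in \mathbb{D}_{1,1}$ with $D_s H_T = -H_T\Tilde{L}_s$.

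For part (ii) I apply the chain rule once more to $H_T$ through $\psi(x) \coloneqq xI(zx)$, noting $\psi'(x) = I(zx)+zxI'(zx)$. Because of the Inada singularity of $I$ at $0$, $\psi$ is not globally Lipschitz, so I approximate by $\psi_n$ that coincide with $\psi$ on $[1/n,\infty)$ and are smoothly extended with bounded derivatives near the origin. Assumption~\ref{martingale.method.asp}(ii) and Assumption~\ref{smooth.asp}(iv) yield $|\psi(x)|\le \kappa(x + z^{-\rho}x^{1-\rho})$ and $|\psi'(x)|\le 2\kappa(1+(zx)^{-\rho})$, so the integrability of the candidate derivative reduces to bounds on $\mathbb{E}[H_T|\Tilde{L}_s|]$ (already obtained) and on $\mathbb{E}[H_T^{1-\rho}|\Tilde{L}_s|]$. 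The latter follows from Hölder's inequality with exponents $\tfrac{1}{1-\rho}$ and $\tfrac{1}{\rho}$, together with Jensen's inequality $\mathbb{E}[H_T^{1-\rho}]\le \mathbb{E}[H_T]^{1-\rho}<\infty$ and the fact that $\Tilde{L}_s$ has all $\mathbb{P}$-moments (since $Y,\nabla_y Y, D_sY$ do, by Lipschitz SDE theory, while $Dr, D\theta, \theta$ are of polynomial growth). Closability then gives $H_T I(zH_T) \in \mathbb{D}_{1,1}$ with the stated derivative $-H_T\Tilde{L}_s\bigl(I(zH_T)+zH_T I'(zH_T)\bigr)$.

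The principal obstacle is not the formal identities but the control of the dominating integrand $(H_T + H_T^{1-\rho})|\Tilde{L}_s|$ uniformly in $s$ and in the truncation index $n$. The change of measure to $\mathbb{Q}$ is essential here: $\Tilde{L}$ is known to have second moments only under $\mathbb{Q}$ by Lemma~\ref{L.tilde.mom}, while $H_T$ is unbounded in $\mathbb{P}$-norm; factoring $H_T = \beta_T Z_T$ and absorbing $Z_T$ into the measure change reduces the required estimates to $\mathbb{Q}$-moments of $\beta_T$, which Lemma~\ref{fin.mom.beta} furnishes for every power.
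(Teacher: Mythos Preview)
Your approach is essentially the same as the paper's: apply the chain rule to $G_T=\log H_T$ (whose Malliavin derivative is $-\Tilde{L}_s$ by the preceding lemma), and verify the integrability condition by switching to $\mathbb{Q}$ and invoking Lemmas~\ref{fin.mom.beta} and~\ref{L.tilde.mom}. Two remarks, however.

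First, the explicit smooth truncation $\phi_n,\psi_n$ is unnecessary here: the paper's Proposition~\ref{chain.rule} (Ocone--Karatzas, Lemma~A.1) already applies to arbitrary $C^1$ outer functions, requiring only the integrability condition~\eqref{asp.chain.mal}, so once that bound is checked the chain rule fires directly.

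Second, your displayed estimate bounds $\int_0^T \mathbb{E}[H_T|\Tilde{L}_s|]\,ds$, which is the $L^1(\Omega\times[0,T])$ norm, not the $L^1(\Omega;L^2[0,T])$ norm actually required by~\eqref{asp.chain.mal}. This is easily repaired because $H_T$ does not depend on $s$: the paper writes
\[
\mathbb{E}\Bigl[\Bigl(\int_0^T |H_T\Tilde{L}_s|^2\,ds\Bigr)^{1/2}\Bigr]
=\mathbb{E}^{\mathbb{Q}}\Bigl[\beta_T\Bigl(\int_0^T|\Tilde{L}_s|^2\,ds\Bigr)^{1/2}\Bigr]
\le \mathbb{E}^{\mathbb{Q}}[\beta_T^2]^{1/2}\Bigl(T\sup_{s\le T}\mathbb{E}^{\mathbb{Q}}[|\Tilde{L}_s|^2]\Bigr)^{1/2},
\]
using exactly the same two lemmas you cite. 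For part~(ii) the paper also takes a slightly shorter route, bounding $|I(z)+zI'(z)|\le \kappa(1+z^{-1})$ so that the dominating integrand is $\kappa H_T|\Tilde{L}_s|+\kappa z^{-1}|\Tilde{L}_s|$; this avoids your H\"older step with exponent $1/\rho$ and reduces the second term to the $\mathbb{P}$-moments of $\Tilde{L}_s$ alone.
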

\begin{proof}
    \begin{itemize}
        \item [(i)] From Lemmas \ref{fin.mom.beta} and \ref{L.tilde.mom},
        \begin{align*}
            \mathbb{E}\left[ \left( \int_0^T |H_T \Tilde{L}_s|^2 ds \right)^{\frac{1}{2}} \right] &= \mathbb{E}^{\mathbb{Q}} \left[ \beta_T \left( \int_0^T |\Tilde{L}_s|^2 ds \right)^{\frac{1}{2}}   \right] \\
            &\leq  \mathbb{E}^{\mathbb{Q}} \left[ \beta_T^2 \right]^{\frac{1}{2}} \left( \int_0^T \mathbb{E}^{\mathbb{Q}} \left[  |\Tilde{L}_s|^2\right] ds \right)^{\frac{1}{2}} \\
            &\leq  \mathbb{E}^{\mathbb{Q}} \left[ \beta_T^2 \right]^{\frac{1}{2}}  \left( T \cdot \sup_{s \in [0, T]}\mathbb{E}^{\mathbb{Q}} \left[  |\Tilde{L}_s|^2\right]  \right)^{\frac{1}{2}} \\
            &< \infty
        \end{align*}
        holds. Therefore, we can use the chain rule (Proposition~\ref{chain.rule}), which proves (i).
        \item [(ii)] By Assumptions \ref{martingale.method.asp}(ii) and \ref{smooth.asp}(iv), there exists $\kappa > 0$ such that 
        \[
            I(z) + z I^\prime(z) \leq \kappa (1 + z^{-1})
        \] for any $z > 0$. As a result, 
        \begin{align*}
            |H_T \Tilde{L}_s(I(zH_T) + z H_T I^\prime(zH_T))| \leq \kappa H_T |\Tilde{L}_s| + \frac{\kappa}{z} |\Tilde{L}_s|,
        \end{align*}
        and
        \begin{align*}
            \mathbb{E}\left[ \left( \int_0^T |H_T \Tilde{L}_s|^2 ds \right)^{\frac{1}{2}} \right] < \infty,  \quad \mathbb{E}\left[ \left( \int_0^T |\Tilde{L}_s|^2 ds \right)^{\frac{1}{2}} \right] < \infty
        \end{align*}
        holds. Therefore, we can use the chain rule again and complete the proof.
    \end{itemize}
\end{proof}

Let a pair $(Y,Z)$ of $\mathbb{R}^m$-valued stochastic process $Y$ and $\mathbb{R}^{m \times m}$-valued stochastic process $Z$ be the solution to the following system of SDEs:
\begin{align*}
    dY_s &= b(Y_s) ds + a(Y_s) dW_s,& Y_t &= y, \\
    dZ_s &= Db(Y_s) Z_s ds + \sum_{j = 1}^n D a_{\cdot j}(Y_s) Z_s dW^j_s,& Z_t &= I. 
\end{align*}
Then $(Y, Z)$ is a Markov process and $(Y^{(t, y)}, Z^{(t, y)})$ denotes the solution to the above system of SDEs when $(Y, Z)$ starts from $(y, I) \in \mathbb{R}^{m} \times \mathbb{R}^{m \times m}$. Note that $Z^{(t, y)}$ always starts from the identity matrix $I \in \mathbb{R}^{m \times m}$ and $Y^{(0, y)}_s = Y^{(t, Y_t^{(0, y)})}_s$ for $s \in [t, T]$. Because $Z^{(t, y)}$ can be thought of as the derivative of $Y^{(t, y)}$ with respect to the initial value $y$, we use the notation $\nabla_y Y^{(t, y)} \coloneqq Z^{(t, y)}$ instead of $Z^{(t, y)}$. Using these notations and Lemma~\ref{Factor.diff}, 
\[
    D_tY_s = \nabla_y Y_s^{(t, Y_t)} a(Y_t), \quad s \in [t, T],
\]
where $Y_t = Y^{(0, y)}_t$.
Furthermore, let $H^{(t, y)} = \left( H_s^{(t, y)} \right)_{s \in [t, T]} $ be
\[
    H_s^{(t, y)} \coloneqq  \exp \left( - \int_t^s r(Y^{(t, y)}_u) du - \int_t^s \theta^\top (Y^{(t, y)}_u) dW_u - \frac{1}{2} \int_t^s |\theta (Y^{(t, y)}_u)|^2 du \right),
\]
and let $\nabla_y H^{(t, y)} = \left( \nabla_y H_s^{(t, y)} \right)_{s \in [t, T]}, \; L^{(t, y)} = \left( L_s^{(t, y)}\right)_{s \in [t, T]} $ be an $\mathbb{R}^m$-valued stochastic process given by 
\begin{align*}
    L_s^{(t, y)} &\coloneqq \int_t^s (Dr(Y^{(t, y)}_u) \nabla_y Y^{(t, y)}_u )^\top du + \int_t^s \left( D\theta (Y^{(t, y)}_u) \nabla_y Y^{(t, y)}_u \right)^\top  dW_u + \int_t^s \left( D\theta (Y^{(t, y)}_u) \nabla_y Y_u^{(t, y)} \right)^\top \theta(Y_u^{(t, y)}) du, \\
     \nabla_y H_s^{(t, y)} &\coloneqq -H_s^{(t, y)} L_s^{(t, y)}.
\end{align*}
From these notations,
\begin{align*}
    D_tH_T &= -H_T \Tilde{L}_t \\
           &= - H_T \left( \int_t^T Dr(Y_s) D_tY_s ds +  \theta^\top(Y_t) + \left(  \int_t^T ( D\theta(Y_s) D_tY_s)^\top dW_s \right)^\top +  \int_s^T \theta^\top(Y_s) D\theta(Y_s) D_tY_s ds \right) \\
           &= -H_T \theta^\top (Y_t) -H_t^{(0, y)} H_T^{(t, Y_t)} \left( \int_t^T Dr(Y_s) D_tY_s ds +  \left(  \int_t^T ( D\theta(Y_s) D_tY_s)^\top dW_s \right)^\top +  \int_s^T \theta^\top(Y_s) D\theta(Y_s) D_tY_s ds \right) \\
           &\begin{multlined}
              =-H_T \theta^\top (Y_t) - H_t^{(0, y)} H_T^{(t, Y_t)} \left( \int_t^T Dr(Y_s)  \nabla_y Y_s^{(t, Y_t)} ds +  \left(  \int_t^T ( D\theta(Y_s)  \nabla_y Y_s^{(t, Y_t)} )^\top dW_s \right)^\top \right. \\
                 \left. +  \int_s^T \theta^\top(Y_s) D\theta(Y_s)  \nabla_y Y_s^{(t, Y_t)} ds \right) a(Y_t)
           \end{multlined} \\
           &= -H_T \theta^\top (Y_t) + H_t^{(0, y)} \left( \nabla_y H_T^{(t, Y_t)} \right)^\top a(Y_t).
\end{align*}

\begin{proof}[Proof of Theorem~\ref{explicit.portfolio}]
    By Clark's formula (Proposition~\ref{Clark.formula}),
    $\psi$ in Theorem~\ref{Optimal Terminal Wealth} is given by
    \begin{align*}
        \psi_t &= \mathbb{E}_t\left[ D_t \left( H_T I(\hat{\lambda} H_T) \right)^\top  \right] \\
               &= - \mathbb{E}_t\left[ \left( H_T \theta(Y_t) + H_t a^\top(Y_t) \nabla_y H_T^{(t, Y_t)}  \right) \left(I(\hat{\lambda} H_T) + \hat{\lambda} H_T I^\prime(\hat{\lambda}H_T) \right)  \right] \\
               &= - H_t \hat{X}_t \theta(Y_t) - \theta(Y_t) \mathbb{E}_t \left[H_T \cdot \hat{\lambda}H_T I^\prime(\hat{\lambda}H_T) \right] + H_t a^\top(Y_t)   \mathbb{E}_t \left[ \nabla_y H_T^{(t, Y_t)}  \left(I(\hat{\lambda} H_T) + \hat{\lambda} H_T I^\prime(\hat{\lambda}H_T) \right)  \right].
    \end{align*}
    As a result,
    \begin{align*}
        \hat{\pi}_t &= \sigma^\top (Y_t)^{-1} \left( \frac{\psi_t}{H_t} + \hat{X}_t \theta(Y_t) \right) \\
        &=- \sigma^\top (Y_t)^{-1} \theta(Y_t) \frac{1}{H_t} \mathbb{E}_t \left[H_T \cdot \hat{\lambda}H_T I^\prime(\hat{\lambda}H_T) \right] + \sigma^\top (Y_t)^{-1} a^\top(Y_t)   \mathbb{E}_t \left[ \nabla_y H_T^{(t, Y_t)}  \left(I(\hat{\lambda} H_T) + \hat{\lambda} H_T I^\prime(\hat{\lambda}H_T) \right)  \right]. 
    \end{align*}
\end{proof}

\begin{proof}[Proof of Proposition~\ref{feedback.form}]
    Let $\mathcal{X}:[0, \infty) \times (0, \infty) \times \mathbb{R}^m \to (0, \infty)$ be
    \[
        \mathcal{X}(t, z, y) \coloneqq \mathbb{E}^y\left[H_t I(z H_t) \right],
    \]
    where $\mathbb{E}^y[\cdot]$ stands for the expectation when $Y$ starts from $y \in \mathbb{R}^m$ at time 0.
    By the Markov property of $Y$, $ \mathcal{X}(T- t, z, Y_t) =  \mathbb{E}_t \left[ \frac{H_T}{H_t} I \left( z \frac{H_T}{H_t} \right) \right]$. Furthermore, because $H_t$ is $\mathcal{F}_t$-measurable, 
    \begin{align*}
        \mathcal{X}(T - t, \hat{\lambda}H_t, Y_t) = \frac{1}{H_t}   \mathbb{E}_t \left[H_T I \left( \hat{\lambda} H_t \cdot \frac{H_T}{H_t} \right) \right] = \hat{X}_t.
    \end{align*}
    Therefore, if we let $F(t, \cdot, y) \coloneqq \left( \mathcal{X}(t, \cdot, y) \right)^{-1}$, then 
    \begin{equation}\label{F.func}
                F(T - t, \hat{X}_t, Y_t) = \hat{\lambda}H_t
    \end{equation}
    holds. Let $\hat{\pi}^M:[0, \infty) \times (0, \infty) \times \mathbb{R}^m \to \mathbb{R}^n$ be
    \[
        \hat{\pi}^M(t, x, y) \coloneqq - (\sigma^\top(y))^{-1} \theta(y) F(t, x, y) \mathbb{E}^y \left[ \left( H_t \right)^2 I^\prime \left( F(t, x, y) H_t \right) \right].
    \]
    By the Markov property of $Y_t$, 
    \[
    \hat{\pi}^M(T - t, x, Y_t) =  - (\sigma^\top(Y_t))^{-1} \theta(Y_t) F(T - t, x, Y_t) \mathbb{E}_t \left[ \left( \frac{H_T}{H_t} \right)^2 I^\prime \left( F( T - t, x, Y_t) \frac{H_T}{H_t} \right) \right]
    \]
    holds. Moreover, by the identity (\ref{F.func}) and the $\mathcal{F}_t$-measurability of $\hat{X}_t$, 
    \begin{align*}
            \hat{\pi}^M(T - t, \hat{X}_t, Y_t) &=  - (\sigma^\top(Y_t))^{-1} \theta(Y_t) F(T - t, \hat{X}_t, Y_t) \mathbb{E}_t \left[ \left( \frac{H_T}{H_t} \right)^2 I^\prime \left( F( T - t, \hat{X}_t, Y_t) \frac{H_T}{H_t} \right) \right]\\
            &= - (\sigma^\top(Y_t))^{-1} \theta(Y_t) \frac{1}{H_t} \mathbb{E}_t \left[  H_T \cdot \hat{\lambda} H_T I^\prime \left( \hat{\lambda}H_T \right) \right] \\
            &= \hat{\pi}^M_t.
    \end{align*}
    In particular, if we set $T = \tau,\; t = 0$, then
    \[
        \hat{\pi}^M(\tau, x, y) = \hat{\pi}^M_0 = -(\sigma^\top(y))^{-1} \theta(y) \mathbb{E}^y \left[ H_\tau \cdot {\hat{\lambda} H_\tau I^\prime(\hat{\lambda} H_\tau)}\right],
    \]
    where $\hat{\lambda} = \hat{\lambda}(\tau, x, y)$ is defined by an equality $x = \mathbb{E}^y[H_\tau I(\lambda H_\tau)]$.
    As a result, we have proved that the myopic portfolio $\left( \hat{\pi}^M_t \right)_{t \in [0, T]}$ can be represented by a feedback form. By the same argument, we can also prove that the excess hedging demand $\left( \hat{\pi}^H_t \right)_{t \in [0, T]}$ has a feedback form, which completes the proof.
\end{proof}

\subsection{Proofs for Sect.~\ref{Myopic.Turnpike}}
In this subsection, we assume Assumptions \ref{coeff.asp}--\ref{pref.asp} and prove the main results in Sect.~\ref{Myopic.Turnpike}. Lemma~\ref{holder.lem} and Proposition~\ref{d.bound} are useful to estimate the rate of the turnpike theorem for myopic portfolios (Theorem~\ref{myopic.turnpike}) and optimal wealth processes (Theorem~\ref{wealth.turnpike}). Lemmas \ref{f.low.est} and \ref{uni.M.lb} are used to prove the uniform turnpike theorem for optimal portfolio proportions (Theorem~\ref{uni.conv.myopic}).
\begin{lemma}\label{holder.lem}
\[
        \mathbb{E}[H_T^{1+\alpha}]  \leq \mathbb{E}\left[ H_T^{q} \right] ^{\frac{\alpha}{q-1}} \mathbb{E} \left[  H_T \right]^{1 - \frac{\alpha}{q-1}}.
\]
\end{lemma}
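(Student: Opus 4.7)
The statement is a classical interpolation inequality for the moments of $H_T$, and the plan is to derive it in one line from Hölder's inequality. The key observation is that the exponent $1+\alpha$ lies between $q$ and $1$, so it can be written as a convex combination of the two, and the weights are exactly those that appear on the right-hand side.

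First I would set $\theta := \alpha/(q-1)$. Since $\alpha \in (q-1, 0]$ and $q-1 < 0$, this ratio lies in $[0,1)$. A direct computation using $\theta(q-1) = \alpha$ gives the interpolation identity
\[
1 + \alpha = \theta \cdot q + (1-\theta) \cdot 1.
\]

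In the nontrivial case $\theta \in (0,1)$, I would factor
\[
H_T^{1+\alpha} = H_T^{\theta q} \cdot H_T^{1-\theta}
\]
and apply Hölder's inequality with conjugate exponents $p = 1/\theta$ and $p' = 1/(1-\theta)$ to obtain
\[
\mathbb{E}[H_T^{1+\alpha}] = \mathbb{E}\bigl[H_T^{\theta q} \cdot H_T^{1-\theta}\bigr] \leq \mathbb{E}[H_T^q]^{\theta}\, \mathbb{E}[H_T]^{1-\theta}.
\]
Substituting $\theta = \alpha/(q-1)$ yields the stated bound. The boundary case $\theta = 0$ (i.e., $\alpha = 0$) reduces to the trivial identity $\mathbb{E}[H_T] = \mathbb{E}[H_T^q]^0 \, \mathbb{E}[H_T]^1$.

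There is essentially no obstacle. The only two things to double-check are the sign arithmetic guaranteeing $\theta \in [0,1)$, and the remark that if $\mathbb{E}[H_T^q] = \infty$ the inequality holds vacuously, so no additional moment hypothesis beyond Assumption~\ref{martingale.method.asp} is needed. In effect, the lemma is just the log-convexity of $p \mapsto \log \mathbb{E}[H_T^p]$ evaluated at the three points $q$, $1+\alpha$, and $1$.
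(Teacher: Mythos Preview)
Your proof is correct and essentially identical to the paper's: both split $H_T^{1+\alpha}$ as $H_T^{\theta q}\cdot H_T^{1-\theta}$ with $\theta=\alpha/(q-1)\in[0,1)$ and apply H\"older's inequality with exponents $1/\theta$ and $1/(1-\theta)$, treating the boundary case $\alpha=0$ separately as trivial. The only cosmetic difference is that the paper writes out the exponents $\frac{q\alpha}{q-1}$ and $\frac{q-1-\alpha}{q-1}$ explicitly rather than introducing the shorthand $\theta$.
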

\begin{proof}
The case $\alpha = 0$ is trivial. When $\alpha \in (q - 1, 0)$, using Hölder's inequality leads to
\begin{align*}
    \mathbb{E}[H_T^{1+\alpha}] &= \mathbb{E}\left[ H_T^{\frac{q\alpha}{q - 1}} H_T^{\frac{q - 1 - \alpha}{q-1}} \right]\\
                          &\leq \mathbb{E}\left[ H_T^{\frac{q\alpha}{q - 1} \cdot \frac{q-1}{\alpha}} \right] ^{\frac{\alpha}{q-1}} \mathbb{E}\left[  H_T^{\frac{q-1-\alpha}{q-1} \cdot \frac{q-1}{q-1-\alpha}} \right]^\frac{q-1-\alpha}{q-1} \\
                          &= \mathbb{E}\left[ H_T^{q} \right] ^{\frac{\alpha}{q-1}} \mathbb{E} \left[  H_T \right]^{1 - \frac{\alpha}{q-1}}.
\end{align*}
\end{proof}

\begin{proposition}\label{d.bound}
Let $d(z) \coloneqq I_1(z) - I_2(z)$. Then, there exists an $M = M(x, y) \in ( - \infty, x]$, which is independent of $T$, such that
\begin{align*}
        \left| \mathbb{E}[H_T d (\hat{\lambda}^{1, T} H_T)] \right| &\leq  K \left( \mathbb{E}[H_T] +  (\hat{\lambda}^{1, T})^\alpha \mathbb{E}[H_T^{1+\alpha}] \right) \\
        &\leq K \left( \mathbb{E}[H_T] + (x - M)^{\frac{\alpha}{q - 1}} \mathbb{E} [H_T]^{1 - \frac{\alpha}{q-1}} \right), \quad T> 0.
\end{align*}
In particular, 
\[
     \left| \mathbb{E}[H_T d (\hat{\lambda}^{1, T} H_T)] \right| = O \left( \mathbb{E} [H_T]^{1 - \frac{\alpha}{q-1}} \right), \quad (T \nearrow \infty).
\]
\end{proposition}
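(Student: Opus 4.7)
The plan is to establish the two inequalities separately. The first one is immediate from Assumption~\ref{pref.asp}(ii): substituting $z = \hat{\lambda}^{1,T} H_T$ in the pointwise bound $|d(z)| \leq K(1 + z^\alpha)$, multiplying by $H_T$, and taking expectations yields $|\mathbb{E}[H_T d(\hat{\lambda}^{1,T} H_T)]| \leq K\mathbb{E}[H_T] + K(\hat{\lambda}^{1,T})^\alpha \mathbb{E}[H_T^{1+\alpha}]$, which is the first inequality.

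For the second inequality, the plan is to apply Lemma~\ref{holder.lem} and then reduce to producing a uniform-in-$T$ lower bound on $\mathbb{E}[H_T d(\hat{\lambda}^{1,T} H_T)]$. Set $\beta := \alpha/(q-1) \in [0,1)$. Lemma~\ref{holder.lem} gives $\mathbb{E}[H_T^{1+\alpha}] \leq \mathbb{E}[H_T^q]^{\beta} \mathbb{E}[H_T]^{1-\beta}$. Since $I_2(z) = z^{q-1}$, the key identity is
\[
A_T := (\hat{\lambda}^{1,T})^{q-1} \mathbb{E}[H_T^q] = \mathbb{E}[H_T I_2(\hat{\lambda}^{1,T} H_T)] = x - \mathbb{E}[H_T d(\hat{\lambda}^{1,T} H_T)],
\]
where the last equality uses the Lagrange condition $x = \mathbb{E}[H_T I_1(\hat{\lambda}^{1,T} H_T)]$ from Theorem~\ref{Optimal Terminal Wealth}. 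Combining these,
\[
(\hat{\lambda}^{1,T})^\alpha \mathbb{E}[H_T^{1+\alpha}] \leq A_T^\beta \mathbb{E}[H_T]^{1-\beta},
\]
so if one can produce some $M \in (-\infty, x]$, independent of $T$, satisfying $A_T \leq x - M$ for all $T > 0$, the second inequality follows.

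The hard part is this last step, because the first inequality bounds $|\mathbb{E}[H_T d(\hat{\lambda}^{1,T} H_T)]|$ in terms of $\hat{\lambda}^{1,T}$ itself, so a uniform-in-$T$ lower bound requires closing a self-referential loop. The plan is to substitute the first inequality into the identity $A_T = x - \mathbb{E}[H_T d(\hat{\lambda}^{1,T} H_T)]$ to obtain
\[
A_T \leq x + K\mathbb{E}[H_T] + K A_T^\beta \mathbb{E}[H_T]^{1-\beta}.
\]
Since $\beta \in [0, 1)$, the scaled Young inequality $A_T^\beta \mathbb{E}[H_T]^{1-\beta} \leq \beta \epsilon A_T + (1-\beta) \epsilon^{-\beta/(1-\beta)} \mathbb{E}[H_T]$ (trivial when $\beta = 0$) allows one to pick $\epsilon$ with $K\beta\epsilon \leq 1/2$, absorbing $A_T^\beta$ into $A_T/2$ and leaving $A_T \leq 2x + C_1 \mathbb{E}[H_T]$ for some constant $C_1$. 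Because $T \mapsto \mathbb{E}[H_T]$ is continuous on $[0, \infty)$ and tends to $0$ by Assumption~\ref{pref.asp}(iii), it is bounded uniformly in $T$; hence $A_T \leq C_0 = C_0(x, y)$, and the choice $M := x - C_0 \leq x$ works. The concluding ``in particular'' claim is immediate: since $\mathbb{E}[H_T] \to 0$ and $1 - \beta \in (0, 1]$, $\mathbb{E}[H_T] = o(\mathbb{E}[H_T]^{1-\beta})$ as $T \nearrow \infty$, so the term $(x - M)^\beta \mathbb{E}[H_T]^{1-\beta}$ dominates.
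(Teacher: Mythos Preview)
Your proof is correct and follows the same skeleton as the paper's: the first inequality from Assumption~\ref{pref.asp}(ii), Lemma~\ref{holder.lem}, and the Lagrange identity $A_T = (\hat{\lambda}^{1,T})^{q-1}\mathbb{E}[H_T^q] = x - \mathbb{E}[H_T d(\hat{\lambda}^{1,T}H_T)]$ are exactly the ingredients the paper uses. The one point of divergence is how you close the self-referential loop to show $\sup_T A_T < \infty$ (equivalently, that $M > -\infty$). The paper defines $M \coloneqq \inf_{T>0}\mathbb{E}[H_T d(\hat{\lambda}^{1,T}H_T)]$ and argues by contradiction: assuming $M=-\infty$, it divides the inequality $|\mathbb{E}[H_T d(\hat{\lambda}^{1,T}H_T)]| \leq K\{C + (x - \mathbb{E}[H_T d(\hat{\lambda}^{1,T}H_T)])^{\beta}\}\mathbb{E}[H_T]^{1-\beta}$ through by the left-hand side along a sequence $T_n$ and reaches $1\le 0$. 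Your route is more constructive: you feed the first inequality back into $A_T \le x + K\mathbb{E}[H_T] + K A_T^{\beta}\mathbb{E}[H_T]^{1-\beta}$ and absorb $A_T^{\beta}$ via the weighted AM--GM (Young) inequality, obtaining an explicit bound $A_T \le 2x + C_1\mathbb{E}[H_T]$. This gives a concrete $M = x - C_0$ rather than an abstract infimum, at the cost of a slightly looser constant. One minor remark: you justify boundedness of $T\mapsto\mathbb{E}[H_T]$ via continuity, but Assumption~\ref{pref.asp}(iii) already asserts monotone decrease $\mathbb{E}[H_T]\searrow 0$, so $\sup_{T>0}\mathbb{E}[H_T]\le 1$ follows directly.
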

\begin{proof}
Note that
\begin{equation}\label{ineq.Hd}
    \mathbb{E}[H_T d(\hat{\lambda}^{1, T} H_T) ] = x - \mathbb{E}[H_T I_2 (\hat{\lambda}^{1, T} H_T)] = x - (\hat{\lambda}^{1, T})^{q - 1} \mathbb{E}[H_T^q] \leq x.
\end{equation}
We define $ M = M(x, y) \coloneqq \inf_{T > 0} \mathbb{E}[H_T d (\hat{\lambda}^{1, T} H_T)] \in [-\infty, x]$.
By Assumption~\ref{pref.asp}(ii), (\ref{ineq.Hd}), and Lemma~\ref{holder.lem},
    \begin{align*}
        \left| \mathbb{E}[H_T d (\hat{\lambda}^{1, T} H_T)] \right| &\leq K \left( \mathbb{E}[H_T] +  (\hat{\lambda}^{1, T})^\alpha \mathbb{E}[H_T^{1+\alpha}] \right) \\
        &= K \left\{ \mathbb{E}[H_T] +  \left( \frac{x - \mathbb{E}[H_T d (\hat{\lambda}^{1, T} H_T)]}{\mathbb{E} [H_T^q]} \right)^{\frac{\alpha}{q-1}} \mathbb{E}[H_T^{1+\alpha}] \right\} \\
        &\leq K \left\{ \mathbb{E}[H_T] +  \left( x - \mathbb{E}[H_T d (\hat{\lambda}^{1, T} H_T)]\right)^{\frac{\alpha}{q-1}} \mathbb{E}[H_T]^{1 - \frac{\alpha}{q - 1}} \right\} \\
        &\leq K \left\{ \mathbb{E}[H_T] +  \left( x - M\right)^{\frac{\alpha}{q-1}} \mathbb{E}[H_T]^{1 - \frac{\alpha}{q - 1}} \right\}
    \end{align*}
holds.  Lastly, we prove that $M > - \infty$. Because $\mathbb{E}[H_T] \searrow 0$, there exists some constant $C$ such that $\mathbb{E}[H_T] \leq C \mathbb{E}[H_T]^{1 - \frac{\alpha}{q - 1}}$. As a result, 
\begin{align}\label{H.d.ineq}
        \left| \mathbb{E}[H_T d (\hat{\lambda}^{1, T} H_T)] \right| 
        &\leq K \left\{ C +  \left( x - \mathbb{E}[H_T d (\hat{\lambda}^{1, T} H_T)]\right)^{\frac{\alpha}{q-1}} \right\} \mathbb{E}[H_T]^{1 - \frac{\alpha}{q - 1}}
\end{align}
holds. Dividing both sides of the inequality (\ref{H.d.ineq}) by $|\mathbb{E}[H_T d(\hat{\lambda}^{1, T})]|$, we have 
\begin{equation}\label{contra.inq}
    1 \leq K \left\{ \frac{C}{  \left| \mathbb{E}[H_T d (\hat{\lambda}^{1, T} H_T)] \right| } + \left| \frac{x - \mathbb{E}[H_T d (\hat{\lambda}^{1, T} H_T)]}{\mathbb{E}[H_T d (\hat{\lambda}^{1, T} H_T)]} \right|^{\frac{\alpha}{q - 1}} \cdot \frac{1}{\left| \mathbb{E}[H_T d (\hat{\lambda}^{1, T} H_T)] \right|^{1 - \frac{\alpha}{q - 1}} } \right\} \mathbb{E}[H_T]^{1 - \frac{\alpha}{q - 1}}.    
\end{equation}
If $M = -\infty$, there exists a sequence $(T_n)_{n \geq 1}$ such that $\mathbb{E} \left[H_{T_n} d(\hat{\lambda}^{1, T_n} H_{T_n})\right] \searrow - \infty$ and the inequality (\ref{contra.inq}) leads to $1 \leq 0$. Therefore, $M > -\infty$ and the proof is completed.
\end{proof}

\begin{proof}[Proof of Theorem~\ref{myopic.turnpike}]
Let $J_i (z) \coloneqq z I_i^\prime(z), \; (i = 1, 2)$.
\begin{align*}
    \mathbb{E}\left[H_T \cdot J_1 \left( \hat{\lambda}^{1, T} H_T \right) \right] - \mathbb{E}\left[H_T \cdot J_2 \left( \hat{\lambda}^{2, T} H_T \right) \right] &= 
        \mathbb{E} \left[ H_T \left\{ J_1 \left( \hat{\lambda}^{1, T} H_T \right) - J_2 \left( \hat{\lambda}^{1, T} H_T \right) \right\} \right] \\
        & \qquad + \mathbb{E} \left[ H_T \left\{ J_2 \left( \hat{\lambda}^{1, T} H_T \right) - J_2 \left( \hat{\lambda}^{2, T} H_T \right) \right\} \right] \\
    &\eqqcolon (\text{I}) + (\text{II}).
\end{align*}
Because $J_2 (z) = (q - 1) z^{q-1}$,
\begin{align*}
    (\text{II}) = (q - 1) \left\{ \left(  \hat{\lambda}^{1, T} \right)^{q - 1} - \left( \hat{\lambda}^{2, T} \right)^{q - 1} \right\} \mathbb{E}[H_T^{q}].
\end{align*}
By the definition of $\hat{\lambda}^{i, T}$, 
\begin{align*}
    x &= \mathbb{E}[H_T I_1 (\hat{\lambda}^{1, T}H_T)] = \mathbb{E}[H_T d (\hat{\lambda}^{1, T} H_T)] + (\hat{\lambda}^{1, T})^{q - 1} \mathbb{E}[H_T^q], \\
    x &=   (\hat{\lambda}^{2, T})^{q - 1} \mathbb{E}[H_T^q]
\end{align*}
hold, and it leads to 
\begin{align*}
    (\text{II}) &= (1 - q)  \mathbb{E}[H_T d (\hat{\lambda}^{1, T} H_T)].
\end{align*}
As a result, by Assumption~\ref{pref.asp}(ii) and Proposition~\ref{d.bound}, we obtain
\begin{align*}
     |\hat{\pi}^{1, M} (T, x, y) - \hat{\pi}^{2, M} (T, x, y)|  &\leq K (2 - q) \left| (\sigma^\top(y))^{-1} \theta (y) \right| \left( \mathbb{E} \left[ H_T \right] + (\hat{\lambda}^{1, T})^{\alpha} \mathbb{E} \left[ H_T^{\alpha + 1} \right] \right) \\
     &\leq  K (2 - q) \left| (\sigma^\top(y))^{-1} \theta (y) \right| \left( \mathbb{E}[H_T] + (x - M)^{\frac{\alpha}{q - 1}} \mathbb{E} [H_T]^{1 - \frac{\alpha}{q-1}} \right).
\end{align*}

\end{proof}

 
\begin{lemma}\label{f.low.est}
    For constants $C > 0,\; \gamma \in [0, 1) $, we define a function $f$ by
    \[
        f(x, z) \coloneqq \frac{|z|}{C + (x - z)^\gamma}, \quad x > 0, \; z < -x.
    \]
    Then $f$ satisfies
    \begin{align*}
        z \in [-x, x) &\Rightarrow f(x, z) \geq \frac{|z|}{ C + (2x)^\gamma}, \\
        z \in (-\infty, -x) &\Rightarrow  f(x, z) \geq \frac{|z|^{1 - \gamma}}{C x^{-\gamma} + 2^\gamma}. 
    \end{align*}
\end{lemma}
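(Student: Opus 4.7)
The plan is to split into the two stated cases and in each case bound the denominator $C + (x-z)^\gamma$ from above by a quantity that is easily compared with $|z|$. Since $\gamma \in [0,1)$ is nonnegative, the power function $t \mapsto t^\gamma$ is nondecreasing on $(0,\infty)$, which is all the analytic input needed.

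For the first case $z \in [-x,x)$, I would observe that $-x \le z < x$ forces $0 < x - z \le 2x$, so monotonicity gives $(x-z)^\gamma \le (2x)^\gamma$. Therefore $C + (x-z)^\gamma \le C + (2x)^\gamma$, and dividing the numerator $|z|$ by this enlarged denominator yields
\[
f(x,z) \;=\; \frac{|z|}{C+(x-z)^\gamma} \;\ge\; \frac{|z|}{C+(2x)^\gamma},
\]
as claimed.

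For the second case $z \in (-\infty,-x)$, one has $|z| = -z > x > 0$, hence $x - z = x + |z| \le 2|z|$ and, by monotonicity again, $(x-z)^\gamma \le 2^\gamma |z|^\gamma$. Factoring $|z|^\gamma$ out of $C + (x-z)^\gamma$ gives
\[
C+(x-z)^\gamma \;\le\; |z|^\gamma\bigl(C|z|^{-\gamma}+2^\gamma\bigr) \;\le\; |z|^\gamma\bigl(Cx^{-\gamma}+2^\gamma\bigr),
\]
where the last inequality uses $|z|>x$ together with $\gamma \ge 0$, so that $|z|^{-\gamma} \le x^{-\gamma}$. Dividing $|z|$ by this bound produces $|z|^{1-\gamma}/(Cx^{-\gamma}+2^\gamma)$, which is exactly the second inequality.

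There is no real obstacle here: both inequalities reduce to the elementary monotonicity of $t \mapsto t^\gamma$ on $(0,\infty)$ combined with the two-sided control of $x-z$ in each region. The only point to be careful about is that $\gamma \ge 0$ is used both for the monotonicity step and for the comparison $|z|^{-\gamma}\le x^{-\gamma}$ in the second case.
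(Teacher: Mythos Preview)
Your proof is correct and essentially identical to the paper's: both cases are handled by bounding the denominator via the monotonicity of $t\mapsto t^\gamma$, using $x-z\le 2x$ in the first case and $x+|z|\le 2|z|$ together with $|z|^{-\gamma}\le x^{-\gamma}$ in the second.
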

\begin{proof}
When $z \in [-x, x)$, 
\begin{align*}
    f(x, z) &= \frac{|z|}{C + (x - z)^\gamma} \\
            &\geq \frac{|z|}{C + (x - (-x))^\gamma} \\
            &= \frac{|z|}{ C + (2x)^\gamma}.
\end{align*}
When $z \in (-\infty, -x)$, 
\begin{align*}
    f(x, z) &= \frac{|z|}{C + (x + |z|)^\gamma} \\
    &= \frac{|z|^{1 - \gamma}}{C|z|^{-\gamma} + \left( \frac{x}{|z|} + 1 \right)^\gamma} \\
    &\geq \frac{|z|^{1 - \gamma}}{Cx^{-\gamma} + 2^\gamma}.
\end{align*}
\end{proof}

\begin{lemma}\label{uni.M.lb}
    For any $\epsilon > 0$, 
    \[
        \sup_{x > \epsilon} \left( \frac{x - M(x, y)}{x} \right) < \infty.
    \]
\end{lemma}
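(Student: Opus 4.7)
The plan is to bootstrap the implicit bound established inside the proof of Proposition~\ref{d.bound}. Write $\Delta_T(x) := \mathbb{E}[H_T\,d(\hat{\lambda}^{1,T}H_T)]$ and $\gamma := \alpha/(q-1) \in [0,1)$; that proof shows (before taking an infimum in $T$)
\begin{equation*}
    |\Delta_T(x)| \;\le\; K\,\mathbb{E}[H_T] \;+\; K\,(x-\Delta_T(x))^{\gamma}\,\mathbb{E}[H_T]^{\,1-\gamma}, \qquad T>0,
\end{equation*}
together with the trivial upper bound $\Delta_T(x)\le x$. Since $M(x,y) = \inf_{T>0}\Delta_T(x)$, it suffices to prove $\sup_{x>\epsilon,\, T>0}(x-\Delta_T(x))/x < \infty$.

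I would first dispose of the easy case. If $\Delta_T(x)\ge -x$, then $x-\Delta_T(x)\le 2x$, so the ratio is bounded by $2$. In the substantive case $\Delta_T(x)<-x$, I would divide the implicit bound through by $\mathbb{E}[H_T]^{\,1-\gamma}$ and rearrange to
\begin{equation*}
    \frac{|\Delta_T(x)|}{\mathbb{E}[H_T]^{\gamma} + (x-\Delta_T(x))^{\gamma}} \;\le\; K\,\mathbb{E}[H_T]^{\,1-\gamma}.
\end{equation*}
The left-hand side is exactly $f(x,\Delta_T(x))$ from Lemma~\ref{f.low.est} with $C=\mathbb{E}[H_T]^{\gamma}$. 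Applying the $z<-x$ branch of that lemma converts the inequality into
\begin{equation*}
    \frac{|\Delta_T(x)|^{\,1-\gamma}}{\mathbb{E}[H_T]^{\gamma}\,x^{-\gamma} + 2^{\gamma}} \;\le\; K\,\mathbb{E}[H_T]^{\,1-\gamma},
\end{equation*}
hence $|\Delta_T(x)|^{\,1-\gamma} \le K\,\mathbb{E}[H_T]\,x^{-\gamma} + K\cdot 2^{\gamma}\,\mathbb{E}[H_T]^{\,1-\gamma}$.

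Next I would use that $T\mapsto\mathbb{E}[H_T]$ is continuous on $(0,\infty)$ and decreases to $0$ by Assumption~\ref{pref.asp}(iii), so $C_0 := \sup_{T>0}\mathbb{E}[H_T] < \infty$. Restricting to $x>\epsilon$, the right-hand side above is bounded by a constant depending only on $\epsilon, K, \gamma, C_0$; raising to the power $1/(1-\gamma)$ (permitted because $1-\gamma>0$) yields $|\Delta_T(x)| \le C_\epsilon$ uniformly in $T>0$ and $x>\epsilon$. Therefore $(x-\Delta_T(x))/x \le 1+C_\epsilon/\epsilon$ in this case, and taking the maximum with the easy case and then the supremum over $T$ and $x$ gives the conclusion.

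The only genuine obstacle is the self-referential character of the implicit bound, where $\Delta_T(x)$ appears both on the left and inside the $(x-\Delta_T(x))^{\gamma}$ factor on the right: a naive estimate would only yield $|\Delta_T| \lesssim |\Delta_T|^{\gamma}$ in the far-negative regime, which is vacuous for $\gamma$ close to $1$. Lemma~\ref{f.low.est} is tailored precisely for this bootstrap: in the branch $z<-x$ it replaces the denominator $C+(x-z)^{\gamma}$ by the sharper $(Cx^{-\gamma}+2^{\gamma})|z|^{\gamma}$, which is what converts the implicit inequality into a genuine upper bound on $|\Delta_T|^{\,1-\gamma}$.
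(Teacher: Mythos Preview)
Your proof is correct and follows essentially the same route as the paper: both start from the self-referential bound on $\Delta_T(x)$ established in the proof of Proposition~\ref{d.bound}, invoke Lemma~\ref{f.low.est} in the regime $\Delta_T(x)<-x$ to break the implicit dependence, and then take the supremum over $x>\epsilon$ and $T>0$ using $\sup_{T>0}\mathbb{E}[H_T]<\infty$. The only cosmetic difference is that the paper first absorbs the factor $\mathbb{E}[H_T]^{\gamma}$ into a fixed constant $C$ (via $\mathbb{E}[H_T]\le C\,\mathbb{E}[H_T]^{1-\gamma}$) before applying Lemma~\ref{f.low.est}, whereas you carry $C=\mathbb{E}[H_T]^{\gamma}$ through the lemma and bound it afterwards; the resulting estimates are equivalent.
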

\begin{proof}
    Let $G(T, x, y) \coloneqq \mathbb{E}[H_T d(\hat{\lambda}^{1, T} H_T)]$.
    Because
    \[
        \sup_{x > \epsilon} \left( \frac{x - M(x, y)}{x} \right) = 1 - \inf_{x > \epsilon, T > 0} \frac{G(T, x, y)}{x},
    \]
    it suffices to prove 
    \[
    \inf_{x > \epsilon, T > 0} \frac{G(T, x, y)}{x} > - \infty.
    \]
    From (\ref{H.d.ineq}), 
    \[
       f(x, G) = \frac{|G|}{C + (x - G)^{\frac{\alpha}{q - 1}}} \leq K E[H_T]^{1 - \frac{\alpha}{q - 1}},
    \]
    where we define $f$ as in Lemma~\ref{f.low.est} with $\gamma = \frac{\alpha}{q - 1}$. Lemma~\ref{f.low.est} implies that when $\frac{G}{x} < - 1$,
    \[
     \frac{|G|^{1 - \gamma}}{C x^{-\gamma} + 2^\gamma} \leq f(x, G) \leq K E[H_T]^{1 - \gamma}, 
    \]
    which means that
    \begin{align*}
        \frac{G}{x} &\geq -\frac{1}{x} \left\{ K \left( Cx^{-\gamma} + 2^\gamma \right)  \right\}^{\frac{1}{1 - \gamma}} \mathbb{E}[H_T] \\
        &= -\left\{ K \left( Cx^{-1} + 2^\gamma x^{\gamma - 1} \right)  \right\}^{\frac{1}{1 - \gamma}}  \mathbb{E}[H_T] \\
        &> -\infty
    \end{align*} 
    Therefore, we obtain
    \begin{align*}
            \inf_{x > \epsilon, T > 0} \frac{G(T, x, y)}{x} &\geq - \sup_{x > \epsilon} \left\{ K \left( Cx^{-1} + 2^\gamma x^{\gamma - 1} \right)  \right\}^{\frac{1}{1 - \gamma}} \cdot \sup_{T > 0} \mathbb{E}[H_T] \\
            &= -  \left\{ K \left( C\epsilon^{-1} + 2^\gamma \epsilon^{\gamma - 1} \right)  \right\}^{\frac{1}{1 - \gamma}} \cdot \sup_{T > 0} \mathbb{E}[H_T] \\
            &> -\infty.
    \end{align*}
\end{proof}

\begin{proof}[Proof of Theorem~\ref{uni.conv.myopic}]
    From Theorem~\ref{myopic.turnpike} and Lemma~\ref{uni.M.lb}, 
    \begin{align*}
        \sup_{x > \epsilon} \left|  \frac{\hat{\pi}^{1, M} (T, x, y)}{x} - \frac{\hat{\pi}^{2, M} (T, x, y)}{x} \right| &\leq K(2 - q) |\sigma^\top(y)^{-1} \theta(y)| \left( \frac{\mathbb{E}[H_T]}{\epsilon} + \sup_{x > \epsilon}\left( \frac{x - M}{x} \right)^{\frac{\alpha}{q - 1}} \left( \frac{\mathbb{E}[H_T]}{\epsilon} \right)^{1 - \frac{\alpha}{q - 1}} \right) \\
        &= O \left( \mathbb{E}[H_T]^{1 - \frac{\alpha}{q - 1}} \right).
    \end{align*}
\end{proof}

\begin{proof}[Proof of Theorem~\ref{wealth.turnpike}]
    Because $\left(  H_t| \hat{X}^{1, T}_t - \hat{X}^{2, T}_t |\right)_t$ is a submartingale,
    \[
        \mathbb{E}[ H_t| \hat{X}^{1, T}_t - \hat{X}^{2, T}_t |]  \leq  \mathbb{E}[ H_T| \hat{X}^{1, T}_T - \hat{X}^{2, T}_T |] 
    \]
    holds. By the identity $\hat{X}^{i, T}_T = I_i \left( \hat{\lambda}^{i, T} H_T \right)$ and $  \mathbb{E}[H_T d (\hat{\lambda}^{1, T} H_T)] = \left\{ \left(  \hat{\lambda}^{1, T} \right)^{q - 1} - \left( \hat{\lambda}^{2, T} \right)^{q - 1} \right\} \mathbb{E}[H_T^{q}]$, it follows that
    \begin{align*}
        \mathbb{E}\left[ H_T| \hat{X}^{1, T}_T - \hat{X}^{2, T}_T | \right] &\leq \mathbb{E}\left[ H_T \left| I_1 \left( \hat{\lambda}^{1, T} H_T \right) - I_2 \left( \hat{\lambda}^{1, T} H_T \right)  \right| \right] +   \mathbb{E}\left[ H_T \left| I_2 \left( \hat{\lambda}^{1, T} H_T \right) - I_2 \left( \hat{\lambda}^{2, T} H_T \right)  \right| \right] \\
        &= \mathbb{E}\left[ H_T \left| d\left( \hat{\lambda}^{1, T} H_T \right)  \right| \right] + \left| \left( \hat{\lambda}^{1, T} \right)^{q - 1} - \left( \hat{\lambda}^{2, T} \right)^{q - 1} \right| \cdot \mathbb{E}\left[H_T^q\right] \\
        &\leq 2 \mathbb{E}\left[ H_T \left| d\left( \hat{\lambda}^{1, T} H_T \right)  \right| \right]. \\
    \end{align*}
    By combining the above inequalities with Proposition~\ref{d.bound}, we get
    \begin{align*}
        \sup_{t \in [0, T]}\mathbb{E}[ H_t| \hat{X}^{1, T}_t - \hat{X}^{2, T}_t |] &\leq 2 \mathbb{E}\left[ H_T \left| d\left( \hat{\lambda}^{1, T} H_T \right)  \right| \right] \\
        &\leq 2K \left( \mathbb{E}[H_T] +  (\hat{\lambda}^{1, T})^\alpha \mathbb{E}[H_T^{1+\alpha}] \right) \\
        &\leq 2K \left( \mathbb{E}[H_T] + (x - M)^{\frac{\alpha}{q - 1}} \mathbb{E} [H_T]^{1 - \frac{\alpha}{q-1}} \right) \\
        &= O \left( \mathbb{E} [H_T]^{1 - \frac{\alpha}{q-1}} \right), \quad (T \nearrow \infty).
    \end{align*}
\end{proof}

\subsection{Proofs for Sect.~\ref{Excess.Turnpike}}
For excess hedging demands $\hat{\pi}^{i, H}$, the following estimate holds under Assumptions \ref{coeff.asp}--\ref{pref.asp}. 
\begin{proposition}\label{excess.hedging.bdd}
We assume Assumptions \ref{coeff.asp}--\ref{pref.asp}. Let $M_T^\gamma \coloneqq \frac{H_T^\gamma}{\mathbb{E}[H_T^\gamma]},\; d\mathbb{Q}^\gamma_T \coloneqq M_T^\gamma d\mathbb{P}, \; \gamma \in [0, 1],$ and $M \in ( - \infty, x]$ be as in Proposition~\ref{d.bound}. Then
\begin{multline*}
        |\hat{\pi}^{1, H}(T, x, y) - \hat{\pi}^{2, H}(T, x, y) | \leq \left| (\sigma^{-1}(y))^\top a^\top(y)  \right| \left\{ q \left| \mathbb{E}[H_T d (\hat{\lambda}^{1, T} H_T)] \right| \cdot \mathbb{E}^{\mathbb{Q}^q_T}[|L_T|] \right. \\ \left. + 2K \left( \mathbb{E}[H_T] \cdot \mathbb{E}^{\mathbb{Q}^1_T}[|L_T|] + (x - M)^{\frac{\alpha}{q - 1}} \cdot \mathbb{E}[H_T]^{1 - \frac{\alpha}{q - 1}} \cdot \mathbb{E}^{\mathbb{Q}^{1+\alpha}_T}[|L_T|]\right) \right\},
\end{multline*}
where 
\begin{equation}\label{L}
    L_T \coloneqq \int_0^T (Dr(Y_u) \nabla_y Y_u )^\top du + \int_0^T \left( D\theta (Y_u) \nabla_y Y_u \right)^\top  dW_u + \int_0^T \left( D\theta (Y_u) \nabla_y Y_u \right)^\top \theta(Y_u) du.
\end{equation}

\end{proposition}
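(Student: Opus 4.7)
The plan is to mirror the decomposition strategy used in the proof of Theorem~\ref{myopic.turnpike}, but this time carrying along the stochastic factor $L_T$ that appears inside the excess hedging demand. Writing $J_i(z) \coloneqq zI_i^\prime(z)$, the feedback formula from Proposition~\ref{feedback.form} together with the identity $\nabla_y H_\tau = -H_\tau L_\tau$ gives
\begin{equation*}
    \hat{\pi}^{i, H}(T, x, y) = -(\sigma^\top(y))^{-1} a^\top(y)\, \mathbb{E}\!\left[H_T L_T \bigl(I_i(\hat{\lambda}^{i, T} H_T) + J_i(\hat{\lambda}^{i, T} H_T)\bigr)\right].
\end{equation*}
For the CRRA utility, $I_2(z) = z^{q-1}$ and $J_2(z) = (q-1)z^{q-1}$, so $I_2 + J_2 = q z^{q-1}$, which converts the $i=2$ term into $-q(\hat{\lambda}^{2, T})^{q-1}(\sigma^\top(y))^{-1} a^\top(y)\,\mathbb{E}[H_T^q L_T]$.

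Next I would decompose $I_1 + J_1 = (I_2 + J_2) + d + e$, where $d(z) = I_1(z) - I_2(z)$ and $e(z) = J_1(z) - J_2(z)$. Splitting $\hat{\pi}^{1,H} - \hat{\pi}^{2,H}$ accordingly produces two contributions: (a) a ``multiplier-gap'' term proportional to $\bigl((\hat{\lambda}^{1,T})^{q-1} - (\hat{\lambda}^{2,T})^{q-1}\bigr)\mathbb{E}[H_T^q L_T]$, and (b) a ``preference-gap'' term $\mathbb{E}[H_T L_T (d+e)(\hat{\lambda}^{1,T}H_T)]$. For (a), the budget constraints
\[
    x = \mathbb{E}[H_T d(\hat{\lambda}^{1,T}H_T)] + (\hat{\lambda}^{1,T})^{q-1}\mathbb{E}[H_T^q], \qquad x = (\hat{\lambda}^{2,T})^{q-1}\mathbb{E}[H_T^q],
\]
yield $(\hat{\lambda}^{1,T})^{q-1} - (\hat{\lambda}^{2,T})^{q-1} = -\mathbb{E}[H_T d(\hat{\lambda}^{1,T}H_T)]/\mathbb{E}[H_T^q]$, so dividing numerator and denominator by $\mathbb{E}[H_T^q]$ turns the term into $q\,\mathbb{E}[H_T d(\hat{\lambda}^{1,T}H_T)]\,\mathbb{E}^{\mathbb{Q}_T^q}[L_T]$; a triangle inequality and Jensen then give the bound $q|\mathbb{E}[H_T d(\hat{\lambda}^{1,T}H_T)]|\,\mathbb{E}^{\mathbb{Q}_T^q}[|L_T|]$ as desired.

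For (b), Assumption~\ref{pref.asp}(ii) bounds $|d(z)| + |e(z)| \leq 2K(1 + z^\alpha)$, so
\begin{equation*}
    |\mathbb{E}[H_T L_T (d+e)(\hat{\lambda}^{1,T}H_T)]| \leq 2K\,\mathbb{E}[H_T |L_T|] + 2K(\hat{\lambda}^{1,T})^\alpha\,\mathbb{E}[H_T^{1+\alpha}|L_T|].
\end{equation*}
Rewriting $\mathbb{E}[H_T^\gamma |L_T|] = \mathbb{E}[H_T^\gamma]\,\mathbb{E}^{\mathbb{Q}_T^\gamma}[|L_T|]$ for $\gamma = 1$ and $\gamma = 1 + \alpha$ handles the first summand, and for the second I would reuse the chain of inequalities already established inside the proof of Proposition~\ref{d.bound}: using the budget relation to write $(\hat{\lambda}^{1,T})^\alpha = ((x - \mathbb{E}[H_T d])/\mathbb{E}[H_T^q])^{\alpha/(q-1)}$, then applying Lemma~\ref{holder.lem} to $\mathbb{E}[H_T^{1+\alpha}]$, and finally bounding $x - \mathbb{E}[H_T d] \leq x - M$ (which is licit because $\alpha/(q-1) \geq 0$). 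This yields $(\hat{\lambda}^{1,T})^\alpha\mathbb{E}[H_T^{1+\alpha}] \leq (x - M)^{\alpha/(q-1)}\mathbb{E}[H_T]^{1 - \alpha/(q-1)}$ and produces the stated bound after an overall factor of $|(\sigma^{-1}(y))^\top a^\top(y)|$ is pulled out.

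The proof is largely bookkeeping; the one subtlety worth flagging is the use of the myopic measures $\mathbb{Q}_T^\gamma$ to separate the utility-dependent gaps from the factor integrand $L_T$. The estimate is only useful because the $\mathbb{E}^{\mathbb{Q}_T^\gamma}[|L_T|]$ factors must later be shown to grow at most polynomially in $T$, which will be the genuine analytical work in Sect.~\ref{Excess.Turnpike} and relies on Assumption~\ref{LQMarket.asp}; at the present proposition, however, these factors are treated as given, so no additional analytic input beyond Lemma~\ref{holder.lem} and Proposition~\ref{d.bound} is required.
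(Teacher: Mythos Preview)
Your proposal is correct and follows essentially the same approach as the paper: the paper also writes $\hat{\pi}^{i,H}$ via $F_i(z)=I_i(z)+zI_i'(z)$ and $\nabla_yH_T=-H_TL_T$, splits the difference into a preference-gap term $(\text{I})$ bounded by $2K(1+z^\alpha)$ and a multiplier-gap term $(\text{II})$ handled by the budget identities, and then rewrites each piece through the myopic measures $\mathbb{Q}^\gamma_T$ exactly as you outline. The only cosmetic difference is that the paper works directly with $F_1-F_2$ rather than naming $d$ and $e$ separately, and it cites the bound $(\hat{\lambda}^{1,T})^\alpha\mathbb{E}[H_T^{1+\alpha}]\leq (x-M)^{\alpha/(q-1)}\mathbb{E}[H_T]^{1-\alpha/(q-1)}$ as a consequence of Proposition~\ref{d.bound} rather than re-deriving it.
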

\begin{proof}
Recall that the excess hedging demand is given by 
\begin{align*}
    \hat{\pi}^{i, H}(T, x, y) &= (\sigma^\top(y))^{-1}  a^\top(y) \mathbb{E} \left[ \nabla_y H_T  F_i\left( \hat{\lambda}^{i, T}H_T \right) \right],
\end{align*}
where $F_i(z) \coloneqq I_i(z) + z I_i^\prime(z)$ and $\nabla_y H_T = - H_T L_T$.

\begin{align*}
    \mathbb{E} \left[ \nabla_y H_T \left( F_1(\hat{\lambda}^{1, T} H_T) - F_2( \hat{\lambda}^{2,T} H_T ) \right) \right] &=\mathbb{E} \left[ \nabla_y H_T \left( F_1(\hat{\lambda}^{1, T} H_T) - F_2( \hat{\lambda}^{1,T} H_T ) \right) \right] \\ & \qquad \qquad \qquad + \mathbb{E} \left[ \nabla_y H_T \left( F_2(\hat{\lambda}^{1, T} H_T) - F_2( \hat{\lambda}^{2,T} H_T ) \right) \right] \\
    & \eqqcolon (\text{I}) + (\text{II}).
\end{align*} 
First, we evaluate $(\text{I})$. Assumption~\ref{pref.asp}(ii) implies $|F_1(z) - F_2(z)| \leq 2K(1 + z^\alpha)$ and Proposition~\ref{d.bound} implies $(\hat{\lambda}^{1, T})^\alpha \mathbb{E}[H_T^{1+\alpha}] \leq (x - M)^{\frac{\alpha}{q - 1}} \mathbb{E} [H_T]^{1 - \frac{\alpha}{q-1}}$. Using these inequalities and myopic probability measures $d\mathbb{Q}^{\gamma}_T \coloneqq \frac{H_T}{\mathbb{E}[H_T]} d\mathbb{P}$, it follows that 
\begin{equation}\label{I.eval}
    \begin{aligned}
                |(\text{I})| &\leq \mathbb{E} \left[ H_T |L_T| \left| F_1(\hat{\lambda}^{1, T} H_T) - F_2( \hat{\lambda}^{1,T} H_T ) \right| \right]  \\
        &\leq 2K\left( \mathbb{E}[|L_T| H_T] + \left( \hat{\lambda}^{1, T} \right)^{\alpha} \mathbb{E}[|L_T| H_T^{\alpha + 1}] \right) \\
        &\leq 2K \left( \mathbb{E}[H_T] \mathbb{E}^{\mathbb{Q}^1_T}[|L_T|] + (x - M)^{\frac{\alpha}{q - 1}} \mathbb{E}[H_T]^{1 - \frac{\alpha}{q - 1}} \mathbb{E}^{\mathbb{Q}^{1+\alpha}_T}[|L_T|]\right).
    \end{aligned}
\end{equation}
Next, we evaluate $(\text{II})$. Because $F_2(z) = q z^{q-1}$,
\[
    (\text{II}) = q \left\{ \left( \hat{\lambda}^{1, T} \right)^{q-1} - \left( \hat{\lambda}^{2, T} \right)^{q-1} \right\} \mathbb{E}\left[ \nabla_y H_T H_T^{q-1} \right].
\]
Because $\hat{\lambda}^{i, T}\; (i = 1, 2)$ are determined by the identity
\begin{align*}
    x &= \mathbb{E}\left[H_T I_1(\hat{\lambda}^{1,T} H_T)\right] =  \left( \hat{\lambda}^{1, T} \right)^{q-1} \mathbb{E}\left[H_T^{q}\right] + \mathbb{E}\left[H_T d(\hat{\lambda}^{1,T} H_T)\right],\\
    x &= \mathbb{E}\left[H_T I_2(\hat{\lambda}^{2,T} H_T)\right] = \left( \hat{\lambda}^{2, T} \right)^{q-1} \mathbb{E}\left[H_T^{q}\right],
\end{align*}
it follows that
\[
    \left( \hat{\lambda}^{1, T} \right)^{q-1} - \left( \hat{\lambda}^{2, T} \right)^{q-1} = - \frac{\mathbb{E}\left[H_T d(\hat{\lambda}^{1,T} H_T)\right]}{\mathbb{E}\left[H_T^{q}\right]},
\]
which leads to
\begin{equation}\label{II.eval}
    \begin{aligned}
    (\text{II}) &=  -q\mathbb{E}\left[H_T d(\hat{\lambda}^{1,T} H_T)\right] \frac{\mathbb{E}\left[ \nabla_y H_T H_T^{q-1} \right]}{\mathbb{E}\left[H_T^{q}\right]}\\
    &= q\mathbb{E}\left[H_T d(\hat{\lambda}^{1,T} H_T)\right] \mathbb{E}^{\mathbb{Q}^{q}_T}\left[ L_T\right].
    \end{aligned}
\end{equation}
Therefore, the statement follows from (\ref{I.eval}) and (\ref{II.eval}).
\end{proof}
To estimate the convergence rate of the turnpike theorem for excess hedging demands in stochastic factor models (\ref{gen.stoch.fac.model}), the above proposition implies that it suffices to derive $O \left(\mathbb{E}^{\mathbb{Q}_T^\gamma} [|L_T|] \right)$ for $\gamma \in [0, 1]$, where $\mathbb{Q}^\gamma_T$ denotes the myopic probability measures used in \cite{GKRX} and $L_T$ is given by (\ref{L}).
Because the martingale density processes of myopic probabilities $\mathbb{Q}^\gamma_T$ can be computed by the optimal portfolios for CRRA investors, which can be represented by solutions to semilinear PDEs, the estimation of $O \left(\mathbb{E}^{\mathbb{Q}_T^\gamma} [|L_T|] \right)$ requires the asymptotic behavior of the solutions to semilinear PDEs. Here, we restrict our models to the quadratic term structure model given by (\ref{QTSM1}) and (\ref{QTSM2}) and use the asymptotic properties of the solutions to the Riccati differential equations. Further research for general stochastic factor models (\ref{gen.stoch.fac.model}) is postponed to future work.
\begin{remark}
When $\gamma = 0$, $\mathbb{Q}^\gamma_T = \mathbb{P}$. When $\gamma = 1$, $d\mathbb{Q}^\gamma_T = \frac{H_T}{\mathbb{E}[H_T]} d\mathbb{P}$ is a $T$-forward measure under which the price process of a zero-coupon bond is chosen as numéraire. When $\gamma \in (0, 1)$, as Proposition~\ref{CRRA.opt.inv.prob} will say, $d\mathbb {Q}^\gamma_T = H_T \cdot \hat{X}^T_T d\mathbb{P}$, where $\hat{X}^T$ is the optimal wealth process for a CRRA investor with a utility function $x \mapsto \frac{x^p}{p},\; p = \frac{\gamma}{\gamma - 1}$ and initial unit wealth. Therefore, in the case of $\gamma \in (0, 1)$, we choose $\hat{X}^T$ as numéraire under the myopic probability $\mathbb{Q}^\gamma_T$.
\end{remark}
\subsubsection{Martingale density processes of the myopic probabilities}\label{heg.bond}
In Sects.~\ref{heg.bond} and \ref{proof.excess.turnpike}, we consider the quadratic term structure model given by (\ref{QTSM1}) and (\ref{QTSM2}) and assume Assumptions \ref{pref.asp} and \ref{stable.asp}.
In this subsection, we first consider the pricing and hedging problem for a $T$-bond (Proposition~\ref{myopic.density.gamma1}) and utility maximization problems for CRRA investors (Proposition~\ref{CRRA.opt.inv.prob}).
Using these results, we compute martingale density processes of myopic probabilities $\mathbb{Q}^\gamma_T$ (Proposition~\ref{myopic.prob.density}).  
\begin{proposition}\label{myopic.density.gamma1} 
The price of a $T$-bond at time $t$ is given by
\begin{equation}\label{price.zc}
       F(t, y) = \mathbb{E}^\mathbb{Q} \left[ \left. \exp \left( - \int_t^T r(Y_v) dv \right)\right| Y_t = y \right] = \exp \left( -\alpha(t;T) - \beta(t;T)^\top y - \frac{1}{2} y^\top C(t;T) y \right),
\end{equation}
where $\alpha(\cdot; T):[0, T] \to \mathbb{R}, \;\beta(\cdot;T):[0, T] \to \mathbb{R}^m,$ and $ C(\cdot; T):[0, T] \to \mathbb{S}^m_+$ solve the following system of ordinary differential equations:
\begin{align}
    &\begin{aligned}\label{C.Riccati}
        &\dot{C}(t) - C(t)^\top \Lambda \Lambda^\top C(t) + \Tilde{B}^\top C(t) + C(t) \Tilde{B} + R_2 = 0,\\
        & C(T) = 0, 
    \end{aligned}\\
    &\begin{aligned}\label{beta.Riccati}
        &\dot{\beta}(t) + (\Tilde{B} - \Lambda \Lambda^\top C(t)) \beta(t) + C^\top(t) \Tilde{b} + r_1 = 0,\\
        &\beta(T) = 0, \\
    \end{aligned}\\
    &\begin{aligned}\label{alpha.Riccati}
        &\dot{\alpha}(t) + \frac{1}{2}\mathrm{Tr}(\Lambda \Lambda^\top C(t)) - \frac{1}{2}\beta^\top(t) \Lambda \Lambda^\top \beta(t) + \Tilde{b}\beta(t) + r_0 = 0,\\
        &\alpha(T) = 0,
    \end{aligned}
\end{align}
where  $\Tilde{B} \coloneqq B - \Lambda A, \; \Tilde{b} \coloneqq b - \Lambda a$.
Furthermore, the portfolio proportion process $\hat{\pi}$ that hedges a $T$-bond and the corresponding wealth process $X^{x, \hat{\pi}}$ with initial wealth $x = F(0, y)$ are given by 
\begin{align*}
    \hat{\pi}_t &= - (\Lambda \Sigma^{-1})^\top \left( \beta(t;T) + C(t;T) Y_t \right), \\
    X^{x, \hat{\pi}}_t &= F(t, Y_t).
\end{align*}

\end{proposition}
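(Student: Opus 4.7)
The plan is a verification argument. Under $\mathbb Q$, Girsanov turns the factor process into $dY_t = (\tilde b + \tilde B Y_t)\,dt + \Lambda\,dW^{\mathbb Q}_t$ with $\tilde b = b - \Lambda a$ and $\tilde B = B - \Lambda A$, so by Feynman--Kac the $T$-bond price $F(t, y)$ should satisfy
\begin{equation*}
\partial_t F + (\tilde b + \tilde B y)^\top \nabla F + \tfrac12 \mathrm{Tr}(\Lambda\Lambda^\top D^2 F) - r(y) F = 0, \qquad F(T, y) = 1.
\end{equation*}
I would first construct a candidate $F$ of the stated exponential quadratic form and verify the PDE, then use It\^o to identify $F(t, Y_t)$ with the expectation in~(\ref{price.zc}) and read off the hedging portfolio.

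\medskip

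For the construction, since $R_2 \in \mathbb S^m_+$ by Assumption~\ref{LQMarket.asp}(ii), the backward matrix Riccati equation~(\ref{C.Riccati}) with $C(T) = 0$ admits a unique symmetric positive semidefinite solution on $[0, T]$ by the standard LQ theory recalled in the appendix. Given $C$, equation~(\ref{beta.Riccati}) is a linear ODE in $\beta$ with continuous coefficients, hence has a unique global solution, and $\alpha$ is then obtained by integration of~(\ref{alpha.Riccati}). With $(\alpha, \beta, C)$ in hand, define $F$ by~(\ref{price.zc}); direct differentiation gives $\nabla F = -F(\beta + C y)$ and $D^2 F = F\bigl[(\beta + Cy)(\beta + Cy)^\top - C\bigr]$. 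Substituting into the PDE, dividing by $F$, and separating the resulting polynomial in $y$ into its constant, linear, and quadratic parts (after symmetrizing $y^\top \tilde B^\top C y = \tfrac12 y^\top(\tilde B^\top C + C \tilde B)y$) reduces the identity to the three ODEs~(\ref{alpha.Riccati}),~(\ref{beta.Riccati}), and~(\ref{C.Riccati}); the zero terminal conditions on $\alpha, \beta, C$ match $F(T, y) \equiv 1$.

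\medskip

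Applying It\^o's formula under $\mathbb Q$ to $e^{-\int_0^t r(Y_v)\,dv} F(t, Y_t)$ annihilates the drift by the PDE, leaving a $\mathbb Q$-local martingale. Upgrading this to a true martingale is the main technical obstacle: $F$ has Gaussian-exponential tails like $\exp(\tfrac12 |y|^2 \|C(t;T)\|)$, so one must control $\mathbb E^{\mathbb Q}[\exp(\lambda |Y_s|^2)]$ for sufficiently small $\lambda > 0$. Because $Y$ is a multivariate Ornstein--Uhlenbeck process under $\mathbb Q$, its explicit Gaussian law together with the uniform bound $\sup_{t \in [0, T]}\|C(t;T)\| < \infty$ yields the required exponential moments, after which a localization plus uniform integrability argument promotes the local martingale. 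Taking $\mathbb E^{\mathbb Q}[\,\cdot\,|\,Y_t = y]$ and using $F(T, Y_T) \equiv 1$ yields~(\ref{price.zc}).

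\medskip

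For the hedging statement, I apply It\^o under $\mathbb P$ to $F(t, Y_t)$ using the $\mathbb P$-drift $b + BY_t$; the identity $b + By - (\tilde b + \tilde B y) = \Lambda(a + Ay) = \Lambda \theta(y)$ combined with the PDE gives
\begin{equation*}
dF(t, Y_t) = \bigl[\,r(Y_t) F + \theta(Y_t)^\top \Lambda^\top \nabla F\,\bigr]\,dt + \nabla F^\top \Lambda\,dW_t.
\end{equation*}
Matching with the self-financing wealth dynamics $dX_t = X_t[r + \hat\pi_t^\top(\mu - r\mathbf 1)]\,dt + X_t \hat\pi_t^\top \Sigma\,dW_t$ for the replicating proportion $\hat\pi$ started at $X_0 = F(0, y)$ forces $X_t \hat\pi_t^\top \Sigma = \nabla F^\top \Lambda$, and the drifts then agree automatically via $\mu - r\mathbf 1 = \Sigma \theta$. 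Substituting $\nabla F = -X_t(\beta(t;T) + C(t;T) Y_t)$ produces $\hat\pi_t = -(\Lambda \Sigma^{-1})^\top(\beta(t;T) + C(t;T) Y_t)$, as claimed.
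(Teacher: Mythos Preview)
Your overall plan---write the pricing PDE under $\mathbb Q$, plug in the exponential-quadratic ansatz to reduce to the Riccati system, verify the discounted candidate is a martingale, and then read off the hedge via It\^o under $\mathbb P$---is exactly the route the paper takes (it packages the first and last steps into its Appendix~\ref{opt.pricing}, Theorem~\ref{comp.stoc.hedg}). So the strategy is correct and matches the paper.

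There is one slip in your martingale step. You write that $F$ has ``Gaussian-exponential tails like $\exp(\tfrac12|y|^2\|C(t;T)\|)$'' and therefore worry about controlling $\mathbb E^{\mathbb Q}[\exp(\lambda|Y_s|^2)]$. The sign is backwards: since $C(t;T)\in\mathbb S^m_+$ (which you yourself invoke from the LQ theory), the quadratic in the exponent is nonpositive, so $F(t,y)\le \exp(-\alpha(t)-\beta(t)^\top y)$. Likewise $R_2\ge 0$ gives $e^{-\int_0^t r(Y_v)\,dv}\le e^{-r_0 t-\int_0^t r_1^\top Y_v\,dv}$. Hence the discounted candidate is dominated by an exponential of an affine functional of the Gaussian path $(Y_s)_{s\le t}$, which has all moments; uniform integrability is immediate and no quadratic-exponential moment is needed. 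The paper sidesteps this entirely by a slightly different (and cleaner) verification: it writes $H_t F(t,Y_t)$ under $\mathbb P$ as the stochastic exponential
\[
x\cdot\mathcal E\!\left(\int_0^{\cdot}\bigl\{-(\Lambda^\top C(s;T)+A)Y_s-\Lambda^\top\beta(s;T)-a\bigr\}^\top dW_s\right)_t,
\]
whose integrand is affine in $Y$, and then cites the Liptser--Shiryaev criterion (\cite[Section~6.2]{LS}) for such exponentials to be true martingales. Either route works; just correct the tail reasoning in yours.
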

\begin{proof} 
This proposition follows from well-known arguments in option pricing theory, which for convenience we include in Appendix~\ref{opt.pricing}. By Theorem~\ref{comp.stoc.hedg}, it suffices to check that the pricing PDE (\ref{pricing.PDE}) for a $T$-bond has a unique solution $F$ and $(H_t X^{x, \pi}_t)$ is $\mathbb{P}$-martingale, where $x$ and $\pi$ are given in terms of the solution $F$.
First, the pricing PDE is given by
\begin{equation}\label{pricing.PDE}
\begin{aligned}
        \partial_t F + D_yF^\top \left( b - \Lambda a + (B - \Lambda A) y \right) + \frac{1}{2}\mathrm{Tr} \left[ \Lambda \Lambda^\top D_{yy}F \right] - r(y) F &= 0,\\
        F(T, y) &= 1.
\end{aligned}
\end{equation}
If $F$ has the form 
\[
    F(t, y) = \exp \left( -\alpha(t;T) - \beta(t;T)^\top y - \frac{1}{2} y^\top C(t;T) y \right), 
\]
then  $\alpha(\cdot; T):[0, T] \to \mathbb{R}, \;\beta(\cdot;T):[0, T] \to \mathbb{R}^m,$ and $ C(\cdot; T):[0, T] \to \mathbb{S}^m_+$ are solutions to the ODEs (\ref{C.Riccati}), (\ref{beta.Riccati}), and (\ref{alpha.Riccati}).
By Theorem~\ref{exist.uniq.RDE}, the Riccati equation (\ref{C.Riccati}) has a unique solution, and thus the linear ODEs (\ref{beta.Riccati}) and (\ref{alpha.Riccati}) have unique solutions. Therefore, Theorem~\ref{comp.stoc.hedg}(i) implies that the replicating cost $x$, the hedging portfolio proportion process $ \hat{\pi} = (\hat{\pi_t})_{t \in [0, T]}$, and the corresponding wealth process $X^{x, \hat{\pi}}$ for a $T$-bond are given by
\begin{align*}
x = F(0, y), \quad    \hat{\pi}_t = - (\Lambda \Sigma^{-1})^\top \left( \beta(t;T) + C(t;T) Y_t \right), \quad  X^{x, \hat{\pi}}_t  = F(t, Y_t). 
\end{align*}
By Ito's formula,
\begin{align*}
     H_t X^{x, \hat{\pi}}_t &= x + \int_0^t H_s X^{x, \hat{\pi}}_s \left( \Sigma^\top \hat{\pi}_s - \theta(Y_s) \right)^\top dW_s \\
     &= x + \int_0^t H_s X^{x, \hat{\pi}}_s  \left\{ -( \Lambda^\top C(s;T) + A) Y_s - \Lambda^\top \beta(s;T) - a \right\}^\top dW_s,
\end{align*}
which means that
\begin{align*}
    H_t X^{x, \hat{\pi}}_t = x \cdot \mathcal{E} \left( \int_0^{\cdot} \left\{ -( \Lambda^\top C(s;T) + A) Y_s - \Lambda^\top \beta(s;T) - a \right\}^\top dW_s \right)_t.
\end{align*}
By the same argument as in \cite[Section~6.2]{LS}, $(H_t X^{x, \hat{\pi}}_t)$ is a $\mathbb{P}$-martingale and thus Theorem~\ref{comp.stoc.hedg}(ii) implies (\ref{price.zc}).
\end{proof}

Next, we recall results on a utility maximization problem for CRRA utility, $x \mapsto \frac{x^p}{p}, \; p < 0$.
    \begin{equation}\label{CRRA.value}
        \begin{aligned}
            V(t, x, y) &\coloneqq \sup_{\pi \in \mathcal{A}(x)} \mathbb{E} \left[ \left.\frac{(X^\pi_T)^p}{p} \right| X^\pi_t = x, Y_t = y  \right],& (t, x, y) &\in [0, T) \times (0, \infty) \times \mathbb{R}^m. \\
            V(T, x, y) &\coloneqq \frac{x^p}{p},&     (x, y) &\in (0, \infty) \times \mathbb{R}^m.
        \end{aligned}
    \end{equation}

\begin{proposition}\label{CRRA.opt.inv.prob}
The value function $V$ is given by 
\[
    V(t, x, y) = \frac{x^p}{p} \exp \left( -\frac{1}{2} y^\top P(t;T) y -q(t;T)^\top y - k(t;T) \right), \quad (t, x, y) \in [0, T] \times (0, \infty) \times \mathbb{R}^m,
\]
where $P(\cdot;T):[0, T] \to \mathbb{S}^m, q(\cdot;T):[0, T] \to \mathbb{R}^m, k(\cdot;T):[0, T] \to \mathbb{R}$ satisfy the following system of ODEs:
    \begin{align}
        &\begin{aligned}\label{P.Riccati}
            &\dot{P}(t) - P(t)^\top K_0 P(t) + K_1^\top P(t) + P(t) K_1 + \frac{p}{p-1} A^\top A - p R_2 = 0,\\
            &P(T) = 0,
        \end{aligned} \\
        &\begin{aligned}\label{q.Riccati}
            &\dot{q}(t) + \left\{ K_1 - K_0 P(t)  \right\}^\top q(t) + P(t) b + \frac{p}{p-1} \left( A -  \Lambda^\top P(t)  \right)^{\top} a  -p r_1= 0,\\
            &q(T) = 0,
        \end{aligned} \\
        &\begin{aligned}\label{k.Riccati}
            &\begin{multlined}
                \dot{k}(t) + \frac{1}{2} \mathrm{Tr}(\Lambda \Lambda^\top P(t)) + \frac{1}{2(p-1)} q(t)^\top \Lambda \Lambda^\top q(t) + \left\{ b - \frac{p}{p-1}\Lambda a   \right\}^\top q(t)\\
                + \frac{p}{2(p-1)} ||a||^2 -p r_0 = 0,
            \end{multlined}\\
            &k(T) =0,
        \end{aligned}
    \end{align}
    where $K_0 \coloneqq \frac{1}{1-p}\Lambda\Lambda^\top,\; K_1 \coloneqq B - \frac{p}{p-1}\Lambda A$.
    Moreover, the optimal portfolio proportion process $\hat{\pi}_t$ and the optimal terminal wealth $X^{x, \hat{\pi}}_T$  for the problem (\ref{CRRA.value}) are given by
    \begin{align}
        \hat{\pi}_t &= \frac{1}{1 - p} (\Sigma^\top)^{-1} \theta(Y_t) - \frac{1}{1 - p} (\Sigma^\top)^{-1}\Lambda^\top\left( P(t;T)Y_t + q(t;T) \right), \label{cand.opt.portfolio} \\
        X^{x, \hat{\pi}}_T &= x \frac{H_T^{q-1}}{\mathbb{E}[H_T^q]}. \label{opt.terminal.wealth}
    \end{align}
\end{proposition}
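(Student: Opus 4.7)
The plan is a classical dynamic programming / HJB approach combined with a CRRA-homogeneity ansatz, followed by a verification argument. I would first derive the HJB equation
\[
    V_t + (b+By)^\top V_y + \tfrac{1}{2}\mathrm{Tr}(\Lambda\Lambda^\top V_{yy}) + r(y) x V_x + \sup_{\pi}\Big\{ \pi^\top \Sigma\theta(y) V_x + \tfrac{1}{2}\pi^\top \Sigma\Sigma^\top \pi V_{xx} + \pi^\top \Sigma \Lambda^\top V_{xy} \Big\} = 0,
\]
whose interior maximizer is $\hat{\pi}^\star = -\frac{V_x}{V_{xx}}(\Sigma^\top)^{-1}\theta(y) - \frac{1}{V_{xx}}(\Sigma^\top)^{-1}\Lambda^\top V_{xy}$. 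Substituting this back produces a fully nonlinear PDE in $(t,x,y)$ with terminal condition $V(T,x,y) = x^p/p$.

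Next I would make the separable ansatz $V(t,x,y) = \frac{x^p}{p} \exp\!\left(-\frac{1}{2} y^\top P(t) y - q(t)^\top y - k(t)\right)$, motivated by the homogeneity of CRRA utility in $x$ and by the linear-quadratic structure of $r(y)$ and $\theta(y)$. After substitution the $x^p/p$ factor cancels, and matching the coefficients of $y^\top(\cdot)y$, of $y$, and of the constant terms in the resulting PDE produces exactly the ODE system \eqref{P.Riccati}--\eqref{k.Riccati}. Using $V_x/V_{xx} = -x/(1-p)$ and $V_{xy}/V_{xx} = -x(Py+q)/(1-p)$, the candidate feedback rearranges to \eqref{cand.opt.portfolio}.

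I would then establish well-posedness of the ODE system on $[0,T]$: the matrix Riccati equation \eqref{P.Riccati} admits a unique global solution $P \in C^1([0,T];\mathbb{S}^m)$ by Theorem~\ref{exist.uniq.RDE} from the appendix, whose hypotheses are supplied by Assumption~\ref{LQMarket.asp}(ii)--(iii). Once $P$ is continuous on $[0,T]$, \eqref{q.Riccati} is a linear inhomogeneous ODE for $q$ and \eqref{k.Riccati} reduces to a quadrature for $k$. A standard verification then applies: for any admissible $\pi$, Ito's formula applied to $V(t,X_t^{x,\pi},Y_t)$ yields a nonpositive drift that vanishes exactly along $\hat{\pi}$, so $V(t,X_t^{x,\pi},Y_t)$ is a supermartingale for every admissible $\pi$ and a true martingale at $\pi=\hat{\pi}$; the integrability needed to upgrade the local martingale to a true martingale is supplied by the stability of $B$ (Assumption~\ref{LQMarket.asp}(iv)) and boundedness of $\Lambda$, which yield the required exponential moment bounds on $Y$. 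The representation \eqref{opt.terminal.wealth} is then immediate from Theorem~\ref{Optimal Terminal Wealth}: for $U(x)=x^p/p$ one has $I(z)=z^{q-1}$, and the budget identity $x = \mathbb{E}[H_T I(\hat{\lambda} H_T)] = \hat{\lambda}^{q-1}\mathbb{E}[H_T^q]$ determines $\hat{\lambda}^{q-1} = x/\mathbb{E}[H_T^q]$, so $\hat{X}_T = x H_T^{q-1}/\mathbb{E}[H_T^q]$.

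The main obstacle I anticipate is the non-explosion of the Riccati solution $P$ on the full interval $[0,T]$, together with the integrability needed in the verification step. Both points rest on the interplay between the positive semidefiniteness of $R_2$ and of $\gamma(1-\gamma)A^\top A + \gamma R_2$ for the relevant values of $\gamma$ (Assumption~\ref{LQMarket.asp}(ii)--(iii)) and the stability of $B$ (Assumption~\ref{LQMarket.asp}(iv)); together they prevent blow-up of $P$ and ensure that the exponential factor in $V$ has finite moments of every order, which both makes the verification rigorous and prepares the uniform-in-$T$ bounds required for the turnpike results in Sect.~\ref{Excess.Turnpike}.
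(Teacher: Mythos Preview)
Your derivation of the HJB equation, the CRRA ansatz, the reduction to the Riccati system, and the existence argument via Theorem~\ref{exist.uniq.RDE} all match the paper. Your route to \eqref{opt.terminal.wealth} via Theorem~\ref{Optimal Terminal Wealth} (computing $I(z)=z^{q-1}$ and solving the budget constraint for $\hat{\lambda}$) is also correct.

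The difference lies in the verification step. You invoke the standard argument: apply It\^o to $V(t,X_t^{x,\pi},Y_t)$, obtain a local supermartingale for general $\pi$ and a local martingale for $\hat{\pi}$, then upgrade using exponential moment bounds on $Y$. The paper explicitly avoids this. Instead it checks only that $(H_t X_t^{x,\hat{\pi}})_{t\in[0,T]}$ is a true $\mathbb{P}$-martingale and then invokes Theorem~\ref{control.equiv.dual} from Appendix~C, which differentiates the HJB equation in $x$ to show $V_x(t,X_t^{x,\hat{\pi}},Y_t)=V_x(0,x,y)H_t$ and hence $X_T^{x,\hat{\pi}}=I(\hat{\lambda}H_T)$ directly. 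The advantage of the paper's route is that $H_tX_t^{x,\hat{\pi}}$ is a stochastic exponential whose integrand is \emph{affine} in $Y_t$ (see the display \eqref{HX.rep1}), so the martingale property follows from the criterion in \cite[Section~6.2]{LS} without any appeal to exponential-of-quadratic moments of $Y$. Your standard verification, by contrast, requires controlling $V(t,X_t^{x,\hat{\pi}},Y_t)=\frac{(X_t^{x,\hat{\pi}})^p}{p}\exp(-\tfrac12 Y_t^\top P Y_t-\ldots)$, which mixes a quadratic exponential in $Y$ with the wealth factor; and for the supermartingale direction with arbitrary admissible $\pi$ (where $V\le 0$ since $p<0$), a nonpositive local supermartingale is not automatically a supermartingale, so the ``stability of $B$ yields the required exponential moment bounds'' step hides more work than you indicate. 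Both arguments can be made rigorous, but the paper's detour through Theorem~\ref{control.equiv.dual} is the cleaner one here and delivers optimality and the identity \eqref{opt.terminal.wealth} in a single stroke.
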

\begin{proof}
By Theorem~\ref{exist.uniq.RDE}, the Riccati equation (\ref{P.Riccati}) and linear ODEs (\ref{q.Riccati}) and (\ref{k.Riccati}) have unique solutions. Therefore 
\[
    V(t, x, y) \coloneqq \frac{x^p}{p} \exp \left( -\frac{1}{2} y^\top P(t;T) y -q(t;T)^\top y - k(t;T) \right), \quad (t, x, y) \in [0, T] \times (0, \infty) \times \mathbb{R}^m
\]
is the solution to the HJB equation,
\begin{align*}
    & V_t + xr(y) V_x + (b +By)^\top D_y V + \frac{1}{2} \mathrm{Tr}(\Lambda \Lambda^\top D^2_{yy}V)
          - \frac{1}{2V_{xx}} \left|\theta(y) V_x + \Lambda^\top D_y V_x\right|^2 = 0, \\
    &V(T, x, y) = \frac{x^p}{p},
\end{align*}
and the candidate for the optimal portfolio proportion process $\hat{\pi}$ is given by
\[
        \hat{\pi}_t = \frac{1}{1 - p} (\Sigma^\top)^{-1} \theta(Y_t) - \frac{1}{1 - p} (\Sigma^\top)^{-1}\Lambda^\top\left( P(t;T)Y_t + q(t;T) \right). 
\]
Here, we do not use standard verification arguments. Instead, we directly show that the terminal wealth obtained by the candidate $\hat{\pi}$ matches that of the martingale methods, that is, we establish the identity (\ref{opt.terminal.wealth}). To do so, Theorem~\ref{control.equiv.dual} implies that it suffices to check that $\left( H_t X^{x, \hat{\pi}}_t \right)_{t \in [0, T]}$ is a $\mathbb{P}$-martingale. Because
\begin{equation}\label{HX.rep1}    
\begin{aligned}
    H_t X^{x, \hat{\pi}}_t &=x \cdot \mathcal{E} \left( \int_0^\cdot \left( \Sigma^\top \hat{\pi}_s - \theta(Y_s) \right)^\top dW_s \right) \\
    &= x  \cdot \mathcal{E} \left( \int_0^\cdot \left( - \frac{p}{p-1} \theta(Y_s) + \frac{1}{p - 1} \Lambda^\top \left( P(s;T)Y_s + q(s;T) \right) \right)^\top dW_s \right)  \\
    &=x \cdot \mathcal{E} \left( \int_0^{\cdot} \left\{ \left( - \frac{p}{p-1} A + \frac{1}{p-1} \Lambda^\top P(s;T) \right) Y_s -\frac{p}{p-1} a + \frac{1}{p-1} \Lambda^\top q(s;T) \right\}^\top dW_s \right)_t,
\end{aligned}
\end{equation}
the same argument as in \cite[Section~6.2]{LS} implies that $\left( H_t X^{x, \hat{\pi}}_t \right)_{t \in [0, T]}$ is a $\mathbb{P}$-martingale, which completes the proof.
\end{proof}
\begin{remark}
        If $p < 0$, $P(t;T):[0, T] \to \mathbb{S}^m_+$ always exists. If $0 < p < 1$, the solution $P$ may blow up at a finite time. See \cite{HK} for details.
\end{remark}

\begin{proposition}\label{myopic.prob.density}
Let $\gamma \in [0, 1]$. We denote by $P^\gamma, q^\gamma$ the solutions to the system of ODEs (\ref{P.Riccati}), (\ref{q.Riccati}) for $p = \frac{\gamma}{\gamma - 1}, \; \gamma \in(0, 1)$. Let $C^{\gamma}(\cdot; T):[0, T] \to \mathbb{S}^m_+,\; \beta^\gamma(\cdot;T):[0, T] \to \mathbb{R}^m$ be given by 
\begin{align*}
    C^\gamma (t;T) &\coloneqq \begin{cases}
        0, & t \in [0, T],\; \gamma  = 0,\\
        (1 - \gamma)P^\gamma(t;T), & t \in [0, T],\; \gamma \in (0, 1), \\
        C(t;T), & t \in [0, T],\; \gamma = 1,
    \end{cases} \\
    \beta^\gamma (t;T) &\coloneqq \begin{cases}
            0, & t \in [0, T],\; \gamma  = 0,\\
        (1 - \gamma)q^\gamma(t;T), & t \in [0, T],\; \gamma \in (0, 1), \\
        \beta(t;T), & t \in [0, T],\; \gamma = 1.
    \end{cases}
\end{align*}
Then, the martingale density processes of myopic probability measures $\mathbb{Q}^\gamma_T$ are given by
\[
        M^\gamma_T = \mathcal{E} \left( \int_0^\cdot \left[ - \left\{ \gamma A + \Lambda^\top C^\gamma(t;T) \right\} Y_t -\left\{ \gamma a + \Lambda^\top \beta^\gamma(t;T) \right\} \right]^\top dW_t \right)_T.
\]
Hence, $W^{\mathbb{Q}^\gamma_T} = \left( W^{\mathbb{Q}^\gamma_T}_t \right)_{t \in [0, T]}$, given by 
\begin{equation}\label{WQgamma}
    W^{\mathbb{Q}^\gamma_T}_t \coloneqq W_t + \left\{ \left( \gamma A + \Lambda^\top C^\gamma(t;T) \right) Y_t + \left( \gamma a + \Lambda^\top \beta^\gamma(t;T)  \right) \right\} dt,
\end{equation}
is an $n$-dimensional Brownian motion under the myopic probability $\mathbb{Q}^{\gamma}_T$.
Moreover, $Y = (Y_t)_{t \in [0, T]}$ satisfies
\begin{equation}\label{SDE.Y.myopic}
    dY_t = \left\{ b - \gamma \Lambda  a - \Lambda \Lambda^\top \beta^\gamma(t;T)  + \left( B - \gamma \Lambda A - \Lambda \Lambda^\top C^\gamma(t;T) \right) Y_t   \right\} dt + \Lambda dW^{\mathbb{Q}^\gamma_T}_t, \quad Y_0 = y.   
\end{equation}
\end{proposition}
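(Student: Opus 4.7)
The plan is to unify the three cases by reading $M^\gamma_T$ off the formulas already established in Propositions~\ref{myopic.density.gamma1} and \ref{CRRA.opt.inv.prob}, and then to derive the SDE for $Y$ under $\mathbb{Q}^\gamma_T$ by a routine application of Girsanov's theorem.

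The case $\gamma = 0$ is immediate: $\mathbb{Q}^0_T = \mathbb{P}$, so $M^0_T = 1 = \mathcal{E}(0)_T$, which matches the formula with $C^0 \equiv 0$ and $\beta^0 \equiv 0$. For $\gamma = 1$, the martingale density process is $M^1_t = \mathbb{E}[H_T \mid \mathcal{F}_t]/\mathbb{E}[H_T] = H_t F(t,Y_t)/F(0,y)$, where $F$ is the $T$-bond price from Proposition~\ref{myopic.density.gamma1}. The stochastic exponential representation for $H_t X^{x,\hat\pi}_t$ derived in the proof of that proposition, with $x = F(0,y)$ and $X^{x,\hat\pi}_t = F(t, Y_t)$, gives exactly
\[
M^1_T = \mathcal{E}\!\left(\int_0^\cdot \bigl[-(A + \Lambda^\top C(t;T))Y_t - (a + \Lambda^\top \beta(t;T))\bigr]^\top dW_t\right)_T,
\]
which matches the stated formula since $C^1 = C$ and $\beta^1 = \beta$.

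For the main case $\gamma \in (0,1)$, I set $p \coloneqq \gamma/(\gamma - 1) < 0$, so that $q = p/(p-1) = \gamma$. Proposition~\ref{CRRA.opt.inv.prob} then yields that the optimal terminal wealth with initial wealth $x = 1$ is $X^{1,\hat\pi}_T = H_T^{\gamma - 1}/\mathbb{E}[H_T^\gamma]$, whence
\[
H_T X^{1,\hat\pi}_T = \frac{H_T^\gamma}{\mathbb{E}[H_T^\gamma]} = M^\gamma_T.
\]
Since $(H_t X^{1,\hat\pi}_t)_{t \in [0,T]}$ is a $\mathbb{P}$-martingale, we get $M^\gamma_t = H_t X^{1,\hat\pi}_t$ for all $t$. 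Reading the stochastic exponential representation (\ref{HX.rep1}) and substituting the identities $\tfrac{p}{p-1} = \gamma$ and $\tfrac{1}{p-1} = -(1-\gamma)$, together with $C^\gamma = (1-\gamma)P^\gamma$ and $\beta^\gamma = (1-\gamma) q^\gamma$, converts the integrand into $-\{\gamma A + \Lambda^\top C^\gamma(t;T)\}Y_t - \{\gamma a + \Lambda^\top \beta^\gamma(t;T)\}$, as required.

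Given the explicit form of $M^\gamma$, Girsanov's theorem yields that $W^{\mathbb{Q}^\gamma_T}$ defined by (\ref{WQgamma}) is an $n$-dimensional Brownian motion under $\mathbb{Q}^\gamma_T$. Finally, substituting $dW_t = dW^{\mathbb{Q}^\gamma_T}_t - \bigl[(\gamma A + \Lambda^\top C^\gamma(t;T))Y_t + (\gamma a + \Lambda^\top \beta^\gamma(t;T))\bigr] dt$ into the factor SDE $dY_t = (b + BY_t) dt + \Lambda dW_t$ from (\ref{QTSM1}) gives (\ref{SDE.Y.myopic}) after collecting terms. There is no real obstacle here beyond careful bookkeeping of the parameter translation between the CRRA parametrisation $(p, q)$ and the myopic parametrisation $\gamma$; everything else is an immediate consequence of results already in hand.
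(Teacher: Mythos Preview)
Your proof is correct and follows essentially the same approach as the paper: handle $\gamma=0$ trivially, read off the $\gamma=1$ case from the stochastic exponential representation of $H_t X^{x,\hat\pi}_t$ in the proof of Proposition~\ref{myopic.density.gamma1}, and for $\gamma\in(0,1)$ identify $M^\gamma_T = H_T X^{1,\hat\pi}_T$ via Proposition~\ref{CRRA.opt.inv.prob} and substitute into (\ref{HX.rep1}) with the parameter translation $p/(p-1)=\gamma$, $1/(p-1)=-(1-\gamma)$. The concluding Girsanov step and the derivation of the $Y$-SDE are identical to the paper's.
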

\begin{proof}
For $\gamma = 0$, $\mathbb{Q}^\gamma_T = \mathbb{P}$ and the statements follow immediately. For $\gamma = 1$, by Proposition~\ref{myopic.density.gamma1} and its proof, the state price density process $H$ admits the following stochastic exponential representation:
\[
    \frac{H_T}{\mathbb{E}[H_T]}=  \mathcal{E} \left( \int_0^{\cdot} \left\{ -( \Lambda^\top C(s;T) + A) Y_s - \Lambda^\top \beta(s;T) - a \right\}^\top dW_s \right)_T.
\]
From Girsanov's theorem, $W^{\mathbb{Q}^\gamma_T}$ given by (\ref{WQgamma}) is a Brownian motion under $\mathbb{Q}^\gamma_T$ and (\ref{SDE.Y.myopic}) follows. Therefore, the statements hold for $\gamma = 1$.
Next, we consider the case of $\gamma \in(0, 1)$. By Proposition~\ref{CRRA.opt.inv.prob} and (\ref{HX.rep1}) for $p = \frac{\gamma}{\gamma - 1}$, we have 
\begin{align*}
    M^\gamma_T &= H_T \cdot \frac{H_T^{\gamma - 1}}{\mathbb{E}[H_T^\gamma]} \\
    &= \frac{1}{x} \cdot H_T X^{x, \hat{\pi}}_T \\
    &= \mathcal{E} \left( \int_0^{\cdot} \left[ \left\{ - \gamma A + (\gamma - 1) \Lambda^\top P^\gamma(s;T) \right\} Y_s - \left\{ \gamma a + (1 - \gamma) \Lambda^\top q^\gamma(s;T) \right\} \right]^\top dW_s \right)_T \\
    &= \mathcal{E} \left( \int_0^\cdot \left[ - \left\{ \gamma A + \Lambda^\top C^\gamma(t;T) \right\} Y_t -\left\{ \gamma a + \Lambda^\top \beta^\gamma(t;T) \right\} \right]^\top dW_t \right)_T.
\end{align*}
By Girsanov's theorem again, the statements follow for $\gamma \in (0, 1)$.
\end{proof}
\begin{remark}\label{C.beta.gamma.ODE}
By (\ref{P.Riccati}) and (\ref{q.Riccati}), for $\gamma \in (0, 1)$, $C^\gamma(\cdot;T)$ and $\beta^\gamma(\cdot;T)$ satisfy
    \begin{align}
        &\begin{aligned}\label{C.gamma.Riccati}
            &\dot{C}^\gamma(t) - C^\gamma(t)^\top \Lambda \Lambda^\top C^\gamma(t) + K_1^\top C^\gamma(t) + C^\gamma(t) K_1 + \gamma (1 - \gamma) A^\top A +\gamma R_2 = 0,\\
            &C^\gamma(T) = 0,
        \end{aligned} \\
        &\begin{aligned}\label{beta.gamma.Riccati}
            &\dot{\beta}^\gamma(t) + \left\{ K_1 - \Lambda \Lambda^\top C^\gamma(t)  \right\}^\top \beta^\gamma(t) + C^\gamma(t) \{b - \gamma \Lambda a \} + \gamma  (1 - \gamma) A^{\top} a  + \gamma r_1= 0,\\
            &\beta^\gamma(T) = 0,
        \end{aligned} 
    \end{align}
    where $K_1 \coloneqq B - \gamma\Lambda A$. 
Compared with (\ref{C.Riccati}) and (\ref{beta.Riccati}), $C^\gamma(\cdot;T)$ and $\beta^\gamma(\cdot;T)$ seem to converge to $C^1(\cdot;T) \coloneqq C(\cdot;T)$ and $\beta^1(\cdot;T) \coloneqq \beta(\cdot;T)$ in some sense as $\gamma \nearrow 1$. However, we do not prove it because it is not needed for our main results.
\end{remark}

\subsubsection{Proofs of main results in Sect.~\ref{Excess.Turnpike}}\label{proof.excess.turnpike}
In this subsection, we first consider the asymptotic behavior of $C^\gamma(t; T)$ and $\beta^\gamma(t; T)$ when $T \nearrow \infty$ (Proposition~\ref{asmp.C.beta}), which affects the asymptotic moments of $Y_t$ under myopic probabilities $\mathbb{Q^\gamma_T}$. Using this proposition, we show 
    \[
       \sup_{t \in [0, T]} \mathbb{E}^{\mathbb{Q}^\gamma_T}[|Y_t|^2] = O(1), \quad \quad \mathbb{E}^{\mathbb{Q}^{\gamma}_T}  [|L_T|] = O(1), \quad (T \nearrow \infty)
    \]
in Propositions \ref{Y.myopic.upb} and \ref{LT.bdd}. 
Combining these estimates with Proposition~\ref{excess.hedging.bdd}, we can derive the rate of the turnpike theorem for excess hedging demands (Theorem~\ref{excess.turnpike}). We can also prove the uniform turnpike theorem (Theorem~\ref{uni.excess.turnpike}) by the same arguments as in the case of myopic portfolios. 

\begin{proposition}\label{asmp.C.beta}
For $\gamma \in \{q, \alpha + 1, 1\}$, there exist $C^\gamma_\infty \in \mathbb{S}^m_+$ and $\beta^\gamma_\infty \in \mathbb{R}^m$ such that 
    \begin{align*}
        \lim_{T \nearrow \infty} C^\gamma(t;T) &= C^\gamma_\infty, \\
        \lim_{T \nearrow \infty} \beta^\gamma (t;T) &= \beta^\gamma_\infty
    \end{align*}
    for any $t$. In addition, $B - \gamma \Lambda A - \Lambda \Lambda^\top C^\gamma_\infty$ is stable.
\end{proposition}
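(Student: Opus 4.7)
The plan is to reduce everything to the asymptotic theory of the matrix Riccati differential equation (summarized in the appendix). The key observation is that under the time-reversal substitution $s \coloneqq T-t$, $\tilde{C}^\gamma(s) \coloneqq C^\gamma(T-s;T)$, the ODEs (\ref{C.Riccati}) and (\ref{C.gamma.Riccati}) both become the autonomous initial value problem
\[
\dot{\tilde{C}}^\gamma(s) = -\tilde{C}^\gamma(s)\Lambda\Lambda^\top \tilde{C}^\gamma(s) + K_1^\top \tilde{C}^\gamma(s) + \tilde{C}^\gamma(s) K_1 + Q^\gamma, \quad \tilde{C}^\gamma(0)=0,
\]
where $K_1 \coloneqq B - \gamma\Lambda A$ and $Q^\gamma \coloneqq \gamma(1-\gamma)A^\top A + \gamma R_2 \in \mathbb{S}^m_+$. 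Since the right-hand side is independent of $T$, we have $C^\gamma(t;T) = \tilde{C}^\gamma(T-t)$, so $\lim_{T\nearrow\infty} C^\gamma(t;T)$ at a fixed $t$ reduces to $\lim_{s\nearrow\infty}\tilde{C}^\gamma(s)$.

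Next I would appeal to standard matrix Riccati theory. Stabilizability of $(K_1,\Lambda)$ holds because $B$ is stable by Assumption~\ref{stable.asp}(iv), so the feedback $F=-\gamma A$ realizes $K_1 - \Lambda F = B$, which is stable. Detectability of $(K_1,(Q^\gamma)^{1/2})$ for $\gamma \in \{q,1+\alpha\}$ is immediate from Assumption~\ref{stable.asp}(iii): either $Q^\gamma \in \mathbb{S}^m_{++}$ (trivial detectability), or $R_2 = 0$ and one reduces $Q^\gamma = \gamma(1-\gamma)A^\top A$; in the latter case, if $A y = 0$ then $\theta(y) = a$ is constant, the diffusion of $S$ carries no information from $y$, and the uncontrolled system collapses to $\dot{y} = By$, which is stable. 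For $\gamma = 1$ the same dichotomy applies with $Q^1 = R_2$. Hence for each $\gamma \in \{q,1+\alpha,1\}$ the algebraic Riccati equation
\[
-C\Lambda\Lambda^\top C + K_1^\top C + CK_1 + Q^\gamma = 0
\]
admits a unique stabilizing solution $C^\gamma_\infty \in \mathbb{S}^m_+$, and the monotone convergence theorem for Riccati equations (e.g., $\tilde{C}^\gamma(s)$ increases in $s$ and is dominated by $C^\gamma_\infty$) yields $\tilde{C}^\gamma(s) \nearrow C^\gamma_\infty$ together with the stability of $B - \gamma\Lambda A - \Lambda\Lambda^\top C^\gamma_\infty = K_1 - \Lambda\Lambda^\top C^\gamma_\infty$.

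For $\beta^\gamma$, the same time reversal $\tilde{\beta}^\gamma(s) \coloneqq \beta^\gamma(T-s;T)$ applied to (\ref{beta.Riccati}) and (\ref{beta.gamma.Riccati}) gives the linear initial value problem
\[
\dot{\tilde{\beta}}^\gamma(s) = (K_1 - \Lambda\Lambda^\top \tilde{C}^\gamma(s))^\top \tilde{\beta}^\gamma(s) + \tilde{C}^\gamma(s)(b-\gamma\Lambda a) + \gamma(1-\gamma)A^\top a + \gamma r_1, \quad \tilde{\beta}^\gamma(0)=0.
\]
Since the coefficient matrix converges to the stable matrix $(K_1 - \Lambda\Lambda^\top C^\gamma_\infty)^\top$ and the forcing term converges, a variation-of-constants argument combined with the exponential decay of the resolvent of the limit matrix shows $\tilde{\beta}^\gamma(s)$ converges to the unique equilibrium
\[
\beta^\gamma_\infty = -(K_1 - \Lambda\Lambda^\top C^\gamma_\infty)^{-\top}\bigl[C^\gamma_\infty(b-\gamma\Lambda a) + \gamma(1-\gamma)A^\top a + \gamma r_1\bigr].
\]

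The main obstacle is verifying the detectability hypothesis across the three values of $\gamma$, especially in the edge case $R_2 = 0$ and $\gamma = 1$ (where $Q^1 = 0$), and then quoting (or proving in the appendix) the monotone convergence to the stabilizing solution. The rest is a routine application of Riccati and linear ODE asymptotics.
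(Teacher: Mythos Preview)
Your plan is correct and follows essentially the same route as the paper: reduce to the asymptotic theory of the matrix Riccati equation via stabilizability and detectability, then handle $\beta^\gamma$ through the linear ODE with a stable limiting coefficient. The only notable difference is in the detectability check when $R_2=0$: the paper constructs an explicit output injection $F=\sqrt{\gamma/(1-\gamma)}\,\Lambda$ so that $FC+K_1=B$ (with $C=\sqrt{\gamma(1-\gamma)}\,A$), which is cleaner than your informal kernel-of-$A$ argument; you should state the Hautus criterion precisely (if $K_1 v=\lambda v$, $Av=0$, $\mathrm{Re}\,\lambda\ge 0$, then $Bv=\lambda v$, contradicting stability of $B$) rather than the heuristic about ``the uncontrolled system collapsing.'' Your flag on the edge case $\gamma=1$, $R_2=0$ is well placed---the paper glosses over it with ``along the lines of $\gamma\in(0,1)$,'' and indeed the explicit feedback $F=\sqrt{\gamma/(1-\gamma)}\,\Lambda$ degenerates there.
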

\begin{proof}
    Because the statements for $\gamma = 0$ are obvious and those for $\gamma = 1$ are along the lines for the case of $\gamma \in (0, 1)$, we consider only the case of $\gamma \in(0, 1)$. As Remark~\ref{C.beta.gamma.ODE} shows,  $C^\gamma(\cdot; T)$ and $\beta^\gamma(\cdot;T)$ satisfy
        \begin{align}
        &\begin{aligned}
            &\dot{C}^\gamma(t) - C^\gamma(t)^\top \Lambda \Lambda^\top C^\gamma(t) + K_1^\top C^\gamma(t) + C^\gamma(t) K_1 + \gamma (1 - \gamma) A^\top A +\gamma R_2 = 0,\\
            &C^\gamma(T) = 0,
        \end{aligned} \\
        &\begin{aligned}
            &\dot{\beta}^\gamma(t) + \left\{ K_1 - \Lambda \Lambda^\top C^\gamma(t)  \right\}^\top \beta^\gamma(t) + C^\gamma(t) \{b - \gamma \Lambda a \} + \gamma  (1 - \gamma) A^{\top} a  + \gamma r_1= 0,\\
            &\beta^\gamma(T) = 0,
        \end{aligned} 
    \end{align}
    where $K_1 \coloneqq B - \gamma\Lambda A$. By Theorems \ref{Asmp.RDE} and \ref{Asmp.LODE}, we have to check that $\left(K_1,  \Lambda \right)$ is stabilizable and $\left( C, K_1 \right)$ is detectable, where $C \in \mathbb{R}^{n \times m}$ satisfies $C^\top C =  \gamma (1 - \gamma) A^\top A + \gamma R_2$. The stabilizability of $\left(K_1, \Lambda \right)$ can be seen from the fact that 
    \begin{align*}
        &K_1 + \Lambda L = B
    \end{align*}
    is stable when setting $L \coloneqq \gamma A$. To see the detectability of $\left( C, K_1 \right)$, we consider two cases of (iii) in Assumption~\ref{LQMarket.asp}. If $R_2 = 0$, then $C = \sqrt{\gamma(1 - \gamma)} A$ and setting $F = \sqrt{\frac{\gamma}{1 - \gamma}} \Lambda$ implies that
    \[
        FC + K_1 = B,
    \]
    which means that $\left( C, K_1 \right)$ is detectable. If $\left( \gamma (1 - \gamma) A^\top A +\gamma R_2 \right)$ is positive definite, then $C \coloneqq \left( \gamma (1 - \gamma) A^\top A +\gamma R_2 \right)^{\frac{1}{2}}$ is positive definite and thus $\left( C, K_1 \right)$ is detectable. As a result, by Theorems \ref{Asmp.RDE} and \ref{Asmp.LODE}, $C^\gamma_\infty \coloneqq \lim_{T \nearrow \infty} C^\gamma(t;T)$ and $\beta^\gamma_\infty \coloneqq \lim_{T \nearrow \infty} \beta^\gamma(t;T)$ exist for any $t \geq 0$ and satisfy
    \begin{align*}
            - C^\gamma_\infty \Lambda \Lambda^\top C^\gamma_\infty + K_1^\top C^\gamma_\infty + C^\gamma_\infty K_1 + \gamma (1 - \gamma) A^\top A +\gamma R_2 &= 0,\\
            \left\{ K_1 - \Lambda \Lambda^\top C^\gamma_\infty  \right\}^\top \beta^\gamma_\infty + C^\gamma_\infty \{b - \gamma \Lambda a \} + \gamma  (1 - \gamma) A^{\top} a  + \gamma r_1 &= 0,
    \end{align*}
    and 
    \begin{align*}
        K_1 - \Lambda \Lambda^\top C_\infty^\gamma = B - \gamma \Lambda A - \Lambda \Lambda^\top C^\gamma_\infty
    \end{align*}
    is stable.
\end{proof}

\begin{proposition}\label{Y.myopic.upb}
For $\gamma \in \{q, \alpha + 1, 1\}$,
\[
   \sup_{t \in [0, T]} \mathbb{E}^{\mathbb{Q}^\gamma_T} [|Y_t|^2] = O(1), \quad (T \nearrow \infty).
\]
\end{proposition}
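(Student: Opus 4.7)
The plan is to exploit the fact that, under the myopic probability $\mathbb{Q}^\gamma_T$, the factor process $Y$ is a time-inhomogeneous linear-Gaussian process whose drift matrix converges to a \emph{stable} matrix as the time-to-horizon $T-t$ tends to infinity, and then run a Lyapunov function argument uniformly in $T$. By Proposition~\ref{myopic.prob.density}, $Y$ satisfies
\[
    dY_t = \bigl[\, c^\gamma(t;T) + A^\gamma(t;T)\,Y_t \,\bigr]\, dt + \Lambda\, dW^{\mathbb{Q}^\gamma_T}_t,
\]
with $A^\gamma(t;T) := B - \gamma \Lambda A - \Lambda \Lambda^\top C^\gamma(t;T)$ and $c^\gamma(t;T) := b - \gamma \Lambda a - \Lambda \Lambda^\top \beta^\gamma(t;T)$. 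Since the Riccati equation (\ref{C.gamma.Riccati}) and the linear ODE (\ref{beta.gamma.Riccati}) (and their $\gamma = 1$ counterparts (\ref{C.Riccati}) and (\ref{beta.Riccati})) are autonomous with terminal condition $0$, the substitution $s = T - t$ shows that $C^\gamma(t;T) = \tilde{C}^\gamma(T-t)$ and $\beta^\gamma(t;T) = \tilde{\beta}^\gamma(T-t)$ for fixed functions $\tilde{C}^\gamma, \tilde{\beta}^\gamma$ on $[0,\infty)$ that do not depend on $T$. By Proposition~\ref{asmp.C.beta} these functions converge to $C^\gamma_\infty$ and $\beta^\gamma_\infty$ at infinity and are thus continuous and uniformly bounded, so $A^\gamma(t;T)$ and $c^\gamma(t;T)$ are uniformly bounded in $(t,T)$ and $A^\gamma(t;T) \to A^\gamma_\infty := B - \gamma\Lambda A - \Lambda\Lambda^\top C^\gamma_\infty$ as $T - t \to \infty$.

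Since $A^\gamma_\infty$ is stable, the Lyapunov theorem provides $P \in \mathbb{S}^m_{++}$ with $P A^\gamma_\infty + (A^\gamma_\infty)^\top P = -I$, and by continuity there exists $s_0 > 0$ such that $P A^\gamma(t;T) + A^\gamma(t;T)^\top P \leq -\tfrac{1}{2} I$ whenever $T - t \geq s_0$. I will then apply It\^o's formula to $V(y) := y^\top P y$ under $\mathbb{Q}^\gamma_T$, take expectations, and absorb the cross term $2 Y_t^\top P c^\gamma(t;T)$ by Young's inequality together with the uniform bound on $c^\gamma$. Writing $\phi(t) := \mathbb{E}^{\mathbb{Q}^\gamma_T}[V(Y_t)]$ and $\mu := 1/(2\lambda_{\max}(P))$, this yields
\[
    \dot{\phi}(t) \leq -\tfrac{\mu}{2}\,\phi(t) + C_1 \quad \text{for } t \in [0, T-s_0],
\]
\[
    \dot{\phi}(t) \leq K\,\phi(t) + C_2 \quad \text{for } t \in (T-s_0, T],
\]
for some constants $C_1, C_2, K > 0$ independent of $T$.

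Gronwall's inequality on the dissipative interval gives $\phi(t) \leq \phi(0) + 2C_1/\mu$ for $t \leq T - s_0$, and Gronwall on the short interval of length at most $s_0$ then yields $\phi(t) \leq (\phi(0) + 2C_1/\mu + C_2/K)\, e^{K s_0}$ for $t \in [T-s_0, T]$ (the degenerate case $T < s_0$ being handled by the second estimate alone). Since $V(y) \geq \lambda_{\min}(P)\,|y|^2$, we obtain $\sup_{t \in [0,T]} \mathbb{E}^{\mathbb{Q}^\gamma_T}[|Y_t|^2] = O(1)$ as $T \nearrow \infty$. The main obstacle is that $A^\gamma(t;T)$ is only \emph{asymptotically} stable as $T - t \to \infty$, so dissipation cannot be invoked on all of $[0,T]$; in particular, for $t$ close to the horizon the drift matrix can fail to be stable, and a naive Gronwall estimate would produce a factor $e^{K T}$ that blows up. The self-similar structure $C^\gamma(t;T) = \tilde{C}^\gamma(T-t)$ identified at the outset is essential here, because it confines the non-dissipative part of the evolution to an interval of \emph{fixed} length $s_0$, keeping the resulting growth factor $e^{K s_0}$ bounded uniformly in $T$.
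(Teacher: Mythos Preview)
Your argument is correct and follows essentially the same route as the paper: apply It\^o's formula to a Lyapunov quadratic form $y^\top P y$ with $P$ chosen so that the limiting drift matrix $A^\gamma_\infty = B - \gamma\Lambda A - \Lambda\Lambda^\top C^\gamma_\infty$ is dissipative, split $[0,T]$ into a long interval $[0,T-s_0]$ where the time-dependent drift matrix inherits dissipativity and a remaining interval of fixed length $s_0$, and apply Gronwall on each piece. Your explicit identification of the self-similar structure $C^\gamma(t;T)=\tilde C^\gamma(T-t)$ makes the uniform boundedness of the coefficients and the ``fixed-length tail'' argument slightly more transparent than in the paper, but the substance is the same.
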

\begin{proof}
We define $\Tilde{\beta}^\gamma, \Tilde{C}^\gamma$ by 
\begin{align*}
    \Tilde{\beta}^\gamma (t; T) &\coloneqq  b - \gamma \Lambda  a - \Lambda \Lambda^\top \beta^\gamma(t;T) , \\
    \Tilde{C}^{\gamma} (t; T) &\coloneqq B - \gamma \Lambda A - \Lambda \Lambda^\top C^\gamma(t;T),
\end{align*}
and the SDE (\ref{SDE.Y.myopic}) becomes
\[
    dY_t = \left( \Tilde{\beta}^\gamma(t;T) + \Tilde{C}^\gamma(t;T) Y_t \right) dt + \Lambda dW^{\mathbb{Q}^\gamma_T}_t, \quad Y_0 = y.
\]
Because $\Tilde{C}^\gamma_{\infty} \coloneqq \lim_{T \nearrow \infty} \Tilde{C}^\gamma(t;T) =B - \gamma \Lambda A - \Lambda \Lambda^\top C^\gamma_\infty$ is stable, there exists $U \in \mathbb{S}^m_{++}$ such that
    \[
        (\Tilde{C}^\gamma_{\infty})^\top U + U \Tilde{C}^\gamma_{\infty} < 0. 
    \]
    Therefore, there exist $T_1, \epsilon > 0$ such that
    \[
      T - t \geq T_1 \Rightarrow \Tilde{C}^{\gamma} (t; T)^\top U + U\Tilde{C}^{\gamma} (t; T) < - \epsilon I_m,
    \]
    where $I_m$ is an $m \times m$ identity matrix.
    For $0 \leq s \leq t \leq T$,
    \begin{align*}
        \mathbb{E}^{\mathbb{Q}^{\gamma}_T}  \left[\langle UY_t, Y_t \rangle \right] &= \mathbb{E}^{\mathbb{Q}^{\gamma}_T}  \left[ \langle UY_s, Y_s \rangle \right] + \int_s^t \mathbb{E}^{\mathbb{Q}^{\gamma}_T}  \left[ \left\langle  \left(  \Tilde{C}^{\gamma} (u; T)^\top U + U\Tilde{C}^{\gamma} (u; T) \right) Y_u, Y_u \right\rangle + 2 \left\langle U \Tilde{\beta}^\gamma(u;T), Y_u \right\rangle \right] du \\
        & \qquad \qquad \qquad \qquad \qquad + \mathrm{Tr}(U^\top \Lambda \Lambda^\top) (t - s).
    \end{align*}
    Because $U$ is positive definite, the maximum and minimum of the eigenvalues, $\lambda_{\min}, \lambda_{\max} > 0$, satisfy
    \begin{equation}\label{quad.U.bdd}
               \lambda_{\min} |y|^2 \leq  \langle Uy, y \rangle \leq \lambda_{\max} |y|^2 
    \end{equation}
    for all $y \in \mathbb{R}^m$. Moreover, because the function $\Tilde{\beta}$ is a bounded function of $(t, T)$, there exist positive constants $C$ such that
    \begin{align}\label{beta.til.bdd}
        \frac{2}{\epsilon}| U \Tilde{\beta}(t;T)|^2 + \mathrm{Tr}(U^\top \Lambda \Lambda^\top) \leq C
    \end{align}
    for any $t, T$ with $0 \leq t \leq T$. From the above inequalities (\ref{quad.U.bdd}) and (\ref{beta.til.bdd}), it follows that for $ t \in [0, T - T_1]$, 
    \begin{align*}
        \frac{d}{dt} \mathbb{E}^{\mathbb{Q}^{\gamma}_T}  \left[\langle UY_t, Y_t \rangle \right] &=  \mathbb{E}^{\mathbb{Q}^{\gamma}_T}  \left[ \left\langle  \left(  \Tilde{C}^{\gamma} (t; T)^\top U + U\Tilde{C}^{\gamma} (t; T) \right) Y_t, Y_t \right\rangle + 2 \left\langle U \Tilde{\beta}^\gamma(t;T), Y_t \right\rangle \right] + \mathrm{Tr}(U^\top \Lambda \Lambda^\top) \\
        &\leq - \epsilon \mathbb{E}^{\mathbb{Q}^{\gamma}_T}  \left[|Y_t|^2 \right] + 2 \mathbb{E}^{\mathbb{Q}^{\gamma}_T}  \left[ \langle U \Tilde{\beta}(t;T), Y_t \rangle \right] + \mathrm{Tr}(U^\top \Lambda \Lambda^\top) \\
        &\leq - \frac{\epsilon}{2} \mathbb{E}^{\mathbb{Q}^{\gamma}_T}  \left[|Y_t|^2 \right] + \frac{2}{\epsilon}| U \Tilde{\beta}(t;T)|^2 + \mathrm{Tr}(U^\top \Lambda \Lambda^\top) \\
        & \leq - \frac{\epsilon}{2\lambda_{\max}} \mathbb{E}^{\mathbb{Q}^{\gamma}_T}  \left[\langle UY_t, Y_t \rangle \right] + C,
    \end{align*}
    where the second inequality follows from $\epsilon$-Hölder's inequality.
    By Gronwall's inequality, there exists a constant $C = C(y)$ which depends only on $y$ such that
    \[
        \mathbb{E}^{\mathbb{Q}^{\gamma}_T}  \left[\langle U Y_t, Y_t \rangle \right] \leq C(y), \quad t \in [0, T - T_1].
    \]
    The above inequalities and the right-hand side of (\ref{quad.U.bdd}) lead to 
    \[
        \mathbb{E}^{\mathbb{Q}^{\gamma}_T}  \left[|Y_t|^2 \right] \leq \frac{C(y)}{\lambda_{\min}}, \quad t \in [0, T - T_1].
    \]
    Furthermore, because the length of $[T-T_1, T]$ is $T_1$, there exists a constant $C = C(y, T_1)$ which depends only on $y$ and $T_1$ such that
    \[
        \mathbb{E}^{\mathbb{Q}^{\gamma}_T} \left[ |Y_t|^2 \right] \leq C(y, T_1), \quad t \in [T - T_1, T].
    \]
    As a result, the proposition follows.
\end{proof}

\begin{proposition}\label{LT.bdd}
For $\gamma \in \{q, \alpha + 1, 1\}$,
    \[
        \mathbb{E}^{\mathbb{Q}^{\gamma}_T}  [|L_T|] = O(1), \quad (T \nearrow \infty).
    \]
\end{proposition}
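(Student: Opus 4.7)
The plan is to first write $L_T$ explicitly in the quadratic term structure model, then change measure to $\mathbb{Q}^\gamma_T$ via Proposition~\ref{myopic.prob.density}, and finally use the stability of $B$ together with Proposition~\ref{Y.myopic.upb} to show every term is bounded uniformly in $T$. Since $a(y) = \Lambda$ is constant, $Da_{\cdot l} \equiv 0$ and the flow equation collapses to $d(\nabla_y Y_u) = B \nabla_y Y_u\, du$, giving $\nabla_y Y_u = e^{Bu}$. Moreover $Dr(y) = (r_1 + R_2 y)^\top$ and $D\theta(y) = A$ are affine and constant, respectively, so
\[
L_T = \int_0^T (e^{Bu})^\top (r_1 + R_2 Y_u)\, du + \int_0^T (Ae^{Bu})^\top dW_u + \int_0^T (Ae^{Bu})^\top (a + AY_u)\, du.
\]

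Next, I would use Proposition~\ref{myopic.prob.density} to rewrite $dW_u = dW^{\mathbb{Q}^\gamma_T}_u - \{(\gamma A + \Lambda^\top C^\gamma(u;T)) Y_u + (\gamma a + \Lambda^\top \beta^\gamma(u;T))\}\, du$ in the stochastic integral. After this substitution $|L_T|$ is majorized (up to a constant independent of $T$) by
\[
\int_0^T \|e^{Bu}\|\bigl(1 + |Y_u| + |C^\gamma(u;T)| |Y_u| + |\beta^\gamma(u;T)|\bigr) du \;+\; \left|\int_0^T (Ae^{Bu})^\top dW^{\mathbb{Q}^\gamma_T}_u\right|.
\]
Since $B$ is stable (Assumption~\ref{stable.asp}(iv)), there exist $M, \lambda > 0$ with $\|e^{Bu}\| \le Me^{-\lambda u}$, so by Cauchy--Schwarz and Itô's isometry the $\mathbb{Q}^\gamma_T$-expectation of the stochastic integral is bounded by $(\int_0^\infty \|Ae^{Bu}\|^2 du)^{1/2} < \infty$, a bound independent of $T$.

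For the drift terms, I would pull the expectation inside the $du$-integral and estimate each integrand via Cauchy--Schwarz:
\[
\mathbb{E}^{\mathbb{Q}^\gamma_T}\bigl[\|e^{Bu}\||Y_u|\bigr] \le Me^{-\lambda u} \bigl(\mathbb{E}^{\mathbb{Q}^\gamma_T}[|Y_u|^2]\bigr)^{1/2},
\]
which by Proposition~\ref{Y.myopic.upb} is dominated by $M e^{-\lambda u}\sqrt{K(y)}$ with $K(y)$ independent of $T$, so its $du$-integral over $[0,T]$ is uniformly bounded. The coefficient terms $\|C^\gamma(u;T)\|$ and $|\beta^\gamma(u;T)|$ must also be controlled uniformly in $(u,T)$ with $0\le u\le T$; this is the main technical obstacle of the proof.

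To resolve it I would combine the pointwise convergence furnished by Proposition~\ref{asmp.C.beta} with the stability $K_1 - \Lambda\Lambda^\top C^\gamma_\infty$ and a splitting argument as in Proposition~\ref{Y.myopic.upb}: fix $T_1 > 0$ so that $C^\gamma(u;T)$ and $\beta^\gamma(u;T)$ lie in a small neighborhood of $(C^\gamma_\infty, \beta^\gamma_\infty)$ whenever $T-u \ge T_1$, which follows by writing the Riccati and linear ODEs \eqref{C.gamma.Riccati}--\eqref{beta.gamma.Riccati} backward from $T$ and comparing with the stationary solution; on the residual window $[T-T_1, T]$, continuous dependence on the terminal condition and boundedness of the coefficients give a uniform bound depending only on $T_1$. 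With $\sup_{0\le u\le T}(\|C^\gamma(u;T)\| + |\beta^\gamma(u;T)|) \le C_\infty$ in hand, combining all the estimates above yields $\mathbb{E}^{\mathbb{Q}^\gamma_T}[|L_T|] \le C'(y)$ for all $T$, which is the desired $O(1)$ bound.
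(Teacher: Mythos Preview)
Your proposal is correct and follows essentially the same approach as the paper: write $L_T$ explicitly using $\nabla_y Y_u = e^{Bu}$, change the Brownian driver to $W^{\mathbb{Q}^\gamma_T}$ via Proposition~\ref{myopic.prob.density}, and then conclude from the exponential decay $\|e^{Bu}\|\le Me^{-\lambda u}$, the moment bound of Proposition~\ref{Y.myopic.upb}, and the uniform boundedness of $C^\gamma(\cdot;T)$ and $\beta^\gamma(\cdot;T)$. The only place where you work harder than necessary is the ``main technical obstacle'': because the Riccati and linear ODEs \eqref{C.gamma.Riccati}--\eqref{beta.gamma.Riccati} have constant coefficients and terminal data $0$ at $T$, the solutions depend on $(t,T)$ only through $\tau=T-t$, so Proposition~\ref{asmp.C.beta} (convergence as $\tau\to\infty$) together with continuity on $[0,\infty)$ immediately gives the uniform bound, without a splitting argument.
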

\begin{proof}
    Because $Dr(y) = r_1 + R_2 y, \; D\theta(y) = A,\; \nabla_y Y_t = e^{Bt}$, (\ref{L}) becomes
    \begin{align*}
        L_T &=  \int_0^T ((r_1 + R_2 Y_t)^\top e^{Bt})^\top dt + \int_0^T (A e^{Bt})^\top dW^{\mathbb{Q}^{\gamma}_T}_t \\
        & \quad-\int_0^T (A e^{Bt})^\top \left[ \left\{ (\gamma - 1) A + \Lambda^\top C^\gamma(t;T) \right\} Y_u + (\gamma - 1) a + \Lambda^\top \beta^\gamma(t;T) \right]dt.
    \end{align*} 
    Because $B$ is stable, there exist $M, \omega > 0$ such that 
    \[
        |e^{Bt}| \leq M e^{-\omega t}, \; (t \geq 0).
    \]
    Furthermore, $C^\gamma, \beta^\gamma$ are bounded as functions of $(t, T)$. Therefore, this proposition follows from Proposition~\ref{Y.myopic.upb}.
\end{proof}

\begin{proof}[Proof of Theorem~\ref{excess.turnpike}]
    The result follows from Propositions \ref{excess.hedging.bdd} and \ref{LT.bdd}.
\end{proof}

\begin{proof}[Proof of Theorem~\ref{uni.excess.turnpike}]
    We can prove the theorem in the same way as in the proof of Theorem~\ref{uni.conv.myopic}.
\end{proof}

\subsection{Proofs for Sect.~\ref{Application.sec}}\label{proof.appl.sec}

\begin{lemma}\label{Est.diff.I}
    Let $U_1, U_2$ be utility functions. 
    \begin{itemize}
        \item [(i)] If there exist $f, \Tilde{f}: (0, \infty) \to [0, \infty)$ such that
        \begin{equation}\label{U.prime.growth}
                        U_1^\prime(x + f(x)) \leq U^\prime_2 (x) \leq U_1^\prime(x + \Tilde{f}(x)), \quad x > 0
        \end{equation}
        holds, then 
        \[
            |I_1(z) - I_2(z)| \leq \max\left\{ f(I_2(z)), \Tilde{f}(I_2(z)) \right\}, \quad z > 0
        \]
        holds.
        \item [(ii)] In addition to the assumption in (i), if there exists $g:(0, \infty) \to \mathbb{R}$ such that
        \begin{equation}\label{ART.asp}
                        \left|ART_1(x) - ART_2(x) \right| \leq g(x), \quad x > 0
        \end{equation}
        holds and $U_2$ is the HARA utility, that is, $ART_2(x) = ax + b\; (x > 0)$ for some constants $a, b \in \mathbb{R}$, then 
        \[
            |z I_1^\prime(z) - z I_2^\prime(z)| \leq g(I_1(z)) + |a|\max\left\{ f(I_2(z)), \Tilde{f}(I_2(z)) \right\}, \quad z > 0
        \]
        holds.
    \end{itemize}
\end{lemma}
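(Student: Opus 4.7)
The plan is to prove (i) by directly inverting the sandwich inequality on $U_1^\prime$, and then to reduce (ii) to (i) via the standard identity $zI^\prime(z) = -ART_U(I(z))$ together with the HARA structure of $U_2$.

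For part (i), I would use that $U_1^\prime$ is strictly decreasing (since $U_1$ is strictly concave), hence $I_1 = (U_1^\prime)^{-1}$ is also strictly decreasing. Applying $I_1$ to the assumed chain $U_1^\prime(x+f(x)) \leq U_2^\prime(x) \leq U_1^\prime(x+\tilde{f}(x))$ reverses it, giving
\[
x + \tilde{f}(x) \;\leq\; I_1\bigl(U_2^\prime(x)\bigr) \;\leq\; x + f(x), \quad x > 0.
\]
Setting $x = I_2(z)$, so that $U_2^\prime(I_2(z)) = z$, this rearranges to $\tilde{f}(I_2(z)) \leq I_1(z) - I_2(z) \leq f(I_2(z))$. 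Because $f,\tilde{f} \geq 0$, both endpoints are nonnegative, and taking absolute values yields $|I_1(z) - I_2(z)| \leq \max\{f(I_2(z)), \tilde{f}(I_2(z))\}$.

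For part (ii), the key step is the identity $zI^\prime(z) = -ART_U(I(z))$. I would derive it by differentiating $U^\prime(I(z)) = z$ to get $I^\prime(z) = 1/U^{\prime\prime}(I(z))$, then using $z = U^\prime(I(z))$ to write
\[
zI^\prime(z) \;=\; \frac{U^\prime(I(z))}{U^{\prime\prime}(I(z))} \;=\; -ART_U(I(z)).
\]
Applying this to both $U_1$ and $U_2$ gives $|zI_1^\prime(z) - zI_2^\prime(z)| = |ART_1(I_1(z)) - ART_2(I_2(z))|$. I would then insert the intermediate term $ART_2(I_1(z))$ and use the triangle inequality:
\[
|ART_1(I_1(z)) - ART_2(I_2(z))| \;\leq\; |ART_1(I_1(z)) - ART_2(I_1(z))| + |ART_2(I_1(z)) - ART_2(I_2(z))|.
\]
The first term is bounded by $g(I_1(z))$ by assumption (\ref{ART.asp}). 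For the second, the HARA form $ART_2(x) = ax+b$ yields $|ART_2(I_1(z)) - ART_2(I_2(z))| = |a|\cdot|I_1(z) - I_2(z)|$, and applying (i) bounds this by $|a|\max\{f(I_2(z)), \tilde{f}(I_2(z))\}$.

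There is no real obstacle here; both parts follow from direct substitution once the right quantities are matched up. The only points that require care are remembering that $I_i$ reverses inequalities and recalling the $ART_U(I(z)) = -zI^\prime(z)$ identity. The HARA hypothesis on $U_2$ is essential in (ii) because it is precisely what makes $ART_2(I_1(z)) - ART_2(I_2(z))$ reduce to a constant multiple of $I_1(z) - I_2(z)$, so that the bound from (i) can be plugged in directly.
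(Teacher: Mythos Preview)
Your proof is correct and essentially identical to the paper's own argument: for (i) the paper substitutes $x=I_2(z)$ into the sandwich and applies the decreasing function $I_1$ to obtain $I_2(z)+\tilde f(I_2(z))\le I_1(z)\le I_2(z)+f(I_2(z))$, and for (ii) it uses the same identity $zI_i'(z)=-ART_i(I_i(z))$, the same triangle-inequality split at $ART_2(I_1(z))$, and the HARA affinity of $ART_2$ together with (i) to bound the second piece. The only cosmetic difference is the order of operations in (i) (you apply $I_1$ first and then substitute), which is immaterial.
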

\begin{proof}
    \begin{itemize}
        \item [(i)] Substituting $x = I_2(z), z > 0$ in (\ref{U.prime.growth}), 
        \[
            U_1^\prime(I_2(z) + f(I_2(z))) \leq z \leq U_1^\prime(I_2(z) + \Tilde{f}(I_2(z)))
        \]
        holds. Because $I_1$ is decreasing, we get 
        \[
            I_2(z) + \Tilde{f}(I_2(z)) \leq I_1(z) \leq I_2(z) + f(I_2(z)),
        \]
        which leads to
        \[
            |I_1(z) - I_2(z)| \leq \max\left\{ f(I_2(z)), \Tilde{f}(I_2(z)) \right\}.
        \]
        \item [(ii)] $z I_i^\prime(z)$ can be represented by \textit{the Arrow--Pratt measure of absolute risk tolerance} $ART_i(x) \coloneqq - \frac{U_i^\prime(x)}{U_i^{\prime\prime}(x)}$ as follows:
        \begin{align*}
            zI_i^\prime(z) = \frac{z}{U_i^{\prime\prime}(I_i(z))} = \frac{U_i^\prime(I_i(z))}{U_i^{\prime\prime}(I_i(z))} = - ART_i(I_i(z)).
        \end{align*}
        Using (\ref{ART.asp}) and the assumption that $U_2$ is the HARA utility, we obtain
        \begin{align*}
            |zI_1^\prime(z) - z I_2^\prime(z)| &= |ART_1(I_1(z)) - ART_2(I_2(z))| \\
                                               &\leq |ART_1(I_1(z)) - ART_2(I_1(z))| +|ART_2(I_1(z)) - ART_2(I_2(z))| \\
                                             &\leq g(I_1(z)) + |a| |I_1(z) - I_2(z)| \\
                                             &\leq g(I_1(z)) + |a|\max\left\{ f(I_2(z)), \Tilde{f}(I_2(z)) \right\},
        \end{align*}                    
        which completes the proof.
    \end{itemize}
\end{proof}
Let 
\begin{align*}
    U_1(x) &\coloneqq \sum_{i = 1}^n \beta_i \frac{(\alpha_i x)^{p_i}}{p_i} = \sum_{i=1}^n w_i \frac{x^{p_i}}{p_i}, \\
    U_2(x) &\coloneqq w_n \frac{x^{p_n}}{p_n},
\end{align*}
where $w_i \coloneqq \beta_i \alpha_i^{p_i} > 0$.
\begin{proof}[Proof of Proposition~\ref{lin.share.diffI}]
First, we prove the inequality (\ref{lin.share.diffI1}). Because $U_1^\prime(x) = \sum_{i=1}^n w_i x^{p_i - 1},\; U_2^\prime(x) = w_n x^{p_n - 1}$, 
    \[
    U_2^\prime(x) \leq U_1^\prime(x), \quad x > 0
    \]
    holds. To look for $\beta \in \mathbb{R}$ that satisfies
    \[
        U_1^\prime(x + x^\beta) \leq U_2^\prime(x)
    \]
 for large enough $x > 0$, we define $h$ by
    \begin{align*}
        h(x) &\coloneqq \frac{U_2^\prime(x) - U_1^\prime(x + x^\beta)}{w_n(x + x^\beta)^{p_{n}-1}} \\
             &= \left( \frac{x}{x + x^\beta} \right)^{p_n - 1} - \sum_{i=1}^{n-1} \frac{w_i}{w_n} (x + x^\beta)^{p_i - p_n} - 1
    \end{align*}
    for all $x > 0$. Then $h^\prime$, a derivative of $h$, is given by 
     \begin{align*}
         h^\prime(x) &= \frac{(p_n - 1)(1 - \beta) x^{p_n - 2 + \beta}}{(x + x^\beta)^{p_n}} - \sum_{i=1}^{n-1} \frac{w_i}{w_n}(p_i - p_n) \frac{(x + x^\beta)^{p_i - 1}}{(x + x^\beta)^{p_n}} (1 + \beta x^{\beta - 1}) \\
         &= - \frac{(p_n - p_{n-1}) x^{p_n + \beta - 2}}{(x + x^\beta)^{p_n}} \left\{ \frac{1 - p_n}{p_n - p_{n-1}}(1 - \beta) - \sum_{i = 1}^{n-1} \frac{w_i}{w_n} \cdot \frac{p_n - p_i}{p_n - p_{n-1}} \cdot (1 + x^{\beta - 1})^{p_i - 1} \left( \beta x^{p_i - p_n} + x^{1 + p_i - p_n - \beta} \right) \right\}.
     \end{align*}
    If $\beta \in (1 + p_{n-1} -p_n, 1 )$, then
    \[
        1 + p_i - p_n - \beta < 1 + p_{n-1} - p_n - \beta < 0, \quad(i = 1, \dots, n-1),
    \]
    which means that
    \[
     \sum_{i = 1}^{n-1} \frac{w_i}{w_n} \cdot \frac{p_n - p_i}{p_n - p_{n-1}} \cdot (1 + x^{\beta - 1})^{p_i - 1} \left( \beta x^{p_i - p_n} + x^{1 + p_i - p_n - \beta} \right) \rightarrow \sum_{i = 1}^{n-1} \frac{w_i}{w_n} \cdot \frac{p_n - p_i}{p_n - p_{n-1}} \cdot (1 + 0)^{p_i - 1} \left( \beta \cdot 0 + 0 \right) = 0
    \]
    as $x \nearrow \infty$.
    Therefore, $h^\prime(x) < 0$ holds for large enough $x > 0$. Combining this with the fact that
    \begin{align*}
        h(x) &= \left( \frac{1}{1 + x^{\beta - 1}} \right)^{p_n - 1} - \sum_{i=1}^{n-1} \frac{w_i}{w_n} x^{p_i - p_n} (1 + x^{\beta - 1})^{p_i - p_n} - 1 \\
        &\rightarrow 1 - \sum_{i=1}^{n-1} \frac{w_i}{w_n} \cdot 0 \cdot (1 + 0)^{p_i - p_n} - 1 \quad (x \nearrow \infty) \\
        &= 0
    \end{align*}
    leads to $h(x) > 0$ for large enough $x > 0$. Using Lemma~\ref{Est.diff.I}(i) with $f(x) = x^\beta, \Tilde{f}(x) = 0,$ and $\beta \in (1 + p_{n-1} -p_n, 1 )$ leads to 
    \[
        |I_1(z) - I_2(z)| \leq \left( \frac{z}{w_n} \right)^{\beta(q_n - 1)}
    \]
    for small enough $z > 0$, which means that the inequality (\ref{lin.share.diffI1}) holds for any nonnegative $\beta \in (1 + p_{n-1} - p_n, 1)$ and a constant $K > 0$. 
    Next, we prove the inequality (\ref{lin.share.diffI2}). Because
    \begin{align*}
       \left| ART_1(x) - ART_2(x) \right| &= \left| \frac{\sum_{i=1}^n w_i x^{p_i - 1}}{\sum_{i=1}^n w_i (1 - p_i) x^{p_i - 2}} -  \frac{x}{1 - p_n} \right| \\
                            &= \frac{\sum_{i=1}^{n-1} (p_n - p_i) w_i x^{p_i - 1} }{(1 - p_n) \left( \sum_{i = 1}^{n} w_i (1 - p_i) x^{p_i - 2} \right)}  \\
                            &\leq \frac{\sum_{i=1}^{n-1} (p_n - p_i) w_i x^{p_i - 1} }{ (1 - p_n)^2 w_n x^{p_n - 2} } \\
                            &= \sum_{i=1}^{n-1} \frac{p_n - p_i}{(1 - p_n)^2} \cdot \frac{w_i}{w_n} \cdot x^{p_i - p_n + 1} \\
                            &\leq Cx^{p_{n-1} - p_n + 1}
    \end{align*}
    holds for large enough $x > 0$ and a constant $C > 0$, Lemma~\ref{Est.diff.I}(ii) implies that for small enough $z > 0$,
    \begin{align*}
        |zI_1^\prime(z) - z I_2^\prime(z)| &\leq C I_1(z)^{p_{n-1} - p_n + 1} + \frac{1}{1 - p_n} \left( \frac{z}{w_n} \right)^{\beta(q_n - 1)} \\
        &\leq C \left( \left( \frac{z}{w_n} \right)^{q_n - 1} + \left( \frac{z}{w_n} \right)^{\beta (q_n - 1)} \right)^{(p_{n-1} - p_n + 1)^+} + \frac{1}{1 - p_n} \left( \frac{z}{w_n} \right)^{\beta(q_n - 1)} \\
        &\leq \Tilde{C} \left( z^{(q_n - 1) (p_{n-1} - p_n + 1)^+} + z^{\beta (q_n - 1)(p_{n-1} - p_n + 1)^+} + z^{\beta (q_n - 1)} \right) \\
        &\leq 3 \Tilde{C} z^{\beta (q_n - 1)},
    \end{align*}
    where $\Tilde{C}$ is a constant, and the last inequality follows from 
    \[
        \beta(q_n - 1) \leq (q_n - 1) (p_{n-1} - p_n + 1)^+ \leq \beta (q_n - 1) (p_{n-1} - p_n + 1)^+ \leq 0.
    \]
    We have completed the proof.
\end{proof}

\appendix
\section{Appendix: Malliavin calculus}
We recall some results on Malliavin calculus which allow us to derive an explicit stochastic flow representation for the optimal portfolio process $\hat{\pi}$. For details, see \cite{N, OK, PS, SH}.

Consider a complete probability space $(\Omega, \mathcal{F}, P)$ and a standard $n$-dimensional Brownian motion $W = (W^1, \dots, W^n)^\top$ defined on $(\Omega, \mathcal{F}, P)$. We denote by $(\mathcal{F}_t)_{t \geq 0}$ the $P$-augmentation of the natural filtration generated by $W = (W_t)_{t \geq 0}$. 

We introduce the Malliavin derivative operator as in \cite{OK}. Fix $T > 0$.
Let $C_b^\infty(\mathbb{R}^m)$ be the space of infinitely differentiable functions on $\mathbb{R}^m$ which, together with all partial derivatives, are bounded. By $\mathscr{S}$ we denote the class of {\it smooth random variables}, namely, random variables of the form 
\[
    F = f(W_{t_1}, \dots, W_{t_m}),
\]
where $(t_1, \dots, t_m) \in [0, T]^m$ and the function $f = f(x^{11}, \dots, x^{n1}, \dots, x^{1m}, \dots, x^{nm})$ belongs to $C_b^\infty(\mathbb{R}^{nm})$.
For each $F \in \mathscr{S}$, the {\it Malliavin derivative} of $F$ is the $L^2([0, T];\mathbb{R})^n$-valued random variable $DF = (D^1F, \dots D^nF)$ with components
\[
    D^iF (\cdot) \coloneqq \sum_{j = 1}^m \frac{\partial f}{\partial x^{ij}}(W_{t_1}, \dots, W_{t_m}) 1_{[0, t_j]}(\cdot), \quad (i = 1, \dots, n).
\]
Fix $p \in [1, \infty)$. Because we can view the operator $D$ as an operator from $L^p(\Omega;\mathbb{R})$ to $L^p \left( \Omega; (L^2[0, T];\mathbb{R})^n \right)$ and $D$ is closable, we denote the closure of $D$ again by $D$ and the domain of $D$ in $L^p(\Omega;\mathbb{R})$ by $\mathbb{D}_{p, 1}$. Thus, $\mathbb{D}_{p, 1}$ is the closure of $\mathscr{S}$ with respect to the norm
\begin{align*}
    ||F||_{p, 1} &\coloneqq ||F||_{L^p(\Omega; \mathbb{R})} + ||DF||_{L^p \left( \Omega; (L^2[0, T];\mathbb{R})^n \right)} \\
    &= \mathbb{E}[|F|^p]^{\frac{1}{p}} + \mathbb{E} \left[ \left( \int_0^T |DF(t)|^2 dt \right)^{\frac{p}{2}} \right]^\frac{1}{p}.
\end{align*}
Here, $|\cdot|$ denotes the Euclidean norm on $\mathbb{R}^n$. 
$\mathbb{D}_{p,1}$ is a Banach space with respect to the norm $||\cdot ||_{p, 1}$.
Given $F \in \mathbb{D}_{p, 1}$, $DF$ is an $(L^2[0, T])^n$-valued random variable. To each $DF$, we can find a measurable process $[0, T] \times \Omega \ni (t, \omega) \mapsto D_tF(\omega) \in \mathbb{R}^n$ such that for almost all $\omega \in \Omega$, $D_tF (\omega) = DF(\omega)(t)$ holds for almost everywhere $t \in [0, T]$. Therefore, we identify $(L^2[0, T])^n$-valued random variable $DF$ with $\mathbb{R}^n$-valued measurable process $ (t, \omega) \mapsto D_t F(\omega)$ without further comment. 
\begin{remark}
    Note that for real-valued random variable $F \in \mathbb{D}_{p, 1}$, we define $DF = (D^1F, \dots, D^nF)$ as a row vector; that is, $DF$ is an $\mathbb{R}^{1 \times n}$-valued stochastic process. For $\mathbb{R}^m$-valued random variable $F = (F_1, \dots, F_m)^\top \in \mathbb{D}_{p, 1}^m$ we define $DF = (D^jF_i)_{\substack{1 \leq i \leq m \\ 1 \leq j \leq n}}$, which is an $\mathbb{R}^{m \times n}$-valued stochastic process.
\end{remark}
We collect well-known results on Malliavin calculus, that is, Clark's formula (Proposition~\ref{Clark.formula}), chain rule (Proposition~\ref{chain.rule}), and Malliavin derivatives of Lebesgue integrals, stochastic integrals, and solutions to SDEs (Propositions \ref{diff.under.leb}--\ref{diff.sol.SDE}).
Firstly, we quote Clark's formula for random variables in $\mathbb{D}_{1, 1}$, which comes from \cite{KOL}.
\begin{proposition}\label{Clark.formula}
    For every $F \in \mathbb{D}_{1, 1}$ we have
    \[
        \mathbb{E}[F | \mathcal{F}_t] = \mathbb{E}[F] + \int_0^t \mathbb{E} \left[ D_sF | \mathcal{F}_s\right] dW_s, \quad t \in [0, T].
    \]
\end{proposition}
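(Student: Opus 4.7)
The plan is to first establish the formula for smooth random variables by direct computation, then lift it to $\mathbb{D}_{1,2}$ via the martingale representation theorem, and finally extend to $\mathbb{D}_{1,1}$ by approximation together with a Burkholder--Davis--Gundy estimate that connects the $(1,1)$-norm to the $L^1$-norm of the relevant stochastic integral.

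First, for $F \in \mathscr{S}$, say $F = f(W_{t_1},\dots,W_{t_m})$ with $f \in C_b^\infty$, I would apply It\^o's formula to the bounded martingale $M_t := \mathbb{E}[F \mid \mathcal{F}_t]$, which is an explicit smooth function of $(W_{t_1 \wedge t},\dots,W_{t_m \wedge t})$, and read off that its integrand coincides with $\mathbb{E}[D_s F \mid \mathcal{F}_s]$. Next, for $F \in \mathbb{D}_{1,2}$, use the martingale representation theorem to write $F = \mathbb{E}[F] + \int_0^T \phi_s\, dW_s$ with $\phi \in L^2([0,T]\times\Omega)$, and identify $\phi_s = \mathbb{E}[D_s F \mid \mathcal{F}_s]$ by testing against simple predictable integrands $u$ and using the Malliavin duality $\mathbb{E}[F\, \delta(u)] = \mathbb{E}[\int_0^T \langle D_s F, u_s\rangle ds]$, which yields equality of the two integrands after conditioning.

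For the extension to $F \in \mathbb{D}_{1,1}$, I would choose a sequence $F_n \in \mathscr{S}$ with $F_n \to F$ in $\mathbb{D}_{1,1}$, i.e.
\[
\mathbb{E}[|F_n - F|] + \mathbb{E}\!\left[\left(\int_0^T |D_s F_n - D_s F|^2\, ds\right)^{1/2}\right] \longrightarrow 0.
\]
Setting $\phi^n_s := \mathbb{E}[D_s F_n \mid \mathcal{F}_s]$ and $\phi_s := \mathbb{E}[D_s F \mid \mathcal{F}_s]$, each $\phi^n$ is a bona fide $L^2$ integrand while $\phi$ is only known to be predictable; one must verify $\int_0^T |\phi_s|^2 ds < \infty$ almost surely, which will come out of the same estimate used for convergence. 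The key chain of inequalities is
\[
\mathbb{E}\!\left[\sup_{t \in [0,T]} \left| \int_0^t (\phi^n_s - \phi_s)\, dW_s \right|\right] \leq C\, \mathbb{E}\!\left[\left(\int_0^T |\phi^n_s - \phi_s|^2\, ds\right)^{1/2}\right] \leq C'\, \mathbb{E}\!\left[\left(\int_0^T |D_s F_n - D_s F|^2\, ds\right)^{1/2}\right],
\]
where the first bound is Burkholder--Davis--Gundy and the second combines conditional Jensen $|\mathbb{E}[D_s G \mid \mathcal{F}_s]|^2 \leq \mathbb{E}[|D_s G|^2 \mid \mathcal{F}_s]$ with a Burkholder/Davis-type reversal that controls $\mathbb{E}[(\int_0^T \mathbb{E}[|D_s G|^2 \mid \mathcal{F}_s]\, ds)^{1/2}]$ by $\mathbb{E}[(\int_0^T |D_s G|^2\, ds)^{1/2}]$. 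Once this is in hand, both sides of the smooth-case identity converge in $L^1$ and the formula passes to the limit for $F$.

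The main obstacle is this last reversal: moving a conditional expectation inside a square root is not a direct application of Jensen's inequality and relies on a genuinely martingale-theoretic inequality (essentially the $L^1$ Davis inequality applied to the Doob martingale of $\int_0^\cdot |D_s G|^2\, ds$-type quadratic quantities). Apart from this, everything is routine: the smooth-case computation is mechanical, the $\mathbb{D}_{1,2}$ identification via duality is standard Malliavin calculus, and the density of $\mathscr{S}$ in $\mathbb{D}_{1,1}$ is built into the definition of the space.
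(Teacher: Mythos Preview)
The paper does not prove this proposition at all; it merely quotes it from Karatzas--Ocone--Li \cite{KOL}, so there is no in-paper argument to compare against. Your outline is essentially the strategy of \cite{KOL}: establish the formula on $\mathbb{D}_{1,2}$ (where it is standard via the martingale representation and the duality $\mathbb{E}[F\,\delta(u)] = \mathbb{E}\int_0^T \langle D_sF, u_s\rangle\,ds$), then pass to $\mathbb{D}_{1,1}$ by approximation, using the Burkholder--Davis--Gundy inequality with exponent $p=1$ to control the stochastic integrals in $L^1$.

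You have correctly located the only nontrivial point, namely the ``reversal'' estimate
\[
\mathbb{E}\!\left[\left(\int_0^T \big|\mathbb{E}[D_sG\mid\mathcal{F}_s]\big|^2\,ds\right)^{1/2}\right]
\;\le\; C\,\mathbb{E}\!\left[\left(\int_0^T |D_sG|^2\,ds\right)^{1/2}\right].
\]
Your description of how to get it (``Davis inequality applied to the Doob martingale of $\int_0^\cdot |D_sG|^2\,ds$'') is imprecise and would not work as stated: the Doob $L^1$ maximal inequality fails, so you cannot simply pull a supremum through. The correct viewpoint is that $B_t=\int_0^t \mathbb{E}[|D_sG|^2\mid\mathcal{F}_s]\,ds$ is the dual optional projection of the raw increasing process $A_t=\int_0^t|D_sG|^2\,ds$, and one invokes a Lenglart--L\'epingle--Pratelli type inequality (or, equivalently, the boundedness of the optional projection on the Hardy-type space with norm $\mathbb{E}[(\int_0^T|\cdot|^2\,ds)^{1/2}]$, which also follows from $H^1$--BMO duality). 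Once this is made precise, the rest of your argument is correct, and your smooth-case and $\mathbb{D}_{1,2}$ steps are fine as written.
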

The following proposition is a straightforward multidimensional version of Lemma~A.1 in \cite{OK}.
\begin{proposition}\label{chain.rule}
    Let $F = (F_1, \dots F_m)^\top \in \mathbb{D}_{1,1}^m$. Let $\phi = (\phi^1, \dots, \phi^k)^\top \in C^1(\mathbb{R}^m ;\mathbb{R}^k)$. Assume that
    \begin{equation} \label{asp.chain.mal}
        \mathbb{E}[|\phi^l(F)|] + \mathbb{E} \left[ \left( \int_0^T \left| \sum_{i = 1}^m \frac{\partial \phi^l}{\partial x_i} (F) D_tF_i \right|^2 dt \right)^{\frac{1}{2}} \right] < \infty    
    \end{equation}
    for all $l = 1, \dots k$. Then $\phi(F) \in \mathbb{D}_{1, 1}^k$ and 
    \[
        D(\phi(F)) = D\phi(F) DF,
    \]
    where $D\phi = \left( \frac{\partial\phi^i}{\partial x_j} \right)_{\substack{1 \leq i \leq k \\ 1 \leq j \leq m}}:\mathbb{R}^m \to \mathbb{R}^{k \times m}$ is the Jacobi matrix of $\phi$.
\end{proposition}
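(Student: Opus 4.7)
The plan is to reduce the multidimensional statement to the scalar chain rule (Lemma~A.1 of Ocone and Karatzas \cite{OK}) applied componentwise in the codomain. That scalar lemma already handles $\phi:\mathbb{R}^m\to\mathbb{R}$ acting on a vector-valued $F\in\mathbb{D}_{1,1}^m$; here the only extension is from codomain $\mathbb{R}$ to codomain $\mathbb{R}^k$, which is essentially bookkeeping.

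First, I would fix $l\in\{1,\dots,k\}$ and observe that the hypothesis (\ref{asp.chain.mal}) for that particular $l$ is precisely the integrability requirement of the scalar lemma applied to $\phi^l\in C^1(\mathbb{R}^m;\mathbb{R})$ and $F\in\mathbb{D}_{1,1}^m$: namely $\mathbb{E}[|\phi^l(F)|]<\infty$ together with the formal chain-rule candidate $\sum_{i=1}^m(\partial\phi^l/\partial x_i)(F)\,DF_i$ lying in $L^1(\Omega;(L^2[0,T])^n)$. Invoking the scalar lemma therefore yields $\phi^l(F)\in\mathbb{D}_{1,1}$ together with
\[
    D_t(\phi^l(F)) \;=\; \sum_{i=1}^m \frac{\partial\phi^l}{\partial x_i}(F)\,D_t F_i, \qquad t\in[0,T].
\]

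Second, I would package the $k$ scalar identities into a single vector identity. Stacking components gives $\phi(F)=(\phi^1(F),\dots,\phi^k(F))^\top\in\mathbb{D}_{1,1}^k$, and the right-hand sides above are exactly the $k$ rows of the matrix product $D\phi(F)\,DF$, where $D\phi(F)\in\mathbb{R}^{k\times m}$ is the Jacobian and $DF\in\mathbb{R}^{m\times n}$ is the Malliavin derivative matrix. This yields the claimed identity $D(\phi(F))=D\phi(F)\,DF$ in $\mathbb{R}^{k\times n}$.

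There is essentially no analytical obstacle: all the real work (approximating by smooth random variables, using closability of $D$, and controlling the limit via the integrability furnished by (\ref{asp.chain.mal})) is packaged inside the scalar lemma. The only point requiring care is aligning (\ref{asp.chain.mal}) coordinate by coordinate with the hypothesis of the scalar lemma and then recognising the componentwise identities as the rows of the matrix product. If a self-contained argument were preferred, it would proceed along standard lines: approximate each $\phi^l$ uniformly on compacts by $C^\infty_b$ functions via a cutoff, approximate $F$ by smooth random variables in $\mathbb{D}_{1,1}^m$, apply the trivial chain rule on smooth inputs, and pass to the limit using dominated convergence together with the closability of the operator $D$ on $\mathbb{D}_{1,1}$.
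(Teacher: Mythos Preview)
Your proposal is correct and matches the paper's own justification: the paper does not give a detailed proof but simply states that this proposition is a straightforward multidimensional version of Lemma~A.1 in Ocone and Karatzas \cite{OK}, which is precisely your componentwise reduction.
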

\begin{remark}
    Hölder's inequality implies that the condition (\ref{asp.chain.mal}) holds if $F \in \mathbb{D}_{p, 1}^m,\; \phi^l(F) \in L^1,\; \frac{\partial \phi^l}{\partial x_i}(F) \in L^q,\; l = 1, \dots k,\; i = 1, \dots m$ for some $p \in [1, \infty),\; q \in (1, \infty]$ such that $\frac{1}{p} + \frac{1}{q} = 1$. In particular, the condition (\ref{asp.chain.mal}) holds if $F \in \bigcap_{p \geq 1} \mathbb{D}_{p, 1}^m$ and $\phi, D\phi$ are of polynomial growth.  
\end{remark}
Proposition~\ref{diff.under.leb} is Lemma~5.1 in \cite{La}, Proposition~\ref{diff.under.SI} is Proposition~2.3 in \cite{OK}, and Proposition~\ref{diff.sol.SDE} is Proposition~8.2 in \cite{SH}.
\begin{proposition}\label{diff.under.leb}
    Let $u = (u_s)_{s \in [0, T]}$ be a real-valued, continuous, progressively measurable process such that
    \begin{enumerate}
        \item [(i)] $u_s \in \mathbb{D}_{1, 1}$ for every $s \in [0, T]$,
        \item [(ii)] $\sup_{s \in [0, T]} \mathbb{E}\left[|u_s|^q \right] < \infty$ for some $q > 1$, and $ \sup_{s \in [0, T]} \mathbb{E} \left[ \int_0^T |D^j_tu_s|^4 dt  \right] < \infty$ for $j = 1, \dots, n$,
        \item [(iii)] $s \mapsto D_tu(s, \omega)$ is left (or right) continuous for almost every $(t, \omega) \in [0, T] \times \Omega$.
    \end{enumerate}
    Then $\int_0^Tu_s ds \in \mathbb{D}_{1, 1}$ and $D_t \int_0^T u_s ds = \int_t^T D_tu_s ds$.
\end{proposition}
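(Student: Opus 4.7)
The plan is to use a classical Riemann-sum approximation combined with the closedness of the Malliavin derivative operator $D$. For a partition $\pi_N = \{0 = s_0^N < \cdots < s_N^N = T\}$ with mesh $|\pi_N| \to 0$, set $F_N \coloneqq \sum_{k=1}^N u_{s_k^N}(s_k^N - s_{k-1}^N)$ (choosing the endpoint compatible with the sidedness of the continuity in (iii)). By hypothesis (i) and linearity of $D$, each $F_N \in \mathbb{D}_{1,1}$ with $D_t F_N = \sum_{k=1}^N (D_t u_{s_k^N})(s_k^N - s_{k-1}^N)$. It then suffices to establish (a) $F_N \to \int_0^T u_s\,ds$ in $L^1(\Omega)$ and (b) $DF_N \to \int_0^T D u_s\,ds$ in $L^1(\Omega; L^2([0,T]; \mathbb{R}^n))$; the closedness of $D$ will then deliver $\int_0^T u_s\,ds \in \mathbb{D}_{1,1}$ together with $D_t \int_0^T u_s\,ds = \int_0^T D_t u_s\,ds$.

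For (a), pathwise continuity of $u$ gives $F_N(\omega) \to \int_0^T u_s(\omega)\,ds$ for each $\omega$, and the uniform bound $\sup_s \mathbb{E}[|u_s|^q] < \infty$ with $q>1$ from (ii) supplies the uniform integrability (via Vitali's convergence theorem) that upgrades this to $L^1(\Omega)$-convergence. For (b), introducing the discretization $\tau_N(s) \coloneqq s_k^N$ for $s \in (s_{k-1}^N, s_k^N]$, the identity
\[
D_t F_N - \int_0^T D_t u_s\,ds = \int_0^T \left( D_t u_{\tau_N(s)} - D_t u_s \right) ds
\]
together with Cauchy--Schwarz in $s$, followed by integrating in $t$ and taking expectation, yields
\[
\mathbb{E}\!\left[ \int_0^T \left| D_t F_N - \int_0^T D_t u_s\,ds \right|^2 dt \right] \leq T \int_0^T \int_0^T \mathbb{E}\!\left[ |D_t u_{\tau_N(s)} - D_t u_s|^2 \right] dt\, ds.
\]
Hypothesis (iii) makes the inner integrand tend to zero pointwise in $s$ for almost every $(t,\omega)$, while the uniform $L^4$-in-$t$ bound of (ii) provides a square-integrable dominating function; two applications of dominated convergence (first in $\omega$ to handle $\mathbb{E}[|D_t u_{\tau_N(s)} - D_t u_s|^2]$, then in $(t,s)$) drive the right-hand side to zero. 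This gives $L^2(\Omega; L^2([0,T]; \mathbb{R}^n))$-convergence, hence the weaker $L^1$-convergence required.

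With (a) and (b) in hand, closedness of $D$ identifies $\int_0^T u_s\,ds$ as an element of $\mathbb{D}_{1,1}$ with $D_t \int_0^T u_s\,ds = \int_0^T D_t u_s\,ds$. Since $u$ is progressively measurable, $u_s$ is $\mathcal{F}_s$-measurable, and the standard non-anticipativity fact $D_t u_s = 0$ for $t > s$ truncates this integral to $\int_t^T D_t u_s\,ds$, as claimed. The principal technical obstacle is step (b): the fourth-moment hypothesis in (ii) is precisely the ingredient that upgrades the one-parameter pointwise continuity supplied by (iii) into the uniform-in-$(t,\omega)$ integrability needed to apply dominated convergence and thereby close the argument against the closedness of $D$.
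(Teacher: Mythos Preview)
The paper does not supply its own proof of this proposition; it simply attributes the result to Lakner \cite{La}, Lemma~5.1. Your Riemann-sum-plus-closedness argument is the standard route and is essentially correct.

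One imprecision worth flagging: in step (b) you write that the $L^4$ bound in (ii) ``provides a square-integrable dominating function'' and then invoke dominated convergence twice. Strictly speaking there is no dominating function independent of $N$, since $|D_t u_{\tau_N(s)}|^2$ still depends on $N$. What the $L^4$ bound actually buys you is that the family $\{(t,s,\omega)\mapsto D_t u_{\tau_N(s)}(\omega)\}_N$ is bounded in $L^4([0,T]^2\times\Omega)$, hence uniformly $L^2$-integrable on that product space; combined with the pointwise convergence from (iii), Vitali's theorem (which you already use in (a)) then gives the desired $L^2$-convergence in one stroke. Replacing ``dominated convergence'' by ``Vitali'' in (b) would make the argument clean. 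The final reduction from $\int_0^T$ to $\int_t^T$ via $D_t u_s=0$ for $t>s$ is standard (e.g.\ Nualart, Corollary~1.2.1) and correctly invoked.
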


\begin{proposition}\label{diff.under.SI}
    Let $u = (u^1, \dots, u^n)^\top$ be an $\mathbb{R}^n$-valued progressively measurable process such that
    \begin{itemize}
        \item [(i)] $u_s \in \mathbb{D}_{1, 1}^n$ for every $s \in [0, T]$,
        \item [(ii)] $[0, T] \times \Omega \ni (x, \omega) \mapsto Du(s, \omega) \in (L^2[0, T])^{n^2}$ admits a progressively measurable version,
        \item [(iii)] \begin{align*}
                    |||u|||_{1, 1} &\coloneqq \mathbb{E} \left[ \left( \int_0^T |u_s|^2 ds \right)^{\frac{1}{2}} \right] + \mathbb{E} \left[ \left\{ \int_0^T \int_0^T \left| D_t u_s \right|^2 dt ds \right\}^{\frac{1}{2}} \right] < \infty,
        \end{align*}
        where $| \cdot|$ denotes Euclidean norm on $\mathbb{R}^{n \times n}$.
    \end{itemize}
    Then $\int_0^T u^\top_s dW_s \in \mathbb{D}_{1, 1}$ and 
    \[
        D_t \int_0^T u^\top_s dW_s = u_t^\top + \left( \int_0^T (D_tu_s)^\top dW_s \right)^\top .
    \]
\end{proposition}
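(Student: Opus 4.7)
The plan is to proceed by the standard density argument for Malliavin calculus: establish the formula first on a dense class of simple adapted processes, where the stochastic integral is a finite sum and everything reduces to the ordinary chain rule, and then extend to general $u$ satisfying (i)--(iii) via the closability of the operator $D$ on $\mathbb{D}_{1,1}$.

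First I would work on the class $\mathcal{E}$ of simple adapted processes $u_s = \sum_{j=0}^{N-1} F_j \mathbf{1}_{(t_j, t_{j+1}]}(s)$ indexed by partitions $0 = t_0 < \cdots < t_N = T$, with each $F_j = (F_j^1, \dots, F_j^n)^\top$ a vector of smooth random variables from $\mathscr{S}$ that is $\mathcal{F}_{t_j}$-measurable. For such $u$ the stochastic integral collapses to the finite sum $\int_0^T u_s^\top dW_s = \sum_j F_j^\top (W_{t_{j+1}} - W_{t_j})$. Using the elementary relation $D_t^i W_s^k = \delta_{ik} \mathbf{1}_{[0, s]}(t)$ together with Proposition~\ref{chain.rule} applied componentwise, a direct computation yields
\[
D_t^i \int_0^T u_s^\top dW_s = \sum_{j,k} (D_t^i F_j^k)(W_{t_{j+1}}^k - W_{t_j}^k) + \sum_{j, k} F_j^k \delta_{ik} \mathbf{1}_{(t_j, t_{j+1}]}(t).
\]
The second term equals $u_t^i$, and since $F_j$ is $\mathcal{F}_{t_j}$-measurable we have $D_t^i F_j^k = 0$ whenever $t > t_j$, so the first term is exactly the $i$-th component of $\int_0^T (D_t u_s)^\top dW_s$. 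This establishes the identity on $\mathcal{E}$.

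Next I would show that $\mathcal{E}$ is dense under $|||\cdot|||_{1,1}$ in the class of progressively measurable processes satisfying (i)--(iii). This is the two-step classical construction: first approximate $u$ by step processes using $\mathcal{F}_{t_j}$-conditional expectations on a refining partition of $[0, T]$, then smooth each coordinate of $F_j$ by a cylindrical approximation from $\mathscr{S}$. Given $u^{(k)} \in \mathcal{E}$ with $|||u - u^{(k)}|||_{1,1} \to 0$, the Burkholder--Davis--Gundy inequality bounds $\mathbb{E}\bigl[\bigl|\int_0^T (u_s - u_s^{(k)})^\top dW_s\bigr|\bigr]$ by a constant times $\mathbb{E}\bigl[\bigl(\int_0^T |u_s - u_s^{(k)}|^2 ds\bigr)^{1/2}\bigr]$, and after integrating in $t$ it similarly controls the convergence of $\int_0^T (D_t u_s^{(k)})^\top dW_s$ as an element of $L^1(\Omega; (L^2[0,T])^n)$. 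Closability of $D$ on $\mathbb{D}_{1,1}$ then gives $\int_0^T u_s^\top dW_s \in \mathbb{D}_{1,1}$ with the desired representation.

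The main obstacle is ensuring that the approximation scheme is compatible with the Malliavin derivative, which hinges on the commutation identity $D_t \mathbb{E}[F \mid \mathcal{F}_{t_j}] = \mathbb{E}[D_t F \mid \mathcal{F}_{t_j}] \mathbf{1}_{[0, t_j]}(t)$ for $F \in \mathbb{D}_{1,1}$. This property is itself nontrivial and is essentially dual to the adaptedness of $D_t F$ on $[0, t_j]$; once it is in hand, convergence of both $u^{(k)}$ and $D u^{(k)}$ in the required norms follows from the standard approximation of $L^1$-integrable progressively measurable processes by simple ones, and the proof goes through.
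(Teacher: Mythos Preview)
The paper does not supply its own proof of this proposition: it is quoted without argument from the literature, with the attribution ``Proposition~\ref{diff.under.SI} is Proposition~2.3 in \cite{OK}.'' Your outline is the standard density argument that underlies that reference (and, more generally, the corresponding result in \cite{N}): verify the identity on simple adapted processes with smooth coefficients, then pass to the limit using closability of $D$ on $\mathbb{D}_{1,1}$ together with a BDG estimate to control the stochastic integrals in $L^1$. The sketch is correct and matches the cited approach; the commutation relation $D_t\,\mathbb{E}[F\mid\mathcal{F}_{t_j}] = \mathbb{E}[D_tF\mid\mathcal{F}_{t_j}]\mathbf{1}_{[0,t_j]}(t)$ you flag as the key technical point is indeed what makes the conditional-expectation discretization compatible with $D$, and it is available at the $\mathbb{D}_{1,1}$ level.
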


\begin{proposition}\label{diff.sol.SDE}
    For $d \in \mathbb{N}$, we consider the $d$-dimensional SDE
    \begin{equation} \label{Append.SDE}
        dX_t = \mu(X_t) dt + \sigma(X_t) dW_t, \quad X_0 = x \in \mathbb{R}^d,
    \end{equation}
    where $\mu = (\mu_1, \dots \mu_d)^\top :\mathbb{R}^d \to \mathbb{R}^d, \; \sigma = (\sigma_{i,j})_{\substack{1 \leq i \leq d \\ 1 \leq j \leq n}}:\mathbb{R}^d \to \mathbb{R}^{d \times n}$ are continuously differentiable and satisfy
    \[
        \sup_{x \in \mathbb{R}^d} \left( \left| \frac{\partial\mu_i}{\partial x_k}(x) \right| + \left| \frac{\partial \sigma_{i, j}}{\partial x_k} (x) \right| \right) < \infty
    \]
    for $i, k = 1, \dots, d, \; j = 1, \dots, n.$
    Then (\ref{Append.SDE}) has a unique strong solution $X = (X^1, \dots, X^d)^\top$ which satisfies the following:
    \begin{itemize}
        \item [(i)] $X^k_s \in \bigcap_{p \geq 1} \mathbb{D}_{p, 1}, \quad k = 1, \dots d, \; s \in [0, T]$;
        \item [(ii)] $D_t X_s$ satisfies
    \[
        D_t X_s = \sigma(X_t) + \int_t^s D\mu(X_u) D_t X_u du + \sum_{j = 1}^n \int_t^s D \sigma_{\cdot j}(X_u) D_t X_u dW^j_u
    \]
    for $t \in [0, s]$ and $D_t X_s = 0$ for $t \in (s, T]$, where $D\mu = \left( \frac{\partial \mu_i}{\partial x_j} \right)_{\substack{1 \leq i \leq d \\ 1 \leq j \leq d}}, \; D\sigma_{\cdot j} = \left( \frac{\sigma_{i,j}}{\partial x_l} \right)_{\substack{ 1 \leq i \leq d \\ 1 \leq l \leq d}}$;
    \item [(iii)] for $j = 1, \dots, n, \; p \in [1, \infty),$
        \[
            \sup_{r \in [0, T] } \mathbb{E} \left[ \sup_{s \in [0, T]} \left| D^j_r X^k_s \right|^p \right] < \infty ;
        \]

    \item [(iv)] $D_t X_s = \nabla_x X_s (\nabla_x X_t)^{-1} \sigma(X_t)$ for $t \in [0, s]$, where $\nabla_x X$ is an $\mathbb{R}^{d \times d}$-valued stochastic process satisfying
    \[
        \nabla_x X_s = I + \int_0^s D\mu(X_u) \nabla_x X_u du + \sum_{j = 1}^n \int_t^s D \sigma_{\cdot j}(X_u) \nabla_x X_u dW^j_u
    \]
    for $s \in [0, T]$ and $I \in \mathbb{R}^{d \times d}$ is the identity matrix. 
    \end{itemize}
\end{proposition}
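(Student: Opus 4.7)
The plan is to obtain both existence of a strong solution and its Malliavin regularity simultaneously through Picard iteration. Because the partial derivatives of $\mu$ and $\sigma$ are bounded, the coefficients are globally Lipschitz and of linear growth, so standard SDE theory yields a unique strong solution $X$ with moments of all orders, uniformly on $[0, T]$. I would then set $X^{(0)}_t \coloneqq x$ and define the Picard iterates $X^{(k+1)}_t \coloneqq x + \int_0^t \mu(X^{(k)}_s)\,ds + \int_0^t \sigma(X^{(k)}_s)\,dW_s$, proving by induction on $k$ that $X^{(k)}_t \in \bigcap_{p \geq 1} \mathbb{D}_{p,1}^d$ for all $t \in [0,T]$. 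The induction step combines the chain rule (Proposition~\ref{chain.rule}) with the commutation of $D$ with Lebesgue and It\^o integrals (Propositions~\ref{diff.under.leb} and \ref{diff.under.SI}); the integrability hypotheses of those results are verified using boundedness of $D\mu$ and $D\sigma_{\cdot j}$ and the polynomial moment bounds on $X^{(k)}$.

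Next, I would establish uniform-in-$k$ $L^p$ moment bounds and Cauchy estimates for both $X^{(k)}$ and $DX^{(k)}$. The identity
\begin{align*}
D^j_r X^{(k+1)}_t &= \sigma_{\cdot j}(X^{(k)}_r)\mathbf{1}_{\{r \leq t\}} + \int_r^t D\mu(X^{(k)}_s)\, D^j_r X^{(k)}_s \,ds \\
&\qquad + \sum_{l = 1}^n \int_r^t D\sigma_{\cdot l}(X^{(k)}_s)\, D^j_r X^{(k)}_s \,dW^l_s
\end{align*}
together with Burkholder--Davis--Gundy, boundedness of $D\mu, D\sigma_{\cdot l}$, and Gronwall's inequality gives $\sup_k \sup_{r, t \in [0,T]} \mathbb{E}[|D^j_r X^{(k)}_t|^p] < \infty$ and a corresponding Cauchy property in $L^p(\Omega; L^2([0,T]))$. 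Passing to the limit $k \to \infty$ using closedness of $D$ on the Banach space $\mathbb{D}_{p,1}$ yields $X_t \in \bigcap_{p \geq 1} \mathbb{D}_{p,1}^d$ together with the linear SDE in (ii). Statement~(iii) then follows directly by applying the same BDG--Gronwall argument to this limiting linear SDE, noting that the initial datum $\sigma(X_r)$ has $p$-th moments bounded uniformly in $r$ thanks to the linear growth of $\sigma$ and the uniform moment bounds on $X$.

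For (iv), the first variation process $\nabla_x X$ is defined by a linear SDE with bounded coefficients and is $\mathbb{P}$-a.s.\ invertible by the standard stochastic flow argument (its inverse satisfies an explicit linear SDE). For fixed $r \in [0, T]$, I would apply It\^o's product rule to $Y_s \coloneqq \nabla_x X_s (\nabla_x X_r)^{-1} \sigma(X_r)$ and verify that $(Y_s)_{s \in [r, T]}$ satisfies the same linear SDE as $(D_r X_s)_{s \in [r, T]}$ with matching initial value $Y_r = \sigma(X_r) = D_r X_r$; pathwise uniqueness for this linear SDE, again by BDG and Gronwall, identifies the two processes. The main technical obstacle throughout is controlling the Malliavin regularity \emph{uniformly} along the Picard iteration; the feature that makes this tractable is the assumed boundedness of $D\mu$ and $D\sigma_{\cdot j}$, which ensures that the linear SDEs governing the Malliavin derivatives have coefficients whose operator norms do not grow with the iteration index.
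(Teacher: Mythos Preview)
Your outline is correct and follows the standard Picard-iteration route (as in Nualart's monograph or the cited Sass--Haussmann paper). Note, however, that the paper does not actually prove this proposition: it is quoted verbatim as Proposition~8.2 of \cite{SH} and stated without argument in the appendix. So there is nothing to compare at the level of proof strategy; your sketch simply supplies what the paper delegates to a reference, and the argument you give is essentially the one found there.
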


\section{Appendix: Option pricing theory with stochastic factor models in complete markets}\label{opt.pricing}
In this appendix, we recall well-known results on pricing and hedging problems in a complete market with a stochastic factor process (\ref{gen.stoch.fac.model}).
Under the equivalent martingale measure $\mathbb{Q}$, the dynamics of risky assets $S$ and the stochastic factor process $Y$ are denoted by
\begin{align*}
    dS_t &= \diag(S_t) \left( \mathbf{1}r(Y_t) dt + \sigma(Y_t) dW^{\mathbb{Q}}_t \right),&  S_0 &= s_0 \in \mathbb{R}^n_{++},\\
    dY_t &= \Tilde{b}(Y_t) dt + a(Y_t) dW^{\mathbb{Q}}_t,&  Y_0 &= y \in \mathbb{R}^m,
\end{align*}
where $\Tilde{b}(y) \coloneqq b(y) - a(y) \theta(y)$.
Let $\mathcal{L}$ be a generator of $(S, Y)$ under $\mathbb{Q}$; that is, for $f:[0, T] \times \mathbb{R}^n_{++} \times \mathbb{R}^m \to \mathbb{R}$, $\mathcal{L}f:[0, T] \times \mathbb{R}^n_{++} \times \mathbb{R}^m \to \mathbb{R}$ is defined by
    \[
        \mathcal{L}f \coloneqq D_s f^\top r(y) s + D_y f^\top \Tilde{b}(y) + \frac{1}{2} \mathrm{Tr} \left[ \left\{ \Sigma \Sigma^\top \right\} (s, y) D^2 f \right], \quad (t, s, y) \in [0, T] \times \mathbb{R}^n_{++} \times \mathbb{R}^m, 
    \]
where $\Sigma(s, y) \coloneqq \begin{pmatrix}
    \diag(s) \sigma(y) \\ a(y)
\end{pmatrix} \in \mathbb{R}^{(n+m) \times n}$.

\begin{theorem}\label{comp.stoc.hedg}
Let $\Phi : \mathbb{R}^n_{++} \to \mathbb{R}$ and $u :[0, T] \times \mathbb{R}^n_{++} \times \mathbb{R}^m \to \mathbb{R}$ be a solution of the Cauchy problem
\begin{align*}
    \partial_t u + \mathcal{L}u - r(y) u &= 0,&  &on \quad \left[0, T\right) \times \mathbb{R}^n_{++} \times \mathbb{R}^m, \\
    u(T, s, y) &= \Phi(s),& &on \quad \mathbb{R}^n_{++} \times \mathbb{R}^m.
\end{align*}
\begin{enumerate}
    \item [(i)] Let $x \coloneqq u(0, s_0, y)$ and $\pi = (\pi_t)_{t \in [0, T]}$ be a portfolio process satisfying
    \[
        \pi_t^\top \sigma(Y_t) = D_s u^\top \diag (S_t) \sigma(Y_t) + D_y u^\top a(Y_t),
    \]
    where $D_su$ and $D_y u$ are evaluated at $(t, S_t, Y_t)$. Then $x$ is the replicating cost and $\pi$ is the hedging portfolio. Indeed, 
    \[
        X^{x, \pi}_t = u(t, S_t, Y_t), \qquad t \in [0, T].
    \]
    In particular,
    \[
        X^{x, \pi}_T = \Phi(S_T).
    \]
    \item [(ii)] Moreover, if $(H_t X^{x, \pi}_t)_{t \in [0, T]}$ is a $\mathbb{P}$-martingale, then
    \begin{equation}\label{pricing.formula}
        u(t, s, y) = \mathbb{E}^{\mathbb{Q}}\left[ \left. \exp \left( - \int_t^T r(Y_v) dv \right) \Phi \left( S_T \right) \right| S_t = s, Y_t = y \right].
    \end{equation}
    In particular, the replicating cost $x$ is given by
    \begin{equation}\label{replicating.cost}
        x \coloneqq u(0, s_0, y) = \mathbb{E}^{\mathbb{Q}}\left[\exp \left( - \int_0^T r(Y_v) dv \right) \Phi \left( S_T \right) \right]. 
    \end{equation}
\end{enumerate} 
\end{theorem}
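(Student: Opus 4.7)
The plan is to run the standard PDE-martingale argument in two steps: for (i), match $u(t,S_t,Y_t)$ to a self-financing wealth process by Ito's formula plus the PDE; for (ii), use the $\mathbb{P}$-martingale property of $(H_t X^{x,\pi}_t)$ together with a change of measure from $\mathbb{P}$ to $\mathbb{Q}$ to arrive at the risk-neutral pricing formula.

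For part (i), I would apply Ito's formula to $u(t,S_t,Y_t)$ under $\mathbb{Q}$. Because the drift generator of $(S,Y)$ under $\mathbb{Q}$ is exactly $\mathcal{L}$, this yields
\[
du(t,S_t,Y_t)=(\partial_t u+\mathcal{L}u)(t,S_t,Y_t)\,dt+\bigl(D_s u^\top\diag(S_t)\sigma(Y_t)+D_y u^\top a(Y_t)\bigr)\,dW^{\mathbb{Q}}_t.
\]
Invoking the PDE $\partial_t u+\mathcal{L}u=r(y)u$ collapses the finite-variation term to $r(Y_t)u(t,S_t,Y_t)\,dt$, and the definition of $\pi$ rewrites the diffusion coefficient as $\pi_t^\top\sigma(Y_t)$. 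Hence $u(t,S_t,Y_t)$ satisfies
\[
du(t,S_t,Y_t)=r(Y_t)u(t,S_t,Y_t)\,dt+\pi_t^\top\sigma(Y_t)\,dW^{\mathbb{Q}}_t.
\]
Rewriting the wealth SDE for $X^{x,\pi}$ under $\mathbb{Q}$ using $\mu(Y_t)-r(Y_t)\mathbf{1}=\sigma(Y_t)\theta(Y_t)$ and $dW^{\mathbb{Q}}_t=dW_t+\theta(Y_t)\,dt$, I obtain the same linear SDE driven by $W^{\mathbb{Q}}$ with initial value $x=u(0,s_0,y)$. Uniqueness for this linear SDE then forces $X^{x,\pi}_t=u(t,S_t,Y_t)$, and the terminal condition $u(T,\cdot,\cdot)=\Phi$ delivers $X^{x,\pi}_T=\Phi(S_T)$.

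For part (ii), the additional hypothesis that $(H_t X^{x,\pi}_t)$ is a $\mathbb{P}$-martingale yields
\[
H_t X^{x,\pi}_t=\mathbb{E}\bigl[H_T X^{x,\pi}_T\bigm|\mathcal{F}_t\bigr]=\mathbb{E}\bigl[H_T\Phi(S_T)\bigm|\mathcal{F}_t\bigr].
\]
Writing $H_T=\beta_T Z_T$ and applying the Bayes formula with $d\mathbb{Q}/d\mathbb{P}|_{\mathcal{F}_T}=Z_T$ gives
$\mathbb{E}[H_T\Phi(S_T)\mid\mathcal{F}_t]=Z_t\,\mathbb{E}^{\mathbb{Q}}[\beta_T\Phi(S_T)\mid\mathcal{F}_t]$, so dividing by $H_t=\beta_t Z_t$ produces
\[
X^{x,\pi}_t=\mathbb{E}^{\mathbb{Q}}\!\left[\left.\exp\!\left(-\int_t^T r(Y_v)\,dv\right)\Phi(S_T)\,\right|\mathcal{F}_t\right].
\]
Combining with the identification $X^{x,\pi}_t=u(t,S_t,Y_t)$ from (i) and the Markov property of $(S,Y)$ under $\mathbb{Q}$ produces the pricing formula \eqref{pricing.formula}; specializing to $t=0$ gives \eqref{replicating.cost}.

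The only delicate point is the step in (i) where Ito's formula is applied: strictly speaking, this requires $u\in C^{1,2}$, which is built into the phrase ``$u$ is a solution of the Cauchy problem''. There is no genuine obstacle in the martingale step of (ii) either, since the hypothesis is assumed rather than proved; the proposition is used in the paper in concrete settings (e.g.\ Proposition~\ref{myopic.density.gamma1}) where this martingale property is verified separately by a Novikov/linear-growth argument as in \cite[Section~6.2]{LS}.
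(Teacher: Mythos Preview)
Your proposal is correct and follows essentially the same route as the paper: Ito's formula plus the PDE to identify $u(t,S_t,Y_t)$ with the self-financing wealth $X^{x,\pi}_t$, then the assumed $\mathbb{P}$-martingale property of $H_tX^{x,\pi}_t$ combined with the Markov property to obtain the risk-neutral formula. The only cosmetic difference is that the paper applies Ito to the \emph{discounted} process $\frac{1}{S^0_t}u(t,S_t,Y_t)$ (so that the PDE kills the drift outright and the identification with $\frac{1}{S^0_t}X^{x,\pi}_t$ is immediate), whereas you apply Ito to $u$ directly and then invoke uniqueness of the linear SDE; both arguments are standard and equivalent.
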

\begin{proof}
    By the Ito formula, 
    \begin{align*}
        \frac{1}{S^0_t} u(t, S_t, Y_t) &= u(0, s_0, y) + \int_0^{t} \frac{1}{S^0_v} \left\{ \left( \partial_t u + \mathcal{L}u - r u \right)(v, S_v, Y_v) dv + \left( D_su^\top \diag (S_v) \sigma(Y_v) + D_y u^\top a(Y_v) \right) dW^{\mathbb{Q}}_v \right\} \\
        &=  u(0, s_0, y) + \int_0^{t} \frac{1}{S^0_v} \left( D_su^\top \diag (S_v) \sigma(Y_v) + D_y u^\top a(Y_v) \right) dW^{\mathbb{Q}}_v \\
        &=  x + \int_0^{t} \frac{1}{S^0_v} \pi_v^\top \sigma(Y_v)  dW^{\mathbb{Q}}_v \\
        &= \frac{X^{x, \pi}_t}{S^0_t},
    \end{align*}
    which proves (i). If $(H_t X^{\pi}_t)_{t \in [0, T]}$ is a $\mathbb{P}$-martingale, then
    \begin{align*}
        u(t, S_t, Y_t) &= X^{x, \pi}_t \\
                       &= \frac{1}{H_t} \mathbb{E}_t \left[ H_T X^{x, \pi}_T \right] \\
                       &= \mathbb{E}^{\mathbb{Q}}_t\left[  \exp \left( - \int_t^T r(Y_v) dv \right) \Phi \left( S_T \right) \right].
    \end{align*}
    By the Markov property of $(S, Y)$, we obtain (\ref{pricing.formula}) and (\ref{replicating.cost}).
\end{proof}

\section{Appendix: Relationship between stochastic control methods and martingale methods}
In this appendix, we verify that the terminal wealth obtained via the dynamic programming approach matches the optimal terminal wealth derived from martingale duality methods. We prove the result in complete stochastic factor models given by (\ref{gen.stoch.fac.model}).
    \begin{theorem}\label{control.equiv.dual}
        Let $V:[0, T] \times (0, \infty) \times \mathbb{R}^m$ be a classical solution to the HJB equation. Let $\hat{\pi}:[0, T] \times (0, \infty) \times \mathbb{R}^m$ be a candidate for optimal investment strategies. Furthermore, we assume that 
        \[
            \lim_{t \nearrow T} V_x (t, x, y) = U^\prime(x), \quad (x, y) \in (0, \infty) \times \mathbb{R}^m
        \] and $(H_t X^{x, \hat{\pi}}_t )_{t \in [0, T]}$ is a martingale for some $x > 0$.
        Then
        \[
            X^{x, \hat{\pi}}_T = I(\hat{\lambda} H_T),
        \]
        and $\hat{\pi}$ is the optimal feedback strategy. 
    \end{theorem}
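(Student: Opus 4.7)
The strategy is to show that $X_T^{x, \hat\pi}$ and the terminal wealth $\xi := I(\hat\lambda H_T)$ produced by the martingale method in Theorem~\ref{Optimal Terminal Wealth} coincide by verifying that both attain the supremum of $\mathbb{E}[U(\cdot)]$ subject to the budget constraint $\mathbb{E}[H_T \cdot] = x$ and then invoking strict concavity of $U$ to get uniqueness of the maximizer.

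I would first use the hypothesis that $(H_t X_t^{x, \hat\pi})_{t \in [0,T]}$ is a $\mathbb{P}$-martingale together with $H_0 = 1$ to obtain $\mathbb{E}[H_T X_T^{x, \hat\pi}] = x$, so $X_T^{x, \hat\pi}$ is feasible for the dual problem whose unique maximizer, by Theorem~\ref{Optimal Terminal Wealth}, is $\xi$. Strict concavity of $U$ combined with $U'(\xi) = \hat\lambda H_T$ gives the tangent inequality
\[
U(X_T^{x, \hat\pi}) \leq U(\xi) + \hat\lambda H_T\bigl(X_T^{x, \hat\pi} - \xi\bigr),
\]
with equality if and only if $X_T^{x, \hat\pi} = \xi$. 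Taking expectations and using $\mathbb{E}[H_T X_T^{x, \hat\pi}] = \mathbb{E}[H_T \xi] = x$ yields the upper bound $\mathbb{E}[U(X_T^{x, \hat\pi})] \leq \mathbb{E}[U(\xi)]$.

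The matching lower bound $\mathbb{E}[U(X_T^{x, \hat\pi})] \geq \mathbb{E}[U(\xi)]$ would come from a verification argument based on the HJB. Applying It\^{o}'s formula to $V(t, X_t^{x, \hat\pi}, Y_t)$ and using that $\hat\pi$ realizes the pointwise supremum inside the HJB makes the drift vanish, so this process is a local martingale; for any other admissible $\pi$, the analogous process is a local supermartingale. The derivative condition $\lim_{t \nearrow T} V_x(t, x, y) = U'(x)$ integrates to $V(T, x, y) = U(x) + c(y)$ for some function $c$ depending only on $y$, and since the extraneous $c(Y_T)$ term does not depend on the strategy, it cancels when one compares $V(0, x, y) = \mathbb{E}[V(T, X_T^{x, \hat\pi}, Y_T)]$ with the inequality $V(0, x, y) \geq \mathbb{E}[V(T, X_T^{x, \pi^*}, Y_T)]$, applied to the strategy $\pi^*$ produced by Theorem~\ref{Optimal Terminal Wealth} that replicates $\xi$. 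This yields $\mathbb{E}[U(X_T^{x, \hat\pi})] \geq \mathbb{E}[U(\xi)]$. Combined with the upper bound, $\mathbb{E}[U(X_T^{x, \hat\pi})] = \mathbb{E}[U(\xi)]$, so equality in the tangent inequality must hold almost surely, forcing $X_T^{x, \hat\pi} = I(\hat\lambda H_T)$ and making $\hat\pi$ optimal.

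The main obstacle will be this verification step: upgrading the local (super)martingale $V(\cdot, X_\cdot^{x,\pi}, Y_\cdot)$ to a true (super)martingale and justifying the limit $t \nearrow T$ given only the derivative terminal condition $\lim_{t \nearrow T} V_x = U'$. The standard remedy is to localize along a sequence of stopping times $\tau_n \nearrow T$ and invoke uniform integrability coming from admissibility ($X_t^{x, \pi} \geq 0$), the martingale hypothesis on $H_\cdot X_\cdot^{x, \hat\pi}$, and Assumption~\ref{martingale.method.asp}(ii), which controls $U^{\pm}(X_T)$ through the growth bound on $I$.
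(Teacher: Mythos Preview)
Your approach is genuinely different from the paper's, and the obstacle you flag at the end is exactly where it stalls. The paper never touches a verification argument on $V$ itself and never compares $\hat\pi$ to another strategy. Instead it differentiates the HJB equation in $x$ to obtain a PDE for $V_x$; along the candidate trajectory $(X_t^{x,\hat\pi},Y_t)$ this differentiated equation reads $V_{tx}+r(y)V_x+\mathcal{L}^{\hat\pi}V_x=0$, so It\^o's formula applied to $V_x(t,X_t^{x,\hat\pi},Y_t)$ gives the \emph{linear} SDE
\[
dV_x = V_x\bigl(-r(Y_t)\,dt-\theta(Y_t)^\top dW_t\bigr),
\]
which is precisely the SDE for $H_t$. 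Hence $V_x(t,X_t^{x,\hat\pi},Y_t)=V_x(0,x,y)\,H_t$, and letting $t\nearrow T$ with the hypothesis $V_x(T,\cdot,\cdot)=U'$ yields $U'(X_T^{x,\hat\pi})=V_x(0,x,y)\,H_T$, i.e.\ $X_T^{x,\hat\pi}=I(V_x(0,x,y)H_T)$. The martingale hypothesis then identifies $V_x(0,x,y)$ with $\hat\lambda$ via the budget constraint.

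The advantage of the paper's route is that the costate process $V_x$ satisfies a \emph{linear} equation along the optimal path, so no supermartingale comparison, no localization, and no uniform-integrability argument is needed: one reads off the identity $V_x=\hat\lambda H$ directly. Your route would require upgrading $V(\cdot,X_\cdot^{x,\pi^*},Y_\cdot)$ to a true supermartingale for the dual-method strategy $\pi^*$, and nothing in the stated hypotheses (which only give the martingale property for $\hat\pi$, not for $\pi^*$) supplies the needed integrability; the growth bound in Assumption~\ref{martingale.method.asp}(ii) controls $I$, not $V$. So while your outline is conceptually sound as an alternative, the verification step is not merely a technicality here---the paper's differentiation trick is precisely what makes the argument go through under the minimal assumptions stated.
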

    \begin{proof}
        We rewrite the HJB equation as 
        \begin{equation}\label{HJB.pi}
            V_t + r(y) x V_x + b(y)^\top D_y V + \frac{1}{2} \mathrm{Tr} \left( a a^\top (y) D^2_{yy}V \right) - \frac{V_{xx}}{2} | \sigma^\top (y) \hat{\pi}(t, x, y)|^2 = 0.
        \end{equation}

        Differentiating (\ref{HJB.pi}) with respect to $x$, we have
        \begin{equation}
            \begin{multlined}
                  V_{tx} + r(y)  V_x + r(y) x V_{xx} + b(y)^\top D_y V_x + \frac{1}{2} \mathrm{Tr} \left( a a^\top (y) D^2_{yy}V_x \right) \\ + \hat{\pi}^\top \sigma(\theta V_{xx}  + a^\top D_y V_{xx}) + \frac{V_{xxx}}{2} | \sigma^\top (y) \hat{\pi}(t, x, y)|^2 = 0.
            \end{multlined}
        \end{equation}
        That is, 
        \[
            V_{tx} + r(y) V_x + \mathcal{L}^{\hat{\pi}} V_x = 0,
        \]
        where $\mathcal{L}^{\pi}$ is a controlled generator: 
        \[
            \mathcal{L}^{\pi} f \coloneqq r(y) x f_x + b(y)^\top D_y f + \frac{1}{2} \mathrm{Tr} \left( a a^\top(y) D_{yy}^2 f \right) + \pi^\top \sigma(y) \left( \theta(y) f_x + a^\top(y) D_y f_x\right) + \frac{1}{2} | \sigma^\top(y) \pi|^2 f_{xx} = 0.
        \]
        By the Ito formula, 
        \begin{align*}
            dV_x(t, X^{x, \hat{\pi}}_t, Y_t) &= (V_{xt} + \mathcal{L}^{\hat{\pi}_t} V_x) dt + \left\{ V_{xx} \hat{\pi}^\top_t \sigma(Y_t) + D_y V_x^\top a (Y_t) \right\} dW_t \\
            &= V_x \left( -r(Y_t) dt - \theta(Y_t)^\top dW_t \right).
        \end{align*}
        Therefore,
        \[
            V_x(t, X^{x, \hat{\pi}}_t, Y_t) = V_x (0, x, y)H_t.
        \]
        Let $t \nearrow T$, then
        \[
            U^\prime (X^{x, \hat{\pi}}_T) = V_x (0, x, y) H_T,
        \]
        which leads to 
        \[
            X^{x, \hat{\pi}}_T = I \left( V_x (0, x, y) H_T \right).
        \]
        Because $H_t X^{x, \hat{\pi}}_t$ is a martingale, 
        \[
            x = \mathbb{E}\left[ H_T X^{x, \hat{\pi}}_T \right] = \mathbb{E}\left[ H_T I \left( V_x (0, x, y) H_T \right) \right]
        \]
        holds and $\hat{\lambda} =  V_x (0, x, y)$ by the uniqueness of $\hat{\lambda}$. We have completed the proof.
    \end{proof}

\section{Appendix: Matrix Riccati equation}
We recall some facts stated in \cite{Ku} about matrix Riccati differential equations.
Let $T > 0,\; A \in \mathbb{R}^{n \times n},\; B \in \mathbb{R}^{n \times m},\; C \in \mathbb{R}^{m \times n}$.
We consider an $n \times n$ matrix solution $P = P( \cdot \;; T):[0, T] \to \mathbb{R}^{n \times n}$ of the Riccati differential equation
\begin{equation}\label{MRDE}
    \begin{aligned}
        &\dot{P}(t) - P(t) B B^\top P(t) + A^\top P(t) + P(t)A + C^\top C = 0, \quad t \in [0, T], \\
        &P(T) =0.
    \end{aligned}
\end{equation}
First, we state the existence and uniqueness of (\ref{MRDE}).
\begin{theorem}\label{exist.uniq.RDE}
\begin{enumerate}
    \item [(i)] There exists a nonnegative unique solution $P:[0, T] \to \mathbb{S}^n_+ $ to (\ref{MRDE}).
    \item [(ii)] For any $0 \leq t_1 \leq t_2 \leq T$,
    \[
        0 \leq P(t_2) \leq P(t_1)
    \]
    holds.
\end{enumerate}
\end{theorem}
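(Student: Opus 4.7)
The plan is to reduce the backward terminal-value problem (\ref{MRDE}) to a forward initial-value problem and then exploit the deterministic linear-quadratic (LQ) control interpretation, which simultaneously delivers nonnegativity, the a priori bound needed for global existence, uniqueness, and monotonicity.

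Setting $Q(\tau) \coloneqq P(T - \tau)$ converts (\ref{MRDE}) into
\[
    \dot{Q}(\tau) = A^\top Q(\tau) + Q(\tau) A - Q(\tau) BB^\top Q(\tau) + C^\top C, \qquad Q(0) = 0.
\]
The right-hand side is polynomial, hence locally Lipschitz, in $Q$, so Picard--Lindelöf yields a unique $C^1$ solution on some maximal forward interval $[0, \tau^*)$ with $\tau^* \in (0, \infty]$. On this interval I would identify $x^\top Q(\tau) x$ with the LQ value function
\[
    V(\tau, x) \coloneqq \inf_{u \in L^2([0, \tau];\mathbb{R}^n)} \int_0^\tau \left( |Cy_s|^2 + |u_s|^2 \right) ds, \qquad \dot{y}_s = A y_s + B u_s, \; y_0 = x.
\]
A completion-of-squares calculation applied to the derivative of $s \mapsto y_s^\top Q(\tau - s) y_s$ gives, for any admissible $u$,
\[
    \int_0^\tau \left( |Cy_s|^2 + |u_s|^2 \right) ds = x^\top Q(\tau) x + \int_0^\tau \left| u_s + B^\top Q(\tau - s) y_s \right|^2 ds,
\]
so $V(\tau, x) = x^\top Q(\tau) x$ with optimal feedback $u^*_s = -B^\top Q(\tau - s) y_s$, which is well defined because $Q$ is continuous on $[0, \tau]$ for every $\tau < \tau^*$. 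Nonnegativity of the integrand forces $Q(\tau) \in \mathbb{S}^n_+$, and the suboptimal choice $u \equiv 0$ yields the a priori bound
\[
    0 \leq x^\top Q(\tau) x \leq \int_0^\tau |Ce^{As}x|^2 ds,
\]
which is finite for every finite $\tau$. Hence $Q$ cannot blow up in finite time, so $\tau^* = \infty$, and reversing time gives existence, uniqueness, and nonnegativity of $P$ on $[0, T]$, proving part (i).

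For part (ii), given $0 \leq t_1 \leq t_2 \leq T$, set $\tau_1 \coloneqq T - t_2 \leq T - t_1 \eqqcolon \tau_2$. Any admissible $u$ on $[0, \tau_2]$ restricts to an admissible control on $[0, \tau_1]$ whose cost is bounded above by the original cost (the discarded contribution on $(\tau_1, \tau_2]$ is nonnegative). Taking infima yields $V(\tau_1, \cdot) \leq V(\tau_2, \cdot)$, i.e.\ $Q(\tau_1) \leq Q(\tau_2)$, which in terms of $P$ reads $0 \leq P(t_2) \leq P(t_1)$.

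The main obstacle is the verification identity $V(\tau, x) = x^\top Q(\tau) x$ on $[0, \tau^*)$: one must justify the completion-of-squares calculation for arbitrary $u \in L^2$ (not only smooth controls), and check that the feedback $u^*$ produces a unique absolutely continuous trajectory with $u^* \in L^2$ on $[0, \tau]$. Both points follow from the continuity, hence boundedness on compacts, of $Q$ on $[0, \tau^*)$, together with standard linear-ODE estimates and an approximation argument. An alternative route would avoid the LQ interpretation and compare (\ref{MRDE}) directly with the linearization obtained by deleting $-QBB^\top Q$ via a Gronwall-type inequality, but the LQ approach is conceptually cleanest and handles nonnegativity and monotonicity in the same breath.
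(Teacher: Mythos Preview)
Your argument is correct and follows the standard LQ-control route to the Riccati theory: local existence by Picard--Lindel\"of, identification of $x^\top Q(\tau)x$ with the finite-horizon LQ value function via completion of squares, the $u\equiv 0$ a~priori bound to rule out blow-up, and the trivial monotonicity of the value function in the horizon for part~(ii). One cosmetic slip: the control $u$ should take values in $\mathbb{R}^m$, not $\mathbb{R}^n$, since $B\in\mathbb{R}^{n\times m}$.

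As for comparison with the paper: there is nothing to compare. The theorem sits in an appendix that opens with ``We recall some facts stated in \cite{Ku}\ldots'', and no proof is given; the result is simply quoted from Kucera's survey on matrix Riccati equations. Your proposal therefore supplies a self-contained proof where the paper only cites one. The LQ approach you chose is the cleanest for exactly the reasons you mention---nonnegativity, global existence, and monotonicity all fall out of the variational interpretation at once---and it is essentially the argument one finds in standard references such as Kucera or Sun--Yong (the latter is also in the paper's bibliography).
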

Next, we state the asymptotic behaviors of the solution.
\begin{definition}
\begin{enumerate}
    \item [(i)] The pair $(A, B)$ is said to be stabilizable if a matrix $L \in \mathbb{R}^{m \times n}$ exists such that $A + BL$ is stable (i.e., all its eigenvalues have negative real parts).
    \item [(ii)] The pair $(C, A)$ is said to be detectable if a matrix $F \in \mathbb{R}^{n \times m}$ exists such that $FC + A$ is stable.
\end{enumerate}
\end{definition}

\begin{theorem}\label{Asmp.RDE}
    \begin{enumerate}
        \item [(i)] If $(A, B)$ is stabilizable, then there exists a finite nonnegative matrix $P_\infty \in \mathbb{S}^n_+$ such that
        \[
            \lim_{T \nearrow \infty} P(t;T) = P_\infty 
        \] 
        for any $t$, and $P_\infty$ satisfies the algebraic Riccati equation
        \[
             - P_\infty B B^\top P_\infty + A^\top P_\infty + P_\infty A + C^\top C = 0.
        \]
        \item [(ii)] If $(A, B)$ is stabilizable and $(C, A)$ is detectable, then $A - B B^\top P_\infty$ is stable.
    \end{enumerate}
\end{theorem}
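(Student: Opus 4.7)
The plan is to exploit the time-reversal $\bar{P}(\tau) \coloneqq P(T-\tau;T)$, which transforms (\ref{MRDE}) into the autonomous initial-value problem
\[
\dot{\bar{P}}(\tau) = F(\bar{P}(\tau)), \quad F(P) \coloneqq -PBB^\top P + A^\top P + PA + C^\top C, \quad \bar{P}(0)=0,
\]
so that $P(t;T) = \bar{P}(T-t)$ depends on $(t,T)$ only through $T-t$, and the convergence as $T \nearrow \infty$ with $t$ fixed becomes $\tau \nearrow \infty$. By Theorem~\ref{exist.uniq.RDE}(ii), $\bar{P}$ is nondecreasing in the partial order on $\mathbb{S}^n_+$, so the existence of a finite limit reduces to a uniform upper bound.

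For that bound I would invoke the standard LQR identification $x_0^\top \bar{P}(\tau) x_0 = \inf_u \int_0^\tau (|Cx(s)|^2 + |u(s)|^2)\, ds$ over $\dot{x}=Ax+Bu$, $x(0)=x_0$. Stabilizability provides some $L$ with $A+BL$ stable, and choosing the suboptimal feedback $u(s) = Lx(s)$ yields a trajectory $x(s) = e^{(A+BL)s} x_0$ that decays exponentially; the total cost on $[0,\infty)$ is therefore finite and equals $x_0^\top Q x_0$, where $Q$ solves $(A+BL)^\top Q + Q(A+BL) + C^\top C + L^\top L = 0$. This gives $\bar{P}(\tau) \leq Q$ uniformly in $\tau$, so $P_\infty \coloneqq \lim_{\tau \nearrow \infty} \bar{P}(\tau) \in \mathbb{S}^n_+$ exists. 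Since $F$ is continuous, $\dot{\bar{P}}(\tau) = F(\bar{P}(\tau)) \to F(P_\infty)$; if $F(P_\infty) \neq 0$, then $\bar{P}(\tau)$ would grow linearly, contradicting the bound, and hence $F(P_\infty)=0$, i.e., $P_\infty$ satisfies the algebraic Riccati equation.

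For part (ii), I would rewrite the algebraic Riccati equation as a Lyapunov equation for the closed-loop matrix $A_{\mathrm{cl}} \coloneqq A - BB^\top P_\infty$, namely
\[
A_{\mathrm{cl}}^\top P_\infty + P_\infty A_{\mathrm{cl}} = -\bigl(P_\infty BB^\top P_\infty + C^\top C\bigr),
\]
and argue by contradiction. Suppose $A_{\mathrm{cl}} v = \lambda v$ with $v \neq 0$ and $\mathrm{Re}\,\lambda \geq 0$; sandwiching the Lyapunov equation between $v^*$ and $v$ gives $2\,\mathrm{Re}(\lambda)\, v^* P_\infty v = -|B^\top P_\infty v|^2 - |Cv|^2 \leq 0$. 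Because $P_\infty \geq 0$ forces $v^* P_\infty v \geq 0$, both sides vanish, so $Cv = 0$ and $B^\top P_\infty v = 0$; in particular $Av = A_{\mathrm{cl}} v = \lambda v$. Choosing $F$ with $FC+A$ stable by detectability, one obtains $(FC+A)v = Av = \lambda v$, contradicting the assumption that all eigenvalues of $FC+A$ have strictly negative real parts.

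The step I expect to require the most care is justifying the passage to the limit in the Riccati ODE and the precise LQR identification for the finite-horizon problem; both are classical but rely on the dynamic programming principle for linear-quadratic control together with a continuity argument for $F$ applied to a monotone, bounded family in $\mathbb{S}^n_+$. All remaining manipulations reduce to elementary linear algebra and Lyapunov's stability theorem.
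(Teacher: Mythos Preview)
The paper does not give its own proof of this theorem; it is stated in the appendix as a classical fact recalled from Ku\v{c}era's review \cite{Ku}, with no argument supplied. Your outline is the standard textbook proof (monotone bounded convergence for the time-reversed Riccati flow via the LQR value-function interpretation, then the Lyapunov/eigenvector argument for closed-loop stability under detectability), and it is correct. The only places that deserve a sentence of extra care are exactly the two you flag: the matrix monotone convergence (handled by noting that $x^\top\bar P(\tau)x$ is scalar monotone bounded for every $x$, hence all entries converge by polarization) and the passage $F(P_\infty)=0$, which is cleanest by integrating $\bar P(\tau_n+1)-\bar P(\tau_n)=\int_{\tau_n}^{\tau_n+1}F(\bar P(s))\,ds\to F(P_\infty)$ while the left side tends to $0$.
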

The following theorem is Lemma~4.4 in \cite{KN}.
\begin{theorem}\label{Asmp.LODE}
    Let $K:[0, \infty) \to \mathbb{R}^{n \times n}, \; g:[0, \infty) \to \mathbb{R}^n$, and $f:[0, \infty) \to \mathbb{R}^n$ be a solution of the ordinary differential equation
    \[
        \dot{f}(t) = K(t) f(t) + g(t), \quad f(0) = 0.
    \]
    Suppose that $K(t) \rightarrow \Tilde{K} \in \mathbb{R}^{n \times n}$ and $ g(t) \rightarrow \Tilde{g} \in \mathbb{R}^n$ as $t \nearrow \infty$ and $\Tilde{K}$ is a stable matrix. Then there exists $\Tilde{f} = \lim_{t \nearrow \infty}f(t)$ that satisfies the equation 
    \[
        \Tilde{K} \Tilde{f} + \Tilde{g} = 0.
    \]
\end{theorem}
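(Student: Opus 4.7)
Since $\Tilde{K}$ is stable, every eigenvalue of $\Tilde{K}$ has strictly negative real part; in particular $\Tilde{K}$ is invertible, so I would begin by defining the candidate limit
\[
    \Tilde{f} \coloneqq - \Tilde{K}^{-1} \Tilde{g},
\]
which automatically satisfies $\Tilde{K}\Tilde{f} + \Tilde{g} = 0$. Writing $e(t) \coloneqq f(t) - \Tilde{f}$ and rearranging, $e$ solves
\[
    \dot{e}(t) = K(t) e(t) + h(t), \qquad h(t) \coloneqq \bigl( K(t) - \Tilde{K} \bigr) \Tilde{f} + \bigl( g(t) - \Tilde{g} \bigr),
\]
with $h(t) \to 0$ as $t \nearrow \infty$ by assumption. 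The whole problem is thus reduced to showing $e(t) \to 0$.

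To control $e$, the plan is to build a quadratic Lyapunov function adapted to the limiting dynamics. By the Lyapunov theorem for stable matrices, there exists $P \in \mathbb{S}^n_{++}$ such that $\Tilde{K}^\top P + P \Tilde{K} = -I_n$. Since $K(t) \to \Tilde{K}$, there exists $t_0 \geq 0$ such that for all $t \geq t_0$,
\[
    K(t)^\top P + P K(t) \leq -\tfrac{1}{2} I_n.
\]
Setting $V(t) \coloneqq \langle P e(t), e(t) \rangle$ and using $\epsilon$-Hölder's inequality yields, for $t \geq t_0$,
\[
    \dot{V}(t) = \langle (K(t)^\top P + P K(t)) e(t), e(t) \rangle + 2 \langle P e(t), h(t) \rangle \leq -\tfrac{1}{4} |e(t)|^2 + 4 |P|^2 |h(t)|^2 \leq -c V(t) + C |h(t)|^2,
\]
where $c \coloneqq (4 \lambda_{\max}(P))^{-1}$ and $C \coloneqq 4 |P|^2$ are positive constants independent of $t$.

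The final step is a standard variation-of-constants estimate: Gronwall's inequality gives
\[
    V(t) \leq V(t_0) e^{-c(t - t_0)} + C \int_{t_0}^t e^{-c(t - s)} |h(s)|^2 ds, \qquad t \geq t_0.
\]
Given $\varepsilon > 0$, choose $t_1 \geq t_0$ so that $|h(s)|^2 < \varepsilon$ for $s \geq t_1$; the integral over $[t_1, t]$ is then bounded by $\varepsilon / c$, and the remaining integral over $[t_0, t_1]$ is multiplied by the decaying factor $e^{-c(t - t_1)}$. Hence $\limsup_{t \to \infty} V(t) \leq C \varepsilon / c$, and since $\varepsilon$ is arbitrary, $V(t) \to 0$. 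Because $P$ is positive definite, this forces $e(t) \to 0$, i.e., $f(t) \to \Tilde{f}$, which completes the argument.

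The only delicate point, and the one I would treat most carefully, is the transition from the limiting Lyapunov identity $\Tilde{K}^\top P + P \Tilde{K} = -I_n$ to a genuine negative-definite bound on $K(t)^\top P + P K(t)$ valid \emph{uniformly} for all large $t$; this is what makes the time-varying forcing $h(t)$ manageable and is precisely where the hypothesis that $\Tilde{K}$ is stable (not merely non-explosive) is used. Everything else is routine linear-ODE manipulation.
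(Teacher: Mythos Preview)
Your proof is correct and complete: the reduction to the error equation $\dot e = K(t)e + h(t)$ with $h(t)\to 0$, the Lyapunov matrix $P$ solving $\Tilde{K}^\top P + P\Tilde{K} = -I_n$, the uniform semidefiniteness of $K(t)^\top P + PK(t)$ for large $t$, and the final Gronwall/splitting argument are all standard and carried out cleanly. The paper itself does not prove this theorem; it merely quotes it as Lemma~4.4 of Kuroda and Nagai \cite{KN}, so there is no ``paper's proof'' to compare against. Your argument is exactly the kind of self-contained Lyapunov proof one would expect, and in fact mirrors the Lyapunov-function technique the paper uses elsewhere (e.g., in the proof of Proposition~\ref{Y.myopic.upb}).
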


\subsection*{Acknowledgement}
This work was supported by JST SPRING, Grant Number JPMJSP2138.

\clearpage

\end{document}